\newcommand{\RR}{\mathbb R}
\newcommand{\sdef}{\TT_{(ij)}}
\newcommand{\sgrd}{\TT_{(ij)k}}
\newcommand{\Sgrd}{\mathcal{S}\mathrm{grd}}
\newcommand{\Elas}{\mathbb{E}\mathrm{la}_{6}}
\newcommand{\Elaq}{\mathbb{E}\mathrm{la}_{4}}
\newcommand{\Elac}{\mathbb{E}\mathrm{la}_{5}}
\newcommand{\TT}{\mathbb{T}}
\newcommand{\GG}{\mathbb{G}}
\newcommand{\VV}{\mathbb{V}}
\newcommand{\WW}{\mathbb{W}}
\newcommand{\II}{\mathbb{I}}
\newcommand{\HH}{\mathbb{H}}
\newcommand{\KK}{\mathbb{K}}   
\newcommand{\ZZ}{\mathrm{Z}}
\newcommand{\DD}{\mathrm{D}}
\newcommand{\sym}{\mathbb{S}}   
\newcommand{\triv}{\mathds{1}}
\newcommand{\ode}{\mathrm{O}(2)}
\newcommand{\sod}{\mathrm{SO}(2)}
\newcommand{\GS}{\mathfrak{S}}              
\newcommand{\SymC}{\mathfrak{I}}    
\newcommand{\V}[1]{\underline{\mathrm{#1}}}
\newcommand{\ve}{\V{e}}                   
\newcommand{\vn}{\V{n}}   
\newcommand{\ds}{\td{s}}
\newcommand{\dt}{\td{t}}
\newcommand{\vQ}{\mathbf{Q}} 
\newcommand{\vP}{\mathbf{P}}
\newcommand{\vA}{\mathbf{A}} 
\newcommand{\vB}{\mathbf{B}} 
\newcommand{\vT}{\mathbf{T}} 
\newcommand{\vK}{\mathbf{K}} 
\newcommand{\vv}{\mathbf{v}} 
\newcommand{\ww}{\mathbf{w}} 
\newcommand{\vV}{\mathbf{V}} 
\newcommand{\vI}{\mathbf{I}} 
\newcommand{\vi}{\mathbf{i}}
\newcommand{\vp}{\bm{\pi}}	
\newcommand{\vR}{\mathbf{r}} 
\newcommand{\vG}{\mathbf{g}} 
\newcommand*{\trisim}{%
  \mathrel{\vcenter{\offinterlineskip
  \hbox{$\sim$}\vskip-.35ex\hbox{$\sim$}\vskip-.35ex\hbox{$\sim$}}}}
\newcommand{\sT}[1]{\underset{\trisim}{\mathrm{#1}}}
\newcommand{\cT}[1]{\underset{\approxeq}{\mathrm{#1}}}
\newcommand{\qT}[1]{\underset{\approx}{\mathrm{#1}}}
\newcommand{\tT}[1]{\underset{\simeq}{\mathrm{#1}}}
\newcommand{\dT}[1]{\underset{\sim}{\mathrm{#1}}}
\newcommand{\td}[1]{\underset{\vspace{1cm}\sim}{\mathrm{#1}}}
\newcommand{\dd}{\td{d}}
\newcommand{\otimesbar}{\; \underline{\overline{\otimes}} \;}
\newcommand{\otimesup}{\; \overline{\otimes} \;}
\newcommand{\otimesdwn}{\; \underline{\otimes} \;}
\newcommand{\id}{\dT{i}^{(2)}}
\newcommand{\Jd}{\dT{\upepsilon}}
\newtheorem{thm}{Theorem}[section]
\newtheorem{cor}[thm]{Corollary}
\newtheorem{lem}[thm]{Lemma}
\newtheorem{prop}[thm]{Proposition}
\newtheorem{rem}[thm]{Remark}
\numberwithin{equation}{section}   
\newcommand{\beq}{\begin{equation}}
\newcommand{\eeq}{\end{equation}}
\newcommand{\ben}{\begin{equation*}}
\newcommand{\een}{\end{equation*}}
\newcommand{\ba}{\begin{eqnarray}}
\newcommand{\ea}{\end{eqnarray}}
\newcommand{\ban}{\begin{eqnarray*}}
\newcommand{\ean}{\end{eqnarray*}}
\newcommand{\red}[1]{{\textcolor{black}{#1}}}
\DeclareMathOperator{\tr}{tr}
\newcommand{\udot}{\mathsurround0pt\ldotp}
\newcommand{\onedot}{$\mathsurround0pt\ldotp$}
\newcommand{\dc}{
  \mathbin{\vcenter{\baselineskip.57ex
    \hbox{\onedot}\hbox{\onedot}}%
  }}%
\newcommand{\tc}{
  \mathbin{\vcenter{\baselineskip.57ex
    \hbox{\onedot}\hbox{\onedot}\hbox{\onedot}}%
  }}%
\newcommand{\qc}{
  \mathbin{\vcenter{\baselineskip.57ex
    \hbox{\onedot}\hbox{\onedot}} \vcenter{\baselineskip.57ex
    \hbox{\onedot}\hbox{\onedot}}%
  }}%
\newcommand{\rdots}[1]{\overset{(#1)}{\cdot}}
\newcommand{\rayp}[1]{\overset{(#1)}{\star}}
\begin{document}


\title[Explicit harmonic structure of bidimensional \red{linear} strain-gradient elasticity]{Explicit harmonic structure of bidimensional \red{linear} strain-gradient elasticity}%

\author{N. Auffray}
\address[Nicolas Auffray]{Univ Gustave Eiffel, CNRS, MSME UMR 8208, F-77454 Marne-la-Vall\'ee, France}
\email{nicolas.auffray@univ-eiffel.fr}

\author{H. Abdoul-Anziz}
\address[Houssam Abdoul-Anziz]{Univ Gustave Eiffel, CNRS, MSME UMR 8208, F-77454 Marne-la-Vall\'ee, France}
\email{houssam.abdoulanziz@u-pem.fr}

\author{B. Desmorat}
\address[Boris Desmorat]{Sorbonne Université, CNRS, Institut Jean Le Rond d'Alembert, UMR 7190, 75005 Paris, France}
\email{boris.desmorat@sorbonne-universite.fr}

\subjclass[2010]{74B05; 74E10}
\keywords{Anisotropy; Symmetry classes; Higher-order tensors; Harmonic decomposition}%

\date{\today}%

\begin{abstract}
In the perspective of homogenization theory, strain-gradient elasticity is a strategy to describe the overall behaviour of materials with coarse mesostructure. 
In this approach, the effect of the mesostructure is described by the use of three elasticity tensors whose orders vary from 4 to 6. 
Higher-order constitutive tensors make it possible to describe rich physical phenomena. However, these objects have intricate algebraic structures that prevent us from having a clear picture of their modeling capabilities.
The harmonic decomposition is a fundamental tool to investigate the anisotropic properties of constitutive tensor spaces. For higher-order tensors (i.e. tensors of order $n\geq$3), \red{their determination} is generally a difficult task. In this paper, a novel procedure to obtain this decomposition is introduced. This method, which we have called the \textit{Clebsch-Gordan Harmonic Algorithm}, allows one to obtain \emph{explicit} harmonic decompositions satisfying  good properties such as orthogonality and uniqueness. The elements of the decomposition also have a precise geometrical meaning simplifying their physical interpretation. This new algorithm is here developed in the specific case of 2D space and applied to Mindlin's Strain-Gradient Elasticity. We provide, for the first time, the harmonic decompositions of the fifth- and sixth-order elasticity tensors involved in this constitutive law. The \textit{Clebsch-Gordan Harmonic Algorithm} is not restricted to strain-gradient elasticity and may find interesting applications in different fields of mechanics which involve higher-order tensors.
\end{abstract}

\maketitle





 
\section*{Introduction}

\subsection*{Strain-Gradient Elasticity}

Continuum mechanics is a well-established theory which constitutes the classical framework to study strain and stress in solid materials. 
The physics contained in the theory is versatile enough to describe the inner state of a planet subjected to the force of gravity, as well as to meet the daily needs of a mechanical engineer. These successes make classical continuum mechanics a fundamental theory of modern physics. Nevertheless, despite all its successes, situations have arisen in which its classical formulation reaches its limits and fails to correctly describe the physics at work: mechanics of nano-structures \cite{Zhu10,Ere16,CFB16}, elastic waves in periodic continua \cite{Vin86,RA15}, capillarity and surface tension phenomena \cite{Cas61,Min65,Sep89,For20}, etc. Despite their diversities, these examples have in common that they show dependencies to characteristic lengths, a property that cannot be taken into account within the classical formulation of continuum mechanics.\\

Since the pioneering work of the Cosserat brothers in the early years of the 20th century \cite{Cos09}, many scientists have proposed enriched continuum theories to extend the capabilities of the standard theory. With the contributions of Koiter \cite{Koi64}, Toupin \cite{Tou62}, Eringen \cite{Eri68}, Mindlin \cite{Min64,Min65,ME68} and many others, the 60' was probably the most fertile period of this project \cite{Mau11}. At that time, the community of theoretical mechanics developed many models that are still relevant today. Leaving aside non-local aspects, the classical theory can be enriched either by including additional degrees of freedom (DOF) or  by including higher-gradients of the DOF in the mechanical energy. All these models suffer the drawbacks: 1) to require too many material parameters to be of practical uses and 2) due to the intrinsic complexity of the equations, the content of the physics described by these models is difficult to grasp. At that time, except for some applications in physics to describe the dispersion of elastic waves in crystals \cite{Vin86}, or to describe the mechanics of liquid crystals \cite{Eri78}, these models were mostly unused.\\

The development of numerical homogenization methods that allow the coefficients required by the extended models to be related to a known mesostructure has changed the situation \cite{For98,BG10,TJA+12}. There is now a renewed interest in generalized continuum theories \cite{VLM07,AAE10,BG14b,BF20}.  This renewed interest has also been supported by the emergence of additive manufacturing techniques that permit specific mesostructures to be fabricated (almost) on demand \cite{WCC+15,ABA+17,HLZ17}. Recently theoretical approaches have been developed to design architectured materials in order to maximise the non-standard effects predicted by generalized mechanics \cite{BIH+20}, and for the first time the associated materials have been produced and tested \cite{PSM+18}.\\

In statics, to have an important contribution in the overall behaviour of strain-gradient effects, the classical elasticity needs to be almost degenerated \cite{AS18,JS20}. 
 This aspect has been widely studied in pantographic structures \cite{ASI03,ISS+19}, and is also encountered in pentamode metamaterials \cite{MC95,KBS12} which also possess quasi-degenerated deformation modes.  
In dynamics, the contribution of  higher-order effects is easier to highlight and to control \cite{AA11,BG14b,BCG18}. In some recent contributions \cite{RPA18,RA19}, it has been shown that for honeycomb materials, higher-order effects make a significant contribution as soon as the wavelength is about 10 times the size of unit cell size. This effect has been used in \cite{RA19} to bend the trajectory of an elastic wave around a circular hole. The control of this effect in more general situations can find interesting engineering applications \cite{BUA+17,RRA18}.\\

The design of architectured materials in which these higher-order effects are maximised, or at the contrary inhibited, is at the present time an open and challenging problem in mechanical sciences. An approach to the optimal design of strain-gradient architectured materials is to use topological optimisation algorithms \cite{DYL+18,RIB+19,AGP19,DA19}. For isotropic continua the material optimisation process is formulated  by  expressing the design functionals  as functions of the isotropic parameters of the different constitutive tensors\cite{BS13}. For anisotropic materials the previous approach cannot be extended without a few precautions. A path to this rigorous extension is through the use of tensor invariants \cite{RIB+19,Ran19}.

\subsection*{Harmonic decomposition}

To optimize the material independently of its spatial orientation, design functionals should be expressed as functions of tensor invariants rather than in terms of tensor components. The associated mathematical theory is the \emph{invariant theory}, which aims at determining the minimal number of tensorial functions that are invariant with respect to a given group of transformations. Here, since a material is left invariant by orthogonal transformations (rotations and \red{improper transformations}), the group of transformations will be $\mathrm{O}(d)$  in a $d$-dimensional space. If the mathematical theory is clear in any dimension, its practical and effective application strongly depend on the space dimension. For $d=2$ the situation is rather clear and general results are available \cite{Zhe94,DOA+20}, while the case  $d=3$ presents some serious difficulties preventing general results \cite{BKO94,SB97,OKA17,CCQ+19}.

In both cases, the effective construction of a basis of invariants lies on a first step which is the determination of an explicit irreducible decomposition of the constitutive tensors.
In the case of a symmetric second-order tensor, this decomposition corresponds to the decomposition of a tensor into a spherical and a deviatoric part. This approach can be generalized to tensors of arbitrary order \cite{JCB78,Zhe94,AKO16}. If the formal structure of these decompositions is easy to determine, both in 2D and 3D, when it comes to determining an explicit decomposition formula, things become more difficult:
\begin{itemize}
\item even if the harmonic structure is uniquely defined, the explicit decomposition is, in general, not unique and hence different, albeit isomorphic, constructions are possible \cite{GSS89};
\item without a pertinent choice for the explicit harmonic decomposition, the derived tensor invariants will not have a  clear physical content;
\item the complexity of the computations increases quickly with the tensor order. An algorithmic procedure is, therefore, mandatory to perform this decomposition. 
\end{itemize} 

In the present study, which is devoted to bidimensional strain-gradient elasticity, the constitutive law involves tensors of order ranging from 4 to 6.  If the harmonic decomposition of the fourth-order elasticity tensor is well known \cite{Via97,DA19}, for the fifth- and sixth-order tensors no decompositions are, to the best of the authors' knowledge,  available in the literature. Some procedures to perform this decomposition are available in the literature of continuum mechanics:
\begin{itemize}
\item Spencer's Algorithm \cite{Spe70}:  Spencer's method  consists in first reducing a general tensor into totally symmetric tensors and then decomposing each totally symmetric tensor into totally symmetric traceless tensors. It is, at the present time, the most known and used algorithm \cite{FV96,FV97,LQH11}. However, this approach suffers the following limitations:
\begin{itemize}
\item the treatment of higher-order tensors is quickly intractable;
\item its numerical implementation is not straightforward.
\end{itemize}
If efficient in simple situations, the Spencer's algorithm seems to be of limited interest to treat more complicated problems.
\item Verchery's Method \cite{Ver82,Van07}:  Verchery's approach is based on a complex change of variable, in a sense analogous to the transformation of
Green and Zerna \cite{GZ54}, \red{which can also be rooted to the work of Kolosov and Muskhelishvili \cite{Mus13}}. This method, which was recently re-explored in \cite{DOA+20}, is elegant, but its main limitations are:
\begin{itemize}
\item it does not produce a practical formula for the decomposition;
\item when different harmonic terms of the same order are present, the pairing of their components is not direct;
\item the method is restricted to the 2D cases.
\end{itemize}
\item Zou's Approach \cite{ZZ00,ZZD+01}: The Zou's method exploits a \emph{Clebsch-Gordan} identity to construct an orthogonal harmonic decomposition of a $n$th-order tensor from the  orthogonal harmonic decomposition of a $(n-1)$th-order tensor. This iterative method is powerful to obtain orthogonal harmonic decomposition of higher-order tensors without index symmetry. Its application to higher-order tensors having specific index symmetries is possible, but can be cumbersome.
\end{itemize}
In addition,  as a common limitation, none of the three listed methods provides a clear mechanical content to the harmonic tensors of the resulting decomposition.\\

The objective of the present contribution is to introduce a new method for determining explicit harmonic decompositions \red{that solves the problems identified with the previous methods}. The algorithm we propose, which will be referred to as the \emph{Clebsch-Gordan Harmonic Algorithm}, will be conducted here in a 2D framework, but it can be extended without any conceptual obstruction to the 3D framework.

The main idea of the Clebsch-Gordan approach, and the main difference from all other methods, is first to decompose not the constitutive tensor (e.g.  the fourth-order elasticity tensor for classical elasticity) but the state tensors on which it acts (e.g. the strain/stress second-order tensor). This first decomposition will induce a \emph{block structure} on the constitutive tensor. If the  \emph{elementary blocks} are generally not harmonic, their harmonic structures are very simple, and their decomposition into irreducible tensors  easy to proceed. The combination of these different steps leads to an explicit harmonic decomposition of the constitutive tensor. \red{An example of this situation is provided in Appendix \ref{sec:TypeII}.}
\red{Compared to the other methods}, the main advantages of the proposed approach are:
\begin{enumerate}
\item the procedure is algorithmic \red{and can easily be implemented in a Computer Algebra System};
\item \red{once a decomposition of the state tensor spaces chosen, the resulting decomposition of the constitutive tensors is uniquely defined};
\item the elements of the decomposition are orthogonal to each other and have a clear physical content.
\end{enumerate}
Further, since the space of state tensors is first decomposed, the resulting harmonic decomposition implies a decomposition of the internal energy density. Such a property is valuable to provide a physical content to the higher-order constitutive parameters of the model.\\

\red{Before closing this introduction, it is important to mention that the proposed method is introduced here  in the context of strain-gradient elasticity only because of the authors' interest in this application.  Tensors of orders greater than two are necessary in many areas of mechanics.  They appear, for example, in generalised continuum theories such as in strain-gradient elasticity \cite{Min64} or micromorphic elasticity \cite{Eri68}, but also in higher-order models of the plastic yielding function \cite{Got77}, for the study of the acoustoelastic effect \cite{TB61}, in continuum damage mechanics in  which fabric tensors are involved \cite{Kan84} and, of course, for elastic shells \cite{EP06}. We believe that our approach to harmonic decomposition can find interesting applications in these different fields.}\\

\textbf{This article is organized as follows.} In Section 1, we introduce notations that we will use throughout the text. Section 2 is devoted to the description of the strain-gradient elasticity constitutive law. By the end of this section, the harmonic structure of the model is  introduced and detailed. Section 3 is devoted  to the theoretical aspects of the method. Some general results that will be used all along the paper are introduced. In Section 4 the method is  detailed for the fourth-order elasticity tensor. The purpose of this section is mainly illustrative and aims a recovering the well-known harmonic decomposition of the elasticity tensor. This approach is then extended, in Section 5, to the decomposition of the fifth- and sixth-order elasticity tensors involved in the model. Those results are, we believe, new and not available in the literature.

\section{Notations}

Throughout this paper, the two-dimensional Euclidean space $\mathcal{E}^2$ is equipped with a rectangular Cartesian coordinate system with origin  $\mathrm{O}$ and an orthonormal basis $\mathcal{B}=\{\ve_{1},\ve_{2}\}$. Upon the choice of a reference point $\mathrm{O}$ in $\mathcal{E}^2$ and a given basis $\mathcal{B}$, $\mathcal{E}^2$ will be identified with the $2$-dimensional vector space $\RR^2$. As a consequence, points in $\mathcal{E}^2$  will be designated by their vector positions with respect to $\mathrm{O}$. In the following, $\V{x} = (x_1, x_2) = x_1 \ve_1 + x_2 \ve_2 = x_i \ve_i$, where Einstein summation convention is used, i.e. when an index appears twice in an expression it implies summation of this term over all the values of the index. Below are provided some specific notations and conventions used in this article.

\subsubsection*{Groups:}
\begin{itemize}
  \item $\ode$: the group of invertible transformations of $\RR^2$ satisfying $\vG^{-1} = \vG^{T}$, where $\vG^{-1}$ and $\vG^{T}$ stand for the inverse and the transpose of $\vG$. This group is called the orthogonal group;
    \item $\sod$: the subgroup of $\ode$ of elements with determinant $1$, called the special orthogonal group;
    \item $\GS_{n}$: the group of all permutations on the set $\{1,2,...,n\}$, called the symmetric group;
   \item $\triv$: the  trivial group solely containing the identity. 
\end{itemize}

As a matrix group, $\ode$ is generated by
        \begin{equation*}
          \vR(\theta) =
          \begin{pmatrix}
            \cos \theta & -\sin \theta \\
            \sin \theta & \cos \theta
          \end{pmatrix} \quad \text{with $0 \leq \theta < 2\pi$} \quad \text{and} \quad \vp(\ve_2) =
          \begin{pmatrix}
            1 & 0  \\
            0 & -1
          \end{pmatrix},
        \end{equation*}
where $\vR(\theta) $ is the rotation by an angle $\theta$ and $\vp(\vn)$ is the reflection across the line normal to $\vn$:
\ben
\vp(\vn):= \dT{i}^{(2)}-2\vn\otimes\vn,\quad \|\vn\|=1,
\een
with $\dT{i}^{(2)}$  the second order identity tensor, as defined below.

\red{To describe the symmetry classes of tensors, the following finite subgroups of $\ode$ will be used:
\begin{itemize}
\item $\ZZ_{k} (k \geq 2)$ the cyclic group with $k$ elements, generated by $\vR(2\pi /k)$. For $k=1$, we have $\ZZ_{k}=\triv$;
\item $\DD^{\vn}_{k} (k \geq 2)$ the dihedral group  with $2k$ elements, generated by $\vR(2\pi /k)$ and $\vp(\vn)$. For  $k=1$, the group $\DD^{\vn}_{1}$ will be denoted by $\ZZ^{\pi,\vn}_{2}$. 
\end{itemize}
It has to be noted that, up to conjugacy by an element of $\sod$, any $\DD^{\vn}_{k}$ is conjugate to $\DD^{\ve_{2}}_{k}$, and will simply be denoted by $\DD_{k}$.  Accordingly, $\ZZ^{\pi,\ve_{2}}_{2}$ will simply be denoted $\ZZ^{\pi}_{2}$. }

\subsubsection*{Tensor products:}
\begin{itemize}
  \item $\otimes$ denotes the usual tensor product of two tensors or vector spaces; 
  \item $\overset{n}{\otimes}$ denotes the $n$th power of the tensor product, for example $\overset{n}{\otimes} V: = V \otimes \cdots \otimes V$ ($n$ copies of $V$);
  \item $\otimes^s$ denotes the symmetrized tensor product;
   \item $\otimesup$ and $\otimesdwn$ indicate the following partial twisted tensor products \red{between second-order tensors:}
\begin{equation*}
\left( \dT{a} \otimesup \dT{b} \right)_{ijkl} = a_{ik}b_{jl},\ \left( \dT{a} \otimesdwn \dT{b} \right)_{ijkl} = a_{il}b_{jk};
\end{equation*}
  \item $\otimesbar$ indicates the twisted tensor product defined by
\begin{equation*}
\left( \dT{a} \otimesbar \dT{b} \right)= \frac{1}{2}\left( \dT{a} \otimesup \dT{b}+ \dT{a} \otimesdwn \dT{b}\right).
\end{equation*}
\end{itemize}

\subsubsection*{Tensor spaces:}
\begin{itemize}
 \item  $\GG^n: = \overset{n}{\otimes} \RR^2$ is the space of $n$th-order tensors having no index symmetries;
 \item  $\TT^n$ is a subspace of  $\GG^n$ defined by its index symmetries;
 \item $\sym^n$ is the space of \emph{totally symmetric}\footnote{By totally symmetric we mean symmetric with respect to all permutations of indices.} $n$th-order tensors on $\RR^2$;
 \item $\KK^{n}$ is the space of $n$th-order \emph{harmonic tensors} (i.e. totally symmetric and traceless tensors), with 
\[
\dim(\KK^n)=
\begin{cases}
2 & \text{if $n > 0$}, \\
1 & \text{if $n\in\{0, -1\}$}.
\end{cases}
\]
Among them: 
\begin{itemize}
\item $\KK^0$ is the space of scalars;
\item $\KK^{-1}$ is the space of \emph{pseudo-scalars} (i.e. scalars which change sign under improper transformations).
\end{itemize}
\end{itemize}
We will use tensors of different orders, tensors of order $-1$, $0$, $1$, $2$, $3$, $4$, $5$ and $6$ are denoted by $\beta$, $\alpha$, $\V{v}$, $\dT{a}$, $\tT{A}$, $\qT{B}$, $\cT{C}$, $\sT{D}$, respectively. General tensors (i.e. with no mention of their order) are denoted using bold fonts, as for instance $\vT$.
With respect to $\mathcal{B}$, the components of $\vT\in\TT^n$ are denoted as
\ben
\mathbf{T}=T_{i_{1}\ldots i_{n}}.
\een
The simple, double, triple and fourth-order contractions are written $\udot$, $\dc$, $\tc$, $\qc$, respectively. Generic $k$th-order contraction will be indicated by the notation $\rdots{k}$. In components with respect to $\mathcal{B}$, for general tensors $\vA$ and $\vB$, these notations correspond to
\begin{equation*}
(\vA \udot \vB)_{i_1 \ldots i_n} = A_{i_1 \ldots i_p j} B_{j i_{p+1} \ldots i_n}, \qquad (\vA \dc \vB)_{i_1 \ldots i_n} = A_{i_1 \ldots i_p jk} B_{jk i_{p+1} \ldots i_n},
\end{equation*}
\begin{equation*}
(\vA \tc \vB)_{i_1 \ldots i_n} = A_{i_1 \ldots i_p jkl} B_{jkl i_{p+1} \ldots i_n}, \qquad (\vA \qc \vB)_{i_1 \ldots i_n} = A_{i_1 \ldots i_p jklm} B_{jklm i_{p+1} \ldots i_n}.
\end{equation*}
When needed, index symmetries of both spaces and their elements are expressed as follows: $(..)$ indicates invariance under permutations of the indices in parentheses and $\underline{..}\ \underline{..}$ indicates symmetry with respect to permutations of the underlined blocks. For example, $\dT{a} \in \TT_{(ij)}$ means that $a_{ij} = a_{ji}$.

\subsubsection*{Actions on tensors}
We consider the action of two groups on the space $\TT^n$:
\begin{itemize}
\item  the action of $\ode$,  given by 
\ban
\forall g\in\ode,\  \mathbf{T}^g:=g\overset{n}{\star}\mathbf{T}=g_{i_{1}j_{1}}g_{i_{2}j_{2}}\ldots g_{i_{n}j_{n}}T_{j_{1}j_{2}\ldots j_{n}}\ve_{i_{1}}\otimes\ldots\otimes\ve_{i_{n}}.
\ean
This action is the tensorial action, sometimes also known as the Rayleigh product. When clear from the context,  no mention will be made of the tensor order in the product, in this case the notation $\overset{n}{\star}$ simplifies to $\star$. The set $
\mathrm{G}(\mathbf{T}):=\{g\in\ode\mid g\star\mathbf{T}=\mathbf{T}\}$
is called the \emph{spatial symmetry group} of $\mathbf{T}$. A tensor $\mathbf{T}$ is said to be \emph{isotropic} if $\mathrm{G}(\mathbf{T})=\ode$;
\item  the action of $\GS_{n}$, given by 
\ben
\forall \varsigma\in\GS_{n},\ \mathbf{T}^\varsigma:=\varsigma \ast\mathbf{T}=T_{i_{\varsigma(1)}i_{\varsigma(2)}\ldots i_{\varsigma(n)}}\ve_{i_{1}}\otimes\ldots\otimes\ve_{i_{n}}.
\een
The set $\mathcal{G}(\mathbf{T}):=\{\varsigma\in\GS_{n}\mid \varsigma \ast\mathbf{T}=\mathbf{T}\}$ is called the \emph{index symmetry group} of $\mathbf{T}$. A tensor $\mathbf{T}$ is said to be
\begin{itemize}
\item \emph{generic} if $\mathcal{G}(\mathbf{T})=\triv$, elements of $\GG^n$ verify this property;
\item\emph{totally symmetric} if $\mathcal{G}(\mathbf{T})=\GS_{n}$, elements of $\sym^n$ verify this property.
\end{itemize}
The notation $\mathcal{G}(\mathbb{T}^n)$ will also be used to indicate the \emph{index symmetry group} of a generic element $\mathbf{T}\in\mathbb{T}^n$.
\end{itemize}

\subsubsection*{Special tensors:}
\begin{itemize}
\item $\vI^{\mathbb{V}}$ is the identity tensor on the vector space $\mathbb{V}$. Identity tensors can be expressed using isotropic tensors \cite{Wey46} :
\begin{itemize}
\item $\dT{I}^{\RR^2}$ is the second-order identity tensor on  $\RR^2$. It is defined from $\dT{i}^{(2)}$ which components are given by the Kronecker delta $\delta_{ij}$:
\ben
\dT{I}^{\RR^2}=\dT{i}^{(2)};
\een
\item $\qT{I}^{\sdef} = \dT{i}^{(2)} \otimesbar \dT{i}^{(2)}$ is the fourth-order identity tensor on $\sym^{2}$.
\end{itemize}
\item $\Jd$ denotes the 2D Levi-Civita tensor defined by
\begin{equation*}
\upepsilon_{ij} = 
\begin{cases}
1 & \text{if $ij = 12$}, \\
- 1 & \text{if $ij = 21$}, \\
0 & \text{if $i=j$}.
\end{cases}
\end{equation*}
\end{itemize}

\subsubsection*{Miscellaneous notations:}
\begin{itemize}
\item $\simeq$ denotes  an isomorphism;
\item  $\mathcal{L}(\mathbb{E},\mathbb{F})$ indicates the space of linear maps from $\mathbb{E}$ to $\mathbb{F}$; 
\item  $\mathcal{L}(\mathbb{E})$ indicates the space of linear maps from $\mathbb{E}$ to $\mathbb{E}$; 
\item  $\mathcal{L}^s(\mathbb{E})$ indicates the space of self-adjoint linear maps on $\mathbb{E}$.
\end{itemize}

\subsubsection*{Tensor isotropic basis}\label{s:IsoSys}

Let us introduce $\II^{n}$ the space of $n$th-order isotropic tensors:
 \ben
 \II^{n}:=\{\vT\in\TT^n|\quad \forall g\in\ode,\quad g\star\vT=\vT \}.
 \een
Elements of  $\II^{n}$ are denoted $\vi^{(n)}_{p}$, in which $n$ indicates the order of the tensor and $p$ distinguishes among the different isotropic tensors of the same order.
Every isotropic tensor can be expressed as a linear combination of products of $\dT{i}^{(2)}$ \cite{Wey46}. Products of $\dT{i}^{(2)}$ will be referred to as elementary isotropic tensors and, by definition, the element $\vi^{(2p)}_{1}$ is defined as:
\ben
\vi^{(2p)}_{1}=\dT{i}^{(2)}\otimes\ldots\otimes\dT{i}^{(2)}=\dT{i}^{(2)}\overset{p-1}{\otimes}\dT{i}^{(2)}.
\een
For fourth-order tensors, there exists $3$ elementary isotropic tensors\footnote{with a slight abuse of notation.}:
\beq\label{eq:TIso4}
\qT{i}^{(4)}_1=\delta_{ij}\delta_{kl},\ \qT{i}^{(4)}_2=\delta_{ik}\delta_{jl},\ \qT{i}^{(4)}_3=\delta_{il}\delta_{jk}.
\eeq
\red{For latter use, this allows us to express the identity tensor on second-order symmetric tensors as
\beq\label{eq:IsoS2}
 \qT{I}^{\sdef}=\frac{1}{2}\left(\qT{i}^{(4)}_2+\qT{i}^{(4)}_3\right)
\eeq}
For sixth-order tensors, there exists $15$ elementary isotropic tensors:
\ban\label{eq:TIso6}
\sT{i}^{(6)}_1=\delta_{ij}\delta_{kl}\delta_{mn},&\sT{i}^{(6)}_2=\delta_{ij}\delta_{km}\delta_{ln},&\sT{i}^{(6)}_3=\delta_{ij}\delta_{kn}\delta_{lm}\\
\sT{i}^{(6)}_4=\delta_{ik}\delta_{jl}\delta_{mn},&\sT{i}^{(6)}_5=\delta_{ik}\delta_{jm}\delta_{ln},&\sT{i}^{(6)}_6=\delta_{ik}\delta_{jn}\delta_{lm}\\
\sT{i}^{(6)}_7=\delta_{il}\delta_{jk}\delta_{mn},&\sT{i}^{(6)}_8=\delta_{il}\delta_{jm}\delta_{kn},&\sT{i}^{(6)}_9=\delta_{il}\delta_{jn}\delta_{km}\\
\sT{i}^{(6)}_{10}=\delta_{im}\delta_{jk}\delta_{ln},&\sT{i}^{(6)}_{11}=\delta_{im}\delta_{jn}\delta_{kl},&\sT{i}^{(6)}_{12}=\delta_{im}\delta_{jl}\delta_{kn}\\
\sT{i}^{(6)}_{13}=\delta_{in}\delta_{jk}\delta_{lm},&\sT{i}^{(6)}_{14}=\delta_{in}\delta_{jl}\delta_{km},&\sT{i}^{(6)}_{15}=\delta_{in}\delta_{jm}\delta_{kl}.
\ean
According to the dimension of the physical spaces, these elementary tensors may not be necessarily independent. According to Racah  \cite{Rac33}, in 2D the number of independent fourth-order isotropic tensors is still 3, while for sixth-order isotropic tensors only 10 are independent.


\section{Strain-gradient \red{linear} elasticity} \label{sec:law-flexo}
We introduce in this section the constitutive law of a linear strain-gradient elastic material \cite{Min64,ME68}. First, we present the state and constitutive tensors of the model. And, in the second part of the section, we detail their harmonic structures,  a step mandatory for constructing the harmonic decomposition.

\subsection{Constitutive equations}

\emph{State tensors} describe point-wisely the different physical fields (primal and dual) of the model. A linear constitutive law can be viewed as a linear map between the state tensors that characterize a chosen physical model. It is defined by a set of \emph{constitutive tensors}  which describe  the influence of the matter on these state tensor fields. More precisely, they describe how primal and dual fields are connected by the matter \cite{Ton13}.

In the case of \emph{classical elasticity}, the state tensors are $\dT{\upsigma}$ and $\dT{\upvarepsilon}$. These tensors characterize the local state of stress and of strain, respectively\footnote{In the infinitesimal setting, the strain tensor is defined from the displacement field $\V{u}$ as $\dT{\upvarepsilon}: = \frac{1}{2} (\V{u} \otimes \V{\nabla} + \V{\nabla} \otimes \V{u})$, where $\V{\nabla}$ denotes the nabla differential operator.}, and belong to the same space $\sdef$. The linearity of the model implies the use of a 
fourth-order tensor $\qT{C}$  as a constitutive tensor. This tensor can be viewed as an element of $\mathcal{L}^s(\sdef,\sdef)$.
In summary, for classical elasticity:
\begin{itemize}
\item	 \emph{State tensors}: $\dT{\upsigma}$, $\dT{\upvarepsilon}$;
\item	 \emph{Constitutive tensor}: $\qT{C}$.
\end{itemize}

The linear strain-gradient elasticity model \cite{Min65,ME68} is obtained by extending the set of state tensors by including the strain-gradient tensor $\tT{\upeta}: = \dT{\upvarepsilon} \otimes \V{\nabla}$ and its dual quantity, the hyperstress tensor $\tT{\uptau}$. \red{This construction corresponds to the Mindlin's \emph{Type II} formulation \cite{ME68}}. Those tensors are elements of $\sgrd$.  The constitutive equations of the model define the stress tensor $\dT{\upsigma}$ and the hyperstress tensor $\tT{\uptau}$ as linear functions of the strain tensor $\dT{\upvarepsilon}$ and the strain-gradient tensor $\tT{\upeta}$. This coupled constitutive law requires tensors belonging to the following spaces:
\ben
\qT{C}\in \Elaq\simeq\mathcal{L}^s(\sdef,\sdef),\quad\cT{M}\in \Elac\simeq\mathcal{L}(\sgrd,\sdef),\quad \sT{A}\in \Elas\simeq\mathcal{L}^s(\sgrd,\sgrd).
\een
In this model we have:
\begin{itemize}
\item	 \emph{State tensors}: $\dT{\upsigma}$, $\dT{\upvarepsilon}$, $\tT{\uptau}$, $\tT{\upeta}$;
\item	 \emph{Constitutive tensors}: $\qT{C}$, $\cT{M}$ and $\sT{A}$.
\end{itemize}
To be more specific, the constitutive equations read:
\begin{equation}
\begin{cases}
\dT{\upsigma} & = \qT{C} \dc \dT{\upvarepsilon} + \cT{M} \tc \tT{\upeta} \\ 
\tT{\uptau} & = \cT{M}^{\top} \dc \dT{\upvarepsilon} + \sT{A} \tc \tT{\upeta}
\end{cases}
\end{equation}
where 
\begin{itemize}
     \item $\qT{C} \in \Elaq:= \Set{\qT{T} \in \overset{4}{\otimes} \RR^2 \mid \qT{T} \in \TT_{\underline{(ij)}\ \underline{(kl)}}}$ is the fourth-order elasticity tensor;
   \item $\cT{M} \in \Elac:= \Set{\cT{T} \in \overset{5}{\otimes} \RR^2 \mid \cT{T} \in \TT_{(ij)(kl)m}}$ is the fifth-order elasticity tensor;
   \item $\cT{M}^{\top} \in \Elac^{\top}:= \Set{\cT{T} \in \overset{5}{\otimes} \RR^2 \mid \cT{T} \in \TT_{(ij)k(lm)}}$ is the fifth-order elasticity tensor defined as the transpose of $\cT{M}$ in the following sense $(\cT{M}^\top)_{ijklm} = M_{lmijk}$;
   \item $\sT{A} \in \Elas:= \Set{\sT{T} \in \overset{6}{\otimes} \RR^2 \mid \sT{T} \in \TT_{\underline{(ij)k}\ \underline{(lm)n}}}$ is the sixth-order elasticity tensor.
\end{itemize}
Let's define $\Sgrd$ the tensor space of the strain-gradient constitutive tensors as
 \begin{equation} \label{esp-Flex}
\Sgrd = \ \Elaq \oplus \Elac \oplus \Elas.
\end{equation}
A strain-gradient elastic law is defined by a triplet $\mathcal{E}:=\Big(\qT{C}, \cT{M}, \sT{A} \Big)\in \Sgrd$ .


\subsection{Harmonic structure of constitutive tensors}

When a material is rotated\footnote{Here \emph{rotated} is understood in the broad sense of a full orthogonal transformation.} its physical nature is not affected but, with respect to a fixed reference, \red{its} constitutive tensors are transformed. Since constitutive tensors are usually of order greater than 2,  the way they transform is not simple and their different parts  transform differently: some components are left fixed while others \emph{turn} at different speeds. 
The different mechanisms of transformation of a tensor with respect to an orthogonal transformation are revealed by its harmonic structure\footnote{The explicit harmonic decomposition is just an explicit expression of  this harmonic structure.}. The harmonic decomposition consists in decomposing a finite-dimensional vector space into a direct sum of \emph{$\ode$-irreducible subspaces}. A subspace $\mathbb{K}$ of $\TT^n$ is called $\ode$-irreducible if: 1) it is \emph{$\ode$-invariant} (i.e., $\vG \star \mathrm{T} \in \mathbb{K}$ for all $\vG \in \ode$ and $\mathrm{T} \in \mathbb{K}$); 2) its only invariant subspaces are itself and the null space. It is known that $\ode$-reducible spaces are isomorphic to a direct sum  of harmonic tensor spaces  $\KK^n$ \cite{GSS89,AKO16}.
Such a decomposition is interesting since the $\ode$-action on $\KK^n$ is elementary and given by $\rho_n$ \cite{AKO16}, with for $n \geq 1$:
\begin{equation}\label{eq:TrsIrr}
\rho_n (\vR(\theta)): = 
\begin{pmatrix}
\cos n \theta & - \sin n \theta \\
\sin n \theta & \cos n \theta
\end{pmatrix}, 
\qquad 
\rho_n (\vp(\ve_2)):=
\begin{pmatrix}
1 & 0 \\
0 & -1
\end{pmatrix}.
\end{equation} 
The $\ode$-action on $\KK^0$ is the identity and the $\ode$-action on $\KK^{-1}$ is given by the determinant of the transformation:
\begin{equation}\label{eq:TrsIrr0}
\rho_0 (\vQ): = 1,
\qquad 
\rho_{-1} (\vQ):=\det \vQ.
\end{equation}
The harmonic structure of a tensor space can be determined without making heavy computations by using the \emph{Clebsch-Gordan formula}. This formula indicates how the tensor product of two irreducible spaces decomposes into a direct sum of irreducible spaces. Note that this formula only indicates the structure of the resulting vector space and does not provide an explicit construction of the decomposition.  The construction of an associated explicit decomposition will be undertaken in Sections 4 and 5.
For the determination of the harmonic structure, we use the following result, the proof of which is found in \cite{AKO16}.
\begin{lem} \label{lem-Gordan}
For every integers $p > 0$ and $q > 0$, we have the following isotypic decompositions, where the meaningless products are indicated by $\times$:
\begin{table}[H]
\begin{tabular}{|c|c|c|c|}
\hline 
$\otimes$ & $\KK^{q}$ & $\KK^0$ & $\KK^{-1}$ \\
\hline
$\KK^p$ & $\begin{cases}
\KK^{p+q} \oplus \KK^{\lvert p-q \rvert}, & \text{$p \neq q$} \\
\KK^{2p} \oplus \KK^0 \oplus \KK^{-1}, & \text{$p=q$}
\end{cases}$
& $\KK^p$ & $\KK^p$ \\
\hline
$\KK^0$ & $\KK^q$ & $\KK^0$ & $\KK^{-1}$ \\
\hline
$\KK^{-1}$ & $\KK^q$ & $\KK^{-1}$ & $\KK^0$ \\
\hline
\end{tabular} \hfill
\begin{tabular}{|c|c|c|c|}
\hline 
$\otimes^{s}$ & $\KK^{p}$ & $\KK^0$ & $\KK^{-1}$ \\
\hline
$\KK^p$ &  $\KK^{2p} \oplus \KK^0$ & $\times$ & $\times$ \\
\hline
$\KK^0$ & $\times$ & $\KK^0$ & $\times$ \\
\hline
$\KK^{-1}$ & $\times$ & $\times$ & $\KK^0$ \\
\hline
\end{tabular}
\end{table}
\end{lem}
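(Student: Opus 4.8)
The plan is to reduce every entry of both tables to a character computation on $\ode$, using the explicit actions $\rho_n$ recorded in \eqref{eq:TrsIrr} and \eqref{eq:TrsIrr0}. Since every element of $\ode$ is conjugate either to a rotation $\vR(\theta)$ or to a reflection, a real $\ode$-representation is determined up to isomorphism by the two functions $\theta \mapsto \chi(\vR(\theta))$ and the value $\chi(\vp(\ve_2))$. From \eqref{eq:TrsIrr}--\eqref{eq:TrsIrr0} one reads off, for $n \geq 1$, $\chi_{\KK^n}(\vR(\theta)) = 2\cos(n\theta)$ and $\chi_{\KK^n}(\vp(\ve_2)) = 0$, while $\chi_{\KK^0} \equiv 1$ and $\chi_{\KK^{-1}}(\vR(\theta)) = 1$, $\chi_{\KK^{-1}}(\vp(\ve_2)) = -1$. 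As the spaces $\KK^n$ ($n\geq 1$) together with $\KK^0$ and $\KK^{-1}$ exhaust the irreducible $\ode$-representations, it suffices to compute the character of each (symmetrized) product and re-express it as a nonnegative integer combination of these reference characters.

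For the ordinary product I would multiply characters: on rotations $\chi_{\KK^p \otimes \KK^q}(\vR(\theta)) = 4\cos(p\theta)\cos(q\theta)$, and the product-to-sum identity rewrites this as $2\cos((p+q)\theta) + 2\cos(|p-q|\theta)$. When $p \neq q$ both summands are genuine two-dimensional characters, giving $\KK^{p+q} \oplus \KK^{|p-q|}$. When $p = q$ the second term degenerates to the constant $2$, which splits as $1+1$ on rotations; the reflection character $0 = 0\cdot 0$ then forces the split $1 + (-1) = \chi_{\KK^0} + \chi_{\KK^{-1}}$, yielding $\KK^{2p} \oplus \KK^0 \oplus \KK^{-1}$. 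The remaining rows and columns are immediate: tensoring with $\KK^0$ leaves the character unchanged; tensoring with $\KK^{-1}$ preserves the rotation character and multiplies the reflection character by $-1$, so $\KK^p \otimes \KK^{-1}$ and $\KK^q \otimes \KK^{-1}$ again have characters $2\cos(p\theta),0$ and hence equal $\KK^p$, $\KK^q$; finally $\KK^{-1}\otimes\KK^{-1}$ has reflection character $(-1)^2 = 1$, hence equals $\KK^0$.

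For the symmetrized product I would use $\chi_{\otimes^s}(g) = \tfrac{1}{2}\big(\chi(g)^2 + \chi(g^2)\big)$ applied to $V = \KK^p$. On a rotation this gives $\tfrac{1}{2}\big(4\cos^2(p\theta) + 2\cos(2p\theta)\big) = 1 + 2\cos(2p\theta) = \chi_{\KK^0} + \chi_{\KK^{2p}}$, and on a reflection $\tfrac{1}{2}(0 + 2) = 1 = \chi_{\KK^0} + \chi_{\KK^{2p}}$ as well, so $\KK^p \otimes^s \KK^p = \KK^{2p} \oplus \KK^0$; the antisymmetric complement carries reflection character $-1$ and is exactly the $\KK^{-1}$ that dropped out. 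The same formula gives $\KK^0 \otimes^s \KK^0 = \KK^0$ and, since $(-1)^2 = 1$, $\KK^{-1} \otimes^s \KK^{-1} = \KK^0$. The off-diagonal entries are marked $\times$ because the symmetrized product is only defined on a single factor tensored with itself, so $\KK^p \otimes^s \KK^q$ with $p \neq q$ carries no meaning here.

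The one genuinely delicate point is the completeness and multiplicity bookkeeping: because $\ode$ is compact but not finite, I would make the character argument rigorous by first complexifying and restricting to $\sod \simeq \mathrm{U}(1)$, on which $\KK^n \otimes \mathbb{C}$ ($n \geq 1$) splits into the one-dimensional weight spaces of weights $\pm n$, while $\KK^0$ and $\KK^{-1}$ both carry weight $0$. Weights simply add under the tensor product, so the weight multiset of any product is read off by inspection; real $\ode$-irreducibles are then recovered by pairing weight $+m$ with weight $-m$ for $m \geq 1$, and the weight-$0$ part is sorted into $\KK^0$ versus $\KK^{-1}$ precisely by the sign of the reflection action, $\vp(\ve_2)$ interchanging the two conjugate weight vectors. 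This reassembly step is where care is required, but it reproduces every entry of both tables and confirms the isotypic decompositions above.
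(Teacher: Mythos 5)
Your proposal is correct, and every step checks out: the character values $\chi_{\KK^n}(\vR(\theta))=2\cos n\theta$, $\chi_{\KK^n}(\vp(\ve_2))=0$ for $n\geq 1$, the product-to-sum identity giving $\KK^{p+q}\oplus\KK^{\lvert p-q\rvert}$, the splitting of the two-dimensional rotation-trivial block into $\KK^0\oplus\KK^{-1}$ forced by the vanishing reflection trace, and the symmetric-square formula $\chi_{\otimes^s}(g)=\tfrac12\bigl(\chi(g)^2+\chi(g^2)\bigr)$ which correctly isolates $\KK^{2p}\oplus\KK^0$ and identifies the discarded antisymmetric part as the $\KK^{-1}$. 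Note, however, that the paper itself does not prove this lemma; it defers entirely to the cited reference \cite{AKO16}, so there is no internal proof to compare against. Your argument is a legitimate self-contained substitute, and it is essentially the standard one: the cited handbook's treatment also rests on complexifying, identifying $\KK^n\otimes\mathbb{C}$ with the weight-$(\pm n)$ characters of $\sod\simeq\mathrm{U}(1)$, and letting the reflection sort the weight-zero part into $\KK^0$ versus $\KK^{-1}$ — exactly your final paragraph. One small remark on your "delicate point": the weight-space detour, while nice and constructive, is not strictly needed for rigor. Compactness of $\ode$ already gives semisimplicity (via Haar measure), and the irreducible real characters $1$, $\det$, and $\theta\mapsto 2\cos n\theta$ are linearly independent class functions (elementary Fourier analysis plus the reflection value separating $\KK^0$ from $\KK^{-1}$), so character equality alone pins down the isotypic decomposition; finiteness of the group is never required.
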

Using the previous result, we can determine the harmonic structure of the  state tensor spaces and constitutive tensor spaces of the strain-gradient elasticity:\\
\begin{table}[H]\label{tab:IrrDec}
    \renewcommand{\arraystretch}{1.1}
    \begin{tabular}{|c||c|}
      \hline
       State tensor space & Harmonic structure \\ \hline  \hline
      $\sdef$   &  $\KK^2\oplus\KK^0$\\  \hline
      $\sgrd$   &$\KK^3\oplus2\KK^1$\\  \hline
    \end{tabular},
\hfill
   \renewcommand{\arraystretch}{1.1}
    \begin{tabular}{|c||c|}
      \hline
      Constitutive tensor space & Harmonic structure \\ \hline  \hline
$\Elaq$&$\KK^4\oplus\KK^2\oplus 2 \KK^0$\\ \hline
$\Elac$ &$\KK^5\oplus 3 \KK^3\oplus 5 \KK^1$\\ \hline
$\Elas$ &$\KK^6\oplus2\KK^4\oplus5\KK^2\oplus4\KK^0\oplus\KK^{-1}$\\ \hline
    \end{tabular}
    \caption{Irreducible decompositions of state tensor spaces (left table) and constitutive tensor spaces (right table).}
  \end{table}  
The challenge of future sections will be to explicitly construct the harmonic decompositions corresponding to these structures. It should be pointed out that as soon as the harmonic structure involves multiple spaces of the same order, the explicit decomposition is not uniquely defined \cite{GSS89}. As can be seen from the previous tables, this is the case for all the  constitutive tensor spaces considered in this study.

\section{Harmonic decomposition: methodology}\

In this section, we present the geometric objects and methods that are at the core of our approach to decompose tensors. With the exception of  Proposition \ref{prop:TrGam} which must be adapted, the results provided in this section are valid for 2D and 3D physical spaces. The 3D situation will be detailed in a future contribution, and we will focus here only on the 2D case.

\subsection{The harmonic decomposition}

Let $\VV$ and $\WW$ be two vector spaces, a map $\phi:\VV\rightarrow\WW$ is said to be $\ode$-equivariant if
\ben
\forall \vG\in\ode,\ \forall \vv\in\VV,\ \vG\star\phi(\vv)=\phi(\vG\star\vv).
\een
An explicit harmonic decomposition $\phi$ of a tensor $\vT\in\mathbb{T}^{n}$ is an $\ode$-equivariant linear  isomorphism between  a direct sum of harmonic spaces $\VV\simeq\bigoplus\KK^{k_{i}}$  and the space $\mathbb{T}^{n}$:
\begin{align*}
\phi: \VV\simeq\bigoplus\KK^{k_{i}}&\rightarrow \mathbb{T}^{n}\\
(\alpha,\ldots,\vK^{n})&\mapsto \vT=\phi(\alpha,\ldots,\vK^{n})
\end{align*}
Since this isomorphism is $\ode$-equivariant \cite{AKO16}, it satisfies the following property:
\begin{equation}\label{eq:EqHarDec}
\forall \vG\in\ode,\  \vG\star \vT=\vG\star \phi(\alpha,\ldots,\vK^{n})= \phi(\alpha,\ldots,\vG\star\vK^{n})
\end{equation}
which means that \emph{rotating} $\vT$ is equivalent to rotating the elements of its decomposition,  and vice versa.  
Since the transformations of the harmonic components (i.e. the elements of the harmonic decomposition) are elementary (cf. Equations \eqref{eq:TrsIrr}), it is generally easier to study the transformations of the harmonic components rather than the ones of the full tensor. In this view,  Equation \eqref{eq:EqHarDec} provides an explicit link between these two representations. The challenge is to obtain such an explicit expression for $\phi$.


\subsection{A three-step methodology}
Consider two spaces of state tensors denoted by $\mathbb{E}$ and $\mathbb{F}$. The (linear) constitutive law is an element  $\vT\in\mathcal{L}(\mathbb{E},\mathbb{F})$. In the present context, $\vT$ represents the constitutive tensor of which we want to obtain the harmonic decomposition.  The construction of a Clebsch-Gordan Harmonic Decomposition (in abbreviated form CGHD) of $\vT$ is obtained using the following procedure:
\begin{description}
\item[1) State Tensor Harmonic Decomposition (STHD)] Choose and compute a harmonic decomposition for  elements $\vv\in\mathbb{E}$ and $\ww\in\mathbb{F}$. This decomposition implies the definition of \emph{harmonic embedding} operators.  From these operators, we get a family of orthogonal projectors that will be used to decompose $\vT$;
\item[2) \red{Intermediate Block Decomposition} (\red{IBD})] Consider an element $\vT\in\mathcal{L}(\mathbb{E},\mathbb{F})$ which represents the constitutive tensor of which we want to obtain the harmonic decomposition. The choice of a STHD and the use of the associated projectors induce a decomposition of $\mathcal{L}(\mathbb{E},\mathbb{F})$ into "blocks". This decomposition, that  will be referred to as the \red{Intermediate Block Decomposition}, is not irreducible;

\item [3) Clebsch-Gordan Harmonic Decomposition (CGHD)] Each elementary block of the \red{Intermediate Block Decomposition} belongs to a space $\KK^p\otimes\KK^q$, the harmonic structure of which is known by the Clebsch-Gordan formula. The use of harmonic embeddings allows us to break down  each block into irreducible tensors.
\end{description}

The combination of the last two steps provides the Clebsch-Gordan Harmonic Decomposition of $\vT\in\mathcal{L}(\mathbb{E},\mathbb{F})$. The resulting decomposition  is a particular explicit harmonic decomposition of $\vT$  which is compatible with the harmonic decompositions of $\vv$ and $\ww$. This decomposition is uniquely defined\footnote{This claimed uniqueness is ensured by the property that in 2D the tensor product of irreducible harmonic spaces decomposes, as detailed in the tables of Lemma \ref{lem-Gordan}, into a direct sum of irreducible spaces of distinct orders. Even if the Clesbch-Gordan formula are different, this property is also valid in 3D \cite{JCB78}.}  by the choice of a particular form of the harmonic decompositions for the spaces $\mathbb{E}$ and $\mathbb{F}$.  It has to be noted that different choices for the decompositions of $\mathbb{E}$ and  $\mathbb{F}$ will lead to different  decompositions of $\vT$.

\subsection{Harmonic embeddings}

The isotypic decomposition of a tensor space $\mathbb{T}^{n}$ can be written in two different, but isomorphic, ways: 
\begin{equation}\label{eq:dec-iso}
  \mathbb{T}^{n} \simeq \bigoplus_{k=-1}^{n}\bigoplus_{l=0}^{p_{k}} \HH^{(n,k)}_{l}\simeq \bigoplus_{k=-1}^{n}\bigoplus_{l=0}^{p_{k}} \KK^{k}_{l}
\end{equation}
in which the involved spaces are:
\begin{description}
\item[$\TT^{n}$]a space of $n$th-order tensors with given index symmetries, it is the tensor space we want to decompose;
\item[$\HH^{(n,k)}$]a subspace of $\TT^{n}$ isomorphic to $\KK^{k}$, it is the embedding space for the elements belonging to $\KK^{k}$;
\item[$\KK^{k}$]a space of $k$th-order harmonic tensors, the elements of this space are used to parametrize the harmonic decomposition of $\vT\in\mathbb{T}^{n}$.
\end{description}
In the first decomposition of Equation \eqref{eq:dec-iso}, $\HH^{(n,k)}$ is a subspace of $\TT^{n}$ isomorphic to $\KK^{k}$ with $k\leq n$, while in the second one $\KK^{k}$ is a subspace of $\GG^{k}$.  In both decompositions, the first direct sum is on the order of the harmonic space, while the other one concerns the summation of the different spaces of the same order.  
The space $\HH^{(n,k)}$ serves as an intermediate space to embed a tensor $\vK\in\KK^{k}$ into $\vT\in\mathbb{T}^{n}$. Thus, the elements of $\HH^{(n,k)}$ are parametrized by elements from $\KK^{k}$, this parametrization is what we call the  \emph{harmonic embedding}. This technique will be used repeatedly in our work. As  will be seen in Section \ref{s:ClaEla},  this allows us to express the harmonic decompositions in terms of projection operators.\\

Let us consider more precisely the parametrization of $\HH^{(n,k)}$ by $\KK^{k}$. The connections of the different spaces are shown on the following diagram:
\begin{equation}
\xymatrix@!{
    \TT^{n} \ar[r]^{\mathrm{P}^{(n,k)}} & \HH^{(n,k)}\ar@<2pt>[d]^{\mathrm{\Pi}^{\{k,n\}}} \\
     \GG^{k} \ar[r]^{\mathrm{p}^{(k)}} & \KK^{k}\ar@<2pt>[u]^{\mathrm{\Phi}^{\{n,k\}}}
  }
\end{equation}
In this diagram, the associated mappings are:
\begin{description}
\item[$\mathrm{P}^{(n,k)}$]  a projector from $\TT^{n}$ to its subspace $\HH^{(n,k)}$, $\mathrm{P}^{(n,k)}$ is a $2n$th-order tensor;
\item[$\mathrm{p}^{(k)}$]        a projector from $\GG^{k}$ to its subspace $\KK^{k}$, $\mathrm{p}^{(k)}$  is a $2k$th-order tensor;
\item[$\mathrm{\Pi}^{\{k,n\}}$] a  projector from $\TT^{n}$ to its subspace $\KK^{k}$, $\mathrm{\Pi}^{\{k,n\}}$ is a $(n+k)$th-order tensor\footnote{\red{In the notation $\mathrm{\Pi}^{\{k,n\}}$ the order of the bracketed exponents is organized so that the left-most exponent indicates the tensor order of the image of the map, while the right-most exponent is the tensor order of the argument. This convention does not apply for in-parenthesis exponents such as those appearing in $\mathrm{P}^{(n,k)}$ which is a $2n$th-order tensor and not a $(n+k)$th-order one.} };
\item[$\mathrm{\Phi}^{\{n,k\}}$]a \emph{harmonic embedding} of  $\KK^{k}$ into $\TT^{n}$, $\mathrm{\Phi}^{\{n,k\}}$ is a $(n+k)$th-order tensor.
\end{description}
By definition,  we have the following  fundamental relations: 
\begin{equation}\label{eq:Carac1}
\mathrm{I}^{\HH^{(n,k)}}:=\mathrm{\Phi}^{\{n,k\}}\rdots{k}\mathrm{\Pi}^{\{k,n\}}\in\II^{2n},\quad
\mathrm{I}^{\KK^{k}}:=\mathrm{\Pi}^{\{k,n\}}\rdots{n}\mathrm{\Phi}^{\{n,k\}}\in\II^{2k}
\end{equation}
and, by construction,
\begin{equation}\label{eq:Carac2}
\mathrm{P}^{(n,k)}=\mathrm{I}^{\HH^{(n,k)}},\quad\mathrm{p}^{(k)}=\mathrm{I}^{\KK^{k}}.
\end{equation}

We have the remarkable property that all these different operators are known as soon as $\mathrm{\Phi}^{\{n,k\}}$  is determined. To see that, let us first define $\mathrm{\Phi}^{\{k,n\}}:=(\mathrm{\Phi}^{\{n,k\}})^{T}$ to be the transpose of $\mathrm{\Phi}^{\{n,k\}}$, i.e. the tensor which satisfies the following property:
\ben
\forall \vV\in\TT^{n},\ \forall\vv\in\KK^{k},\quad \langle\vV,\mathrm{\Phi}^{\{n,k\}}\rdots{k}\vv\rangle_{\TT^{n}}=\langle\mathrm{\Phi}^{\{k,n\}}\rdots{n}\vV,\vv\rangle_{\KK^{k}}.
\een
In terms of components, we have
\begin{equation}\label{def:Transp}
\left(\mathrm{\Phi}^{\{k,n\}}\right)_{i_{1}\ldots i_{n+k}}=\left(\mathrm{\Phi}^{\{n,k\}}\right)_{i_{n+1}\ldots i_{n+k}i_{1}\ldots i_{n}}.
\end{equation}
We have the following theorem:
\begin{thm}\label{thm:HarEmb}
 Let $\mathrm{\Phi}^{\{n,k\}}$ be a harmonic embedding of  $\KK^{k}$ into $\TT^{n}$. The operators $\mathrm{P}^{(n,k)}$, $\mathrm{p}^{(k)}$, $\mathrm{\Pi}^{\{k,n\}}$  defined above
can be expressed in terms of $\mathrm{\Phi}^{\{n,k\}}$ as follows:
\ben
\mathrm{\Pi}^{\{k,n\}}=\frac{1}{\gamma}\mathrm{\Phi}^{\{k,n\}},\quad
\mathrm{P}^{(n,k)}=\frac{1}{\gamma}\mathrm{\Phi}^{\{n,k\}}\rdots{k}\mathrm{\Phi}^{\{k,n\}},\quad
\mathrm{p}^{(k)}=\frac{1}{\gamma}\mathrm{\Phi}^{\{k,n\}}\rdots{n}\mathrm{\Phi}^{\{n,k\}}
\een
in which $\mathrm{\Phi}^{\{k,n\}}$ denotes the transpose of $\mathrm{\Phi}^{\{n,k\}}$, $\gamma$ is defined as
\ben
\gamma=\frac{\|\mathrm{\Phi}^{\{n,k\}}\rdots{k}\vv\|^2}{\|\vv\|^2},\  \vv\in\KK^{k}\setminus\{0\},
\een
and we adopt the convention that, in the case $k=0$, $ \rdots{0}=\otimes$.
\end{thm}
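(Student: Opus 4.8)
The plan is to read $\mathrm{\Phi}^{\{n,k\}}$ as the $\ode$-equivariant linear injection $\Phi\colon\KK^{k}\to\TT^{n}$, $\vv\mapsto\mathrm{\Phi}^{\{n,k\}}\rdots{k}\vv$, whose image is the embedded copy $\HH^{(n,k)}$, and its transpose $\mathrm{\Phi}^{\{k,n\}}$ as the map $\Phi^{T}\colon\TT^{n}\to\KK^{k}$ fixed by the adjunction in \eqref{def:Transp}. First I would record that $\Phi^{T}$ is again $\ode$-equivariant: since the $\ode$-action is orthogonal, i.e. preserves the natural inner products on $\TT^{n}$ and $\KK^{k}$, the adjoint of an equivariant map is equivariant, which drops out of $\langle\vV,\Phi\vv\rangle_{\TT^{n}}=\langle\Phi^{T}\vV,\vv\rangle_{\KK^{k}}$ combined with $\langle g\star\cdot,\,g\star\cdot\rangle=\langle\cdot,\cdot\rangle$.

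The heart of the argument is the identity $\Phi^{T}\Phi=\gamma\,\mathrm{Id}_{\KK^{k}}$. The composite $\Phi^{T}\Phi\colon\KK^{k}\to\KK^{k}$ is self-adjoint, since its transpose is itself, and by the previous step it is $\ode$-equivariant; hence each of its eigenspaces is $\ode$-invariant. As $\KK^{k}$ is $\ode$-irreducible, its only invariant subspaces are $\{0\}$ and $\KK^{k}$, so a self-adjoint operator on it can have but a single eigenvalue $\gamma$, giving $\Phi^{T}\Phi=\gamma\,\mathrm{Id}_{\KK^{k}}$. For $k\in\{0,-1\}$ the space $\KK^{k}$ is one-dimensional and the scalar is immediate. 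To identify $\gamma$ I would evaluate on any $\vv\in\KK^{k}\setminus\{0\}$ and use the adjunction, $\gamma\|\vv\|^{2}=\langle\Phi^{T}\Phi\vv,\vv\rangle=\langle\Phi\vv,\Phi\vv\rangle=\|\mathrm{\Phi}^{\{n,k\}}\rdots{k}\vv\|^{2}$, which is precisely the stated value, and which also exhibits $\gamma>0$ because $\Phi$ is injective.

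It then remains to identify the three operators, and the cleanest route is to pin down $\mathrm{\Pi}^{\{k,n\}}$ first. The relation $\Phi^{T}\Phi=\gamma\,\mathrm{Id}$ says that $\tfrac{1}{\gamma}\Phi^{T}$ is a left inverse of $\Phi$, and for $\vV\perp\HH^{(n,k)}$ it gives $\Phi^{T}\vV=0$, since $\langle\Phi^{T}\vV,\ww\rangle=\langle\vV,\Phi\ww\rangle=0$ for every $\ww$. I would then check the same two properties for $\mathrm{\Pi}^{\{k,n\}}$: from \eqref{eq:Carac1}--\eqref{eq:Carac2} one has $\mathrm{\Pi}^{\{k,n\}}\circ\Phi=\mathrm{I}^{\KK^{k}}=\mathrm{Id}$ and $\Phi\circ\mathrm{\Pi}^{\{k,n\}}=\mathrm{P}^{(n,k)}$, the orthogonal projector onto $\HH^{(n,k)}$; injectivity of $\Phi$ then forces $\mathrm{\Pi}^{\{k,n\}}\vV=0$ whenever $\vV\perp\HH^{(n,k)}$. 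Since $\TT^{n}=\HH^{(n,k)}\oplus(\HH^{(n,k)})^{\perp}$ and the two maps $\mathrm{\Pi}^{\{k,n\}}$ and $\tfrac{1}{\gamma}\Phi^{T}$ both annihilate the complement and both invert $\Phi$ on $\HH^{(n,k)}$, they agree on each summand, whence $\mathrm{\Pi}^{\{k,n\}}=\tfrac{1}{\gamma}\mathrm{\Phi}^{\{k,n\}}$. Substituting this into the defining relations \eqref{eq:Carac1} and invoking \eqref{eq:Carac2} produces the last two formulas at once: $\mathrm{P}^{(n,k)}=\mathrm{\Phi}^{\{n,k\}}\rdots{k}\mathrm{\Pi}^{\{k,n\}}=\tfrac{1}{\gamma}\mathrm{\Phi}^{\{n,k\}}\rdots{k}\mathrm{\Phi}^{\{k,n\}}$ and $\mathrm{p}^{(k)}=\mathrm{\Pi}^{\{k,n\}}\rdots{n}\mathrm{\Phi}^{\{n,k\}}=\tfrac{1}{\gamma}\mathrm{\Phi}^{\{k,n\}}\rdots{n}\mathrm{\Phi}^{\{n,k\}}$.

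The main obstacle is the Schur-type step $\Phi^{T}\Phi=\gamma\,\mathrm{Id}$, which rests entirely on the $\ode$-irreducibility of $\KK^{k}$ and therefore must genuinely use the full orthogonal group rather than $\sod$ alone. Indeed, over $\sod$ the two-dimensional modules $\KK^{k}$ ($k\geq1$) carry a nontrivial equivariant complex structure and their commutant is strictly larger than $\RR$; it is the reflection $\vp(\ve_{2})$, which anticommutes with that complex structure, that collapses the commutant to scalars. Packaging the conclusion through the real spectral theorem, so that the eigenspaces of the self-adjoint $\Phi^{T}\Phi$ are invariant and hence trivial, is what lets me avoid the division-algebra form of Schur's lemma. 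A secondary point to handle with care is that $\mathrm{\Pi}^{\{k,n\}}$ really is the orthogonal projector and that $\mathrm{\Phi}^{\{n,k\}}$ is totally symmetric and traceless in its $k$ contracted indices, so that $\Phi^{T}$ lands in $\KK^{k}$ and kills $(\HH^{(n,k)})^{\perp}$; both follow from the harmonic embedding being built on the orthogonal harmonic decomposition.
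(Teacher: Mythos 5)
Your proof is correct, but it reaches the key identity $\mathrm{\Phi}^{\{k,n\}}\rdots{n}\mathrm{\Phi}^{\{n,k\}}=\gamma\,\mathrm{I}^{\KK^{k}}$ by a genuinely different route from the paper. The paper (Appendix \ref{s:Pro31}) first proves, in Proposition \ref{lem:PropNorm}, that the quotient $\rho(\vv)=\|\mathrm{\Phi}^{\{n,k\}}\rdots{k}\vv\|^{2}/\|\vv\|^{2}$ is constant on $\KK^{k}\setminus\{0\}$: it notes that $\rho$ is homogeneous of degree zero and $\ode$-invariant, and that every nonzero element of $\KK^{k}$ is a scaled rotation of a fixed reference element; Lemma \ref{lem:InvPhi} then converts this norm identity into the operator identity by polarization. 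That transitivity argument is specific to the 2D setting, where $\dim\KK^{k}=2$ and the rotations act transitively on directions in $\KK^{k}$. You instead apply the real spectral theorem to the self-adjoint, $\ode$-equivariant operator $\mathrm{\Phi}^{\{k,n\}}\rdots{n}\mathrm{\Phi}^{\{n,k\}}$ and invoke irreducibility of $\KK^{k}$ to force a single eigenvalue. This buys dimension-independence (your argument survives the passage to 3D, where the paper itself warns that its Proposition must be adapted), and you also make explicit a step the paper leaves terse: that $\mathrm{\Pi}^{\{k,n\}}$ and $\frac{1}{\gamma}\mathrm{\Phi}^{\{k,n\}}$ coincide because both invert $\mathrm{\Phi}^{\{n,k\}}$ on $\HH^{(n,k)}$ and both annihilate its orthogonal complement.

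Two remarks, neither affecting correctness. First, your identification of $\mathrm{\Pi}^{\{k,n\}}$ takes $\mathrm{P}^{(n,k)}$ to be the \emph{orthogonal} projector onto $\HH^{(n,k)}$; this is indeed the only reading under which the theorem can hold (its conclusion makes $\mathrm{P}^{(n,k)}$ self-adjoint), but it deserves to be flagged as a definitional choice, since ``projector onto a subspace'' alone does not determine the map when the isotypic component has multiplicity greater than one. Second, your closing claim that the full group $\ode$ is genuinely needed is overstated: for a \emph{self-adjoint} equivariant operator, your own spectral-theorem packaging needs only $\sod$-irreducibility of $\KK^{k}$, which holds for $k\geq1$ because a rotation by $k\theta$ has no real eigenvector; the reflection $\vp(\ve_{2})$ is required only to shrink the commutant of general (non-self-adjoint) equivariant operators from $\mathbb{C}$ to $\RR$.
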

\begin{proof}
The proof is made by inserting the result of Lemma \ref{lem:InvPhi} into the relations provided by combining Equations \eqref{eq:Carac1} and \eqref{eq:Carac2}. Intermediate lemmas are provided in Appendix \ref{s:Pro31}. 
\end{proof}
The following proposition, which is demonstrated in Appendix \ref{s:Pro31}, gives another formula for the calculation of the constant $\gamma$:
\begin{prop}\label{prop:TrGam}
The constant $\gamma$ in Theorem \ref{thm:HarEmb}  can be calculated as
\ben
\gamma=\frac{1}{2} \tr \mathrm{M}
\een
in which $\mathrm{M}$ is the matrix of the linear map $\eta:\KK^k\rightarrow\KK^k$ defined by 
\ben
\eta(\vv)=\left(\mathrm{\Phi}^{\{k,n\}}\circ \mathrm{\Phi}^{\{n,k\}}\right)\udot\vv.
\een
\end{prop}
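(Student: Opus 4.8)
The plan is to reduce the computation of $\gamma$ to Schur's lemma applied to the irreducible module $\KK^{k}$. First I would rewrite the numerator $\|\mathrm{\Phi}^{\{n,k\}}\rdots{k}\vv\|^{2}$ by means of the adjunction relation \eqref{def:Transp} defining the transpose $\mathrm{\Phi}^{\{k,n\}}$. Setting $\vV=\mathrm{\Phi}^{\{n,k\}}\rdots{k}\vv\in\TT^{n}$ in that relation and using the symmetry of the inner product, one obtains
\begin{equation*}
\|\mathrm{\Phi}^{\{n,k\}}\rdots{k}\vv\|^{2}
=\langle \mathrm{\Phi}^{\{n,k\}}\rdots{k}\vv,\,\mathrm{\Phi}^{\{n,k\}}\rdots{k}\vv\rangle_{\TT^{n}}
=\langle \mathrm{\Phi}^{\{k,n\}}\rdots{n}\big(\mathrm{\Phi}^{\{n,k\}}\rdots{k}\vv\big),\,\vv\rangle_{\KK^{k}}
=\langle \eta(\vv),\,\vv\rangle_{\KK^{k}},
\end{equation*}
because $\eta=\mathrm{\Phi}^{\{k,n\}}\circ\mathrm{\Phi}^{\{n,k\}}$ is exactly the iterated contraction appearing on the middle right. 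Hence $\gamma=\langle \eta(\vv),\vv\rangle_{\KK^{k}}/\|\vv\|^{2}$ for every nonzero $\vv\in\KK^{k}$.

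Next I would show that $\eta$ is a scalar multiple of the identity. Since $\mathrm{\Phi}^{\{n,k\}}$ is a harmonic embedding it is $\ode$-equivariant, and because the inner products on $\KK^{k}$ and $\TT^{n}$ are $\ode$-invariant, its transpose $\mathrm{\Phi}^{\{k,n\}}$ is equivariant as well; the composite $\eta$ is therefore an $\ode$-equivariant endomorphism of $\KK^{k}$. For $k\geq 1$ the module $(\KK^{k},\rho_{k})$ defined by \eqref{eq:TrsIrr} is $\ode$-irreducible of real type: an endomorphism commuting with every rotation $\rho_{k}(\vR(\theta))$ must be of the form $a\,\mathrm{Id}+b\begin{pmatrix}0&-1\\1&0\end{pmatrix}$, and commutation with the reflection $\rho_{k}(\vp(\ve_{2}))=\begin{pmatrix}1&0\\0&-1\end{pmatrix}$ forces $b=0$. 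By Schur's lemma $\eta=\lambda\,\mathrm{I}^{\KK^{k}}$ for some scalar $\lambda$ (necessarily $\lambda>0$, as $\eta=(\mathrm{\Phi}^{\{n,k\}})^{T}\mathrm{\Phi}^{\{n,k\}}$ is positive definite, the embedding being injective).

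Finally I would combine the two steps. From the first step $\langle \eta(\vv),\vv\rangle_{\KK^{k}}=\lambda\|\vv\|^{2}$, so $\gamma=\lambda$; and taking the trace of $\eta=\lambda\,\mathrm{I}^{\KK^{k}}$ on the two-dimensional space $\KK^{k}$ gives $\tr\mathrm{M}=2\lambda$, whence $\gamma=\lambda=\tfrac{1}{2}\tr\mathrm{M}$. The only genuinely non-trivial point, and the one I expect to be the main obstacle, is justifying that $\eta$ is scalar: this rests on the real-irreducibility of $\KK^{k}$ together with the equivariance of the transpose embedding, rather than on any explicit computation. I would also remark that the factor $\tfrac{1}{2}=1/\dim\KK^{k}$ is specific to $k\geq 1$; in the degenerate cases $k\in\{0,-1\}$, where $\dim\KK^{k}=1$, the same argument yields $\gamma=\tr\mathrm{M}$.
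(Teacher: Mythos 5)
Your proof is correct, and it reaches the result by a somewhat different and cleaner route than the paper's. The paper proceeds in two stages: it first proves separately (Proposition \ref{lem:PropNorm}) that the ratio $\|\mathrm{\Phi}^{\{n,k\}}\rdots{k}\vv\|^{2}/\|\vv\|^{2}$ is independent of $\vv$, using homogeneity of degree zero together with the transitivity, up to scale, of the $\ode$-action on $\KK^{k}$; it then rewrites the resulting identity as the quadratic-form equation $\vv\udot\mathrm{M}^{(k,k)}\udot\vv=\gamma\,\vv\udot\mathrm{I}^{\KK^{k}}\udot\vv$, differentiates in $\vv$ to conclude that $\gamma$ is an eigenvalue of $[\mathrm{M}]$, and finally invokes the isotropy (and symmetry) of $\mathrm{M}^{(k,k)}$ to argue that $[\mathrm{M}]$ is proportional to $[\mathrm{I}]$, so $\gamma$ is a double eigenvalue and $\gamma=\tfrac{1}{2}\tr[\mathrm{M}]$. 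You compress both stages into one application of Schur's lemma: the adjunction $\|\mathrm{\Phi}^{\{n,k\}}\rdots{k}\vv\|^{2}=\langle\eta(\vv),\vv\rangle_{\KK^{k}}$ plus the equivariance of $\eta$ on the real-irreducible module $(\KK^{k},\rho_{k})$ gives $\eta=\lambda\,\mathrm{I}^{\KK^{k}}$ directly, which simultaneously re-establishes the $\vv$-independence of $\gamma$ (so you never need Proposition \ref{lem:PropNorm}) and yields the trace formula. The underlying fact is the same in both arguments --- an equivariant symmetric operator on the two-dimensional irreducible space $\KK^{k}$ must be scalar --- but where the paper asserts the step ``isotropic implies proportional to the identity'' tersely, your explicit commutant computation ($a\,\mathrm{Id}+bJ$ from the rotations, the reflection killing $b$) actually justifies it; conversely, the paper's version stays closer to elementary matrix algebra and avoids representation-theoretic vocabulary, in keeping with its style. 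Your closing remark is also well taken: the factor $\tfrac{1}{2}=1/\dim\KK^{k}$ presupposes $k\geq 1$, a restriction the paper's proof uses in exactly the same way (it writes $\forall k>0$, $\dim(\KK^{k})=2$) without flagging it in the statement, and your observation that $\gamma=\tr\mathrm{M}$ in the one-dimensional cases is consistent with the value $\gamma=2\lambda^{2}$ computed for $\dT{\Phi}^{\{2,0\}}=\lambda\id$ in Section \ref{s:ClaEla}.
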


\section{Application to classical elasticity}\label{s:ClaEla}

The aim of the present section is to detail the method to the well-known situation of the fourth-order elasticity tensor. Results obtained here can be checked with the results available in the literature \cite{Via97,DA19}. The presentation is here mainly illustrative and will be extended in the next section to the more complicated situation of strain-gradient elasticity for which results are original and not available in the literature. Let  us begin by the construction of the harmonic decomposition of the state tensor space $\sdef$.

\subsection{Step 1: Decomposition of the state tensor space $\sdef$}\label{ss:decS2}

As indicated in Table \ref{tab:IrrDec}, $\sdef$ has the following harmonic structure:
\begin{equation} \label{eq:IsotypicOrdre2}
\sdef\simeq\HH^{(2,2)}\oplus\HH^{(2,0)}\simeq\KK^2\oplus\KK^0.
\end{equation}
The decomposition of an element $\dt\in\sdef$ is uniquely defined and it corresponds to the partition of $\dt$ into a deviatoric tensor $\dd \in \HH^{(2,2)}=\KK^2$  and a spherical one $\ds \in \HH^{(2,0)}$.  The identity tensor on $\sdef$  can be decomposed as the sum of the deviatoric and spherical projectors denoted $\qT{P}^{(2,2)}$ and $\qT{P}^{(2,0)}$:
\ben
\qT{I}^{\sdef}=\qT{P}^{(2,2)}+ \qT{P}^{(2,0)}.
\een

The structure of the associated harmonic embeddings are described on the following diagrams
\begin{equation}
\xymatrix@!{
    \sdef\ar[r]^{\qT{P}^{(2,2)}} & \HH^{(2,2)}\ar@<2pt>[d]^{\qT{I}^{\sdef}} \\
     \GG^{2} \ar[r]^{\qT{P}^{(2,2)}} & \KK^{2}\ar@<2pt>[u]^{\qT{I}^{\sdef}}
  }
,\qquad\quad 
  \xymatrix@!{
    \sdef\ar[r]^{\qT{P}^{(2,0)}} & \HH^{(2,0)}\ar@<2pt>[d]^{\dT{\Pi}^{\{0,2\}}} \\
     \KK^0 \ar[r]^{1} & \KK^{0}\ar@<2pt>[u]^{\dT{\Phi}^{\{2,0\}}}
  }.
\end{equation}
Let us build the spherical projector $\qT{P}^{(2,0)}$ first, the deviatoric projector $\qT{P}^{(2,2)}$ will then be deduced from it. Following the harmonic embedding method, the spherical part of $\dt$ will be parametrized by a scalar $\alpha\in\KK^0$. To construct the associated projector, let us first determine the embedding operator $\dT{\Phi}^{\{2,0\}}$:
\begin{equation}\label{eq:EmbK0}
\dT{\Phi}^{\{2,0\}}=\lambda  \id
\end{equation}
where $\lambda\in\RR\setminus\{0\}$ is a free scaling factor. In this situation, we have $\dT{\Phi}^{\{2,0\}} = \dT{\Phi}^{\{0,2\}}$. From Theorem \ref{thm:HarEmb}, $\dT{\Pi}^{\{0,2\}}$ has the following expression:
\ben
\dT{\Pi}^{\{0,2\}}=\frac{1}{\gamma}\dT{\Phi}^{\{0,2\}}=\frac{\lambda}{\gamma} \id=\frac{1}{2\lambda} \id,\quad\text{since}\quad \gamma=\|\dT{\Phi}^{\{2,0\}}\|^2=2\lambda^2.
\een
This results in the following expression for $\qT{P}^{(2,0)}$:
\begin{equation}\label{eq:P0}
\qT{P}^{(2,0)}=\dT{\Pi}^{\{2,0\}}\otimes\dT{\Phi}^{\{0,2\}}=2\dT{\Phi}^{\{2,0\}}\otimes\dT{\Phi}^{\{0,2\}}=\frac{1}{2}\id\otimes\id=\frac{1}{2}\qT{i}^{(4)}_1.
\end{equation}
We recognise here the well-known expression of the spherical projector.
Note that the choice of a specific $\lambda$ has no consequence on the expression of $ \qT{P}^{(2,0)}$.
The deviatoric projector can then be obtained:
\begin{equation}\label{eq:P2}
\qT{P}^{(2,2)}:=\qT{I}^{\sdef}-\qT{P}^{(2,0)}=\frac{1}{2}(\qT{i}^{(4)}_2+\qT{i}^{(4)}_3-\qT{i}^{(4)}_1).
\end{equation}
\begin{rem}
From Lemma \ref{lem:SymPro},  it appears that the tensors $\qT{P}^{(2,2)}$ and $\qT{P}^{(2,0)}$ can be considered as isotropic elasticity tensors in $\Elaq$. Interpreted as elements of  $(\Elaq,::)$, these tensors are associated to the following Gram matrix:
\begin{table}[H]
\center
\begin{tabular}{|c||c|c|}
\hline
$\qc$& $\qT{P}^{(2,2)}$ & $\qT{P}^{(2,0)}$   \\ \hline
$\qT{P}^{(2,2)}$ &$2$ & $0$    \\ \hline
$\qT{P}^{(2,0)}$ &$0$ &$1$  \\ \hline
\end{tabular}
\end{table}
\noindent Further, it can be noted that $\qT{P}^{(2,k)}\qc\qT{P}^{(2,k)}=\dim(\KK^{k}),\ k\in\{0,2\}.$
\end{rem}
Although the value of $\lambda$ has no importance for the expression of the projectors, it has some to construct an explicit parametrization of $\dt$ in terms of its harmonic components $\dd$ and $\alpha$. For our concern, the value of $\lambda$ will be chosen by imposing $\dT{\Pi}^{\{0,2\}}$ to be the standard trace operator, i.e.
\ben
\dT{\Pi}^{\{0,2\}}\dc\dt=\frac{1}{2\lambda}\id\dc\dt=\tr(\dt)
\een
which sets $\lambda$ to $\frac{1}{2}$.
Collecting all the previous observations we obtain the following results:
\begin{prop}\label{thm:DecT2}
There exists an $\ode$-equivariant isomorphism $\varphi$ between $\sdef$ and $\KK^{2}\oplus\KK^{0}$ such that
\ben
\dt=\dd+\ds=\dd+\dT{\Phi}^{\{2,0\}}\alpha=\varphi(\dd,\alpha)
\een
with $(\dd, \alpha) \in \KK^{2}\times\KK^{0}$ and  $\dT{\Phi}^{\{2,0\}}$ is such that           
\ben
\ds=\dT{\Phi}^{\{2,0\}}\alpha
\quad\text{with}\quad
\dT{\Phi}^{\{2,0\}}:=\frac{1}{2}\id.
\een 
Conversely, we have
\ben
\alpha=\dT{\Pi}^{\{0,2\}}:\dt=\tr(\dt)
\quad\text{with}\quad
\dT{\Pi}^{\{0,2\}}=\id.
\een
\end{prop}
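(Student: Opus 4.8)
The plan is to assemble the pieces already built in this subsection — the embedding $\dT{\Phi}^{\{2,0\}}=\tfrac12\id$, the projection $\dT{\Pi}^{\{0,2\}}=\id$ fixed by $\lambda=\tfrac12$, and the projectors $\qT{P}^{(2,0)}$, $\qT{P}^{(2,2)}$ of Equations \eqref{eq:P0}--\eqref{eq:P2} — and to check that the induced map $\varphi(\dd,\alpha):=\dd+\dT{\Phi}^{\{2,0\}}\alpha=\dd+\tfrac12\alpha\,\id$ is an $\ode$-equivariant linear isomorphism with the stated inverse. First I would write down the candidate inverse explicitly, $\dt\mapsto(\dd,\alpha)$ with $\dd:=\qT{P}^{(2,2)}\dc\dt$ and $\alpha:=\dT{\Pi}^{\{0,2\}}\dc\dt=\tr(\dt)$, and then verify that the two maps compose to the identity in both directions.

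The single genuine verification is that $\dd$ lands in $\KK^2$, i.e. that it is symmetric and \emph{traceless}. Symmetry is immediate since $\dt\in\sdef$ and $\id$ is symmetric. Writing $\dd=\qT{P}^{(2,2)}\dc\dt=\dt-\qT{P}^{(2,0)}\dc\dt=\dt-\tfrac12\tr(\dt)\,\id$, the trace computes to
\begin{equation*}
\tr(\dd)=\tr(\dt)-\tfrac12\tr(\dt)\,\tr(\id)=\tr(\dt)-\tfrac12\tr(\dt)\cdot 2=0,
\end{equation*}
the cancellation relying precisely on $\tr(\id)=2$ in dimension two. This is the sole place where the ambient dimension enters, and it is exactly what pins down the coefficient $\tfrac12$ in $\dT{\Phi}^{\{2,0\}}$. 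Once $\dd\in\KK^2$ is secured, bijectivity follows from $\qT{I}^{\sdef}=\qT{P}^{(2,2)}+\qT{P}^{(2,0)}$ (which makes $\varphi$ and $\dt\mapsto(\dd,\alpha)$ mutually inverse) or, more cheaply, from the dimension count $\dim\sdef=3=\dim\KK^2+\dim\KK^0$.

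It remains to establish $\ode$-equivariance. Since the action on $\KK^0$ is trivial, $\vG\star\alpha=\alpha$, and $\id$ is isotropic, $\vG\star\id=\id$, for every $\vG\in\ode$ we obtain
\begin{equation*}
\vG\star\varphi(\dd,\alpha)=\vG\star\dd+\tfrac12\alpha\,(\vG\star\id)=(\vG\star\dd)+\tfrac12\alpha\,\id=\varphi(\vG\star\dd,\alpha).
\end{equation*}
For this to be meaningful I also need $\KK^2$ to be $\ode$-invariant, which holds because the tensorial action preserves total symmetry and the trace is an isotropic contraction, whence $\tr(\vG\star\dd)=\tr(\dd)=0$; the same equivariance of the trace shows $\alpha=\tr(\dt)$ transforms correctly into $\KK^0$.

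I do not anticipate any real obstacle: the heavy lifting — the explicit determination of $\dT{\Phi}^{\{2,0\}}$, $\dT{\Pi}^{\{0,2\}}$ and the projectors — is already done, and what remains is the tracelessness identity in 2D together with routine equivariance and bijectivity checks. The only point requiring care is bookkeeping the factor $\tfrac12$ consistently between the embedding $\dT{\Phi}^{\{2,0\}}=\tfrac12\id$ and the trace-normalized projection $\dT{\Pi}^{\{0,2\}}=\id$.
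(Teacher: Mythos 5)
Your proposal is correct and takes essentially the same route as the paper: the paper presents this proposition as a collection of the observations of Section 4.1 (the spherical/deviatoric projectors, the embedding $\dT{\Phi}^{\{2,0\}}=\lambda\id$ with $\lambda$ pinned to $\tfrac{1}{2}$ by the trace normalization), and your argument assembles exactly those pieces with the same normalization. The only difference is cosmetic: you verify tracelessness of $\dd$, the mutual-inverse property and $\ode$-equivariance by hand, whereas the paper inherits these from its general harmonic-embedding machinery (Theorem \ref{thm:HarEmb}, where equivariance is automatic because the embedding operators are isotropic tensors).
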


\subsection{Step 2: \red{Intermediate Block Decomposition} of $\Elaq$}

In this subsection, and in the following ones, results will be provided under two forms:
\begin{enumerate}
\item the general one which involves embedding operators (as will be used in the next section);
\item the simplified one, since for the particular case of the elasticity tensor the operators have very simple expressions.
\end{enumerate} 

Using the family of projectors $(\qT{P}^{(2,2)},\qT{P}^{(2,0)})$ constructed from the harmonic decomposition of $\sdef$, we can demonstrate the following result:

\begin{prop}\label{prop:DecC}
The tensor $\qT{C}\in\Elaq$ admits the uniquely defined \red{Intermediate Block Decomposition} associated to the family of projectors $(\qT{P}^{(2,2)},\qT{P}^{(2,0)})$:
\begin{eqnarray}\label{eq:DecC}
\qT{C}&=&\qT{C}^{2,2}+2\left(\dT{h}^{2,0}\otimes\dT{\Phi}^{\{0,2\}}+\dT{\Phi}^{\{2,0\}}\otimes\dT{h}^{2,0}\right)+2\alpha^{0,0}\qT{P}^{(2,0)}\\
&=&\qT{C}^{2,2}+\left(\dT{h}^{2,0}\otimes\id+\id\otimes\dT{h}^{2,0}\right)+2\alpha^{0,0}\qT{P}^{(2,0)} \nonumber
\end{eqnarray}
in which $(\qT{C}^{2,2},\dT{h}^{2,0},\alpha^{0,0})$ are elements of $(\KK^2\otimes^s\KK^2)\times \KK^2\times\KK^0$.
Those elements are defined from $\qT{C}$ as follows:
\ben
\begin{cases}
\qT{C}^{2,2}=\qT{P}^{(2,2)}:\qT{C}:\qT{P}^{(2,2)},\\
\dT{h}^{2,0}=\qT{P}^{(2,2)}:\qT{C}:\dT{\Phi}^{\{2,0\}}=\frac{1}{2}\qT{P}^{(2,2)}:\qT{C}:\id,\\
 \alpha^{0,0}=\frac{1}{2}\qT{C}\qc\qT{P}^{(2,0)}=\frac{1}{4}\id:\qT{C}:\id,
\end{cases}
\een
where  $\dT{\Phi}^{\{2,0\}}$ is defined in Proposition \ref{thm:DecT2}. 
\end{prop}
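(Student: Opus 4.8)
The plan is to build the decomposition from the resolution of the identity $\qT{I}^{\sdef}=\qT{P}^{(2,2)}+\qT{P}^{(2,0)}$ furnished by Step 1. Regarding $\qT{C}$ as an element of $\mathcal{L}^s(\sdef,\sdef)$, I would write $\qT{C}=\qT{I}^{\sdef}:\qT{C}:\qT{I}^{\sdef}$ and expand bilinearly into four terms
\[
\qT{C}=\sum_{a,b\in\{0,2\}}\qT{P}^{(2,a)}:\qT{C}:\qT{P}^{(2,b)}.
\]
These four summands constitute the Intermediate Block Decomposition. Their uniqueness is immediate and requires no Clebsch-Gordan input: the $\qT{P}^{(2,k)}$ are mutually orthogonal idempotents that sum to the identity, so each summand is the unique component of $\qT{C}$ in the corresponding block $\mathcal{L}(\HH^{(2,b)},\HH^{(2,a)})$. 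It then remains to identify each block with one of the parameters $(\qT{C}^{2,2},\dT{h}^{2,0},\alpha^{0,0})$.

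The two diagonal blocks are handled directly. For the $(2,2)$-block I set $\qT{C}^{2,2}=\qT{P}^{(2,2)}:\qT{C}:\qT{P}^{(2,2)}$; since each leg is projected onto $\HH^{(2,2)}\simeq\KK^2$ and $\qT{C}$ is self-adjoint, this block belongs to $\KK^2\otimes^s\KK^2$. For the $(0,0)$-block I would insert the explicit rank-one form $\qT{P}^{(2,0)}=\tfrac12\,\id\otimes\id$ from \eqref{eq:P0}; a one-line index contraction gives $\qT{P}^{(2,0)}:\qT{C}:\qT{P}^{(2,0)}=\tfrac14(\id:\qT{C}:\id)\,\id\otimes\id=2\alpha^{0,0}\qT{P}^{(2,0)}$, which reads off the scalar $\alpha^{0,0}=\tfrac14\,\id:\qT{C}:\id\in\KK^0$.

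The off-diagonal blocks are the delicate step, since I must show that they factor through the one-dimensional space $\KK^0$ and reproduce the prefactor $2$. Writing $\qT{P}^{(2,0)}=2\,\dT{\Phi}^{\{2,0\}}\otimes\dT{\Phi}^{\{0,2\}}$ as in \eqref{eq:P0}, contracting $\qT{C}$ against the inner factor $\dT{\Phi}^{\{2,0\}}$ and projecting the surviving leg with $\qT{P}^{(2,2)}$ yields
\[
\qT{P}^{(2,2)}:\qT{C}:\qT{P}^{(2,0)}=2\,\bigl(\qT{P}^{(2,2)}:\qT{C}:\dT{\Phi}^{\{2,0\}}\bigr)\otimes\dT{\Phi}^{\{0,2\}}=2\,\dT{h}^{2,0}\otimes\dT{\Phi}^{\{0,2\}},
\]
so the factor $2$ is produced directly by the normalization of $\qT{P}^{(2,0)}$, and $\dT{h}^{2,0}=\qT{P}^{(2,2)}:\qT{C}:\dT{\Phi}^{\{2,0\}}$ lies in $\KK^2$ because $\qT{P}^{(2,2)}$ projects onto $\HH^{(2,2)}$. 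The major symmetry of $\qT{C}$ makes the $(0,2)$-block the transpose of this one, giving the symmetric partner $2\,\dT{\Phi}^{\{2,0\}}\otimes\dT{h}^{2,0}$.

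Collecting the four identified blocks reproduces the first line of \eqref{eq:DecC}, and substituting $\dT{\Phi}^{\{2,0\}}=\dT{\Phi}^{\{0,2\}}=\tfrac12\,\id$ from Proposition \ref{thm:DecT2} collapses it to the simplified second line. The only real difficulty I anticipate is bookkeeping: the scaling conventions $\dT{\Phi}^{\{2,0\}}=\tfrac12\id$, $\dT{\Pi}^{\{0,2\}}=\id$ and $\qT{P}^{(2,0)}=\tfrac12\,\id\otimes\id$ must be tracked consistently so that the coefficients $2$ and $\tfrac12$ in the two equivalent forms of \eqref{eq:DecC} match exactly.
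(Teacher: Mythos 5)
Your proof is correct and follows essentially the same route as the paper's: both split $\qT{C}$ into the four blocks $\qT{P}^{(2,a)}:\qT{C}:\qT{P}^{(2,b)}$, $a,b\in\{0,2\}$, and then use the rank-one factorization $\qT{P}^{(2,0)}=2\,\dT{\Phi}^{\{2,0\}}\otimes\dT{\Phi}^{\{0,2\}}$ together with the major symmetry of $\qT{C}$ to identify the spherical and coupling blocks with $2\alpha^{0,0}\qT{P}^{(2,0)}$ and $2\,\dT{h}^{2,0}\otimes\dT{\Phi}^{\{0,2\}}$, $2\,\dT{\Phi}^{\{2,0\}}\otimes\dT{h}^{2,0}$. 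The only difference is presentational: the paper obtains the block structure by decomposing the constitutive relation $\dt^{\star}=\qT{C}:\dt$ on state tensors and then reading off the operator from its action on an arbitrary $\dt$, whereas you sandwich $\qT{C}$ directly between two copies of the resolved identity $\qT{I}^{\sdef}=\qT{P}^{(2,2)}+\qT{P}^{(2,0)}$, which is a compressed form of the same argument and has the minor merit of making the uniqueness claim (orthogonal idempotents summing to the identity) explicit where the paper leaves it implicit.
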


\begin{proof}
Let consider $\dt$ and $\dt^\star$ in $\sdef$ such as $\dt^\star=\qT{C}:\dt$. The elements $\dt$ and $\dt^\star$ can be decomposed as
\ben
\dt=\dd+\ds,\ \dt^\star=\dd^\star+\ds^\star.
\een
Using the projectors  $(\qT{P}^{(2,2)},\qT{P}^{(2,0)})$, the following relations can be obtained:
\ben
\dt^\star=\qT{P}^{(2,2)}\dc\dt^\star+\qT{P}^{(2,0)}\dc\dt^\star\quad\text{and} \quad
\dt^\star=(\qT{C}:\qT{P}^{(2,2)})\dc\dt+(\qT{C}:\qT{P}^{(2,0)})\dc\dt.
\een
Through their combination the constitutive law  $\dt^\star=\qT{C}:\dt$ can be expressed as
\ben
\begin{cases}
\dd^{\star}=\qT{C}^{2,2}\dc\dd +\qT{C}^{2,0}\dc\ds\\
\ds^{\star}=\qT{C}^{0,2}\dc\dd +\qT{C}^{0,0}\dc\ds\\
\end{cases}
\quad\text{or using "matrix" notation}\quad
\begin{pmatrix}
\dd^{\star}\\
\ds^{\star}\\
\end{pmatrix}
=
\begin{pmatrix}
\qT{C}^{2,2} &\qT{C}^{2,0}  \\
\qT{C}^{0,2} &\qT{C}^{0,0} \\
\end{pmatrix}
\begin{pmatrix}
\dd\\
\ds\\
\end{pmatrix},
\een
in which:
\ben
\begin{cases}
\qT{C}^{2,2}=\qT{P}^{(2,2)}:\qT{C}:\qT{P}^{(2,2)},\\
\qT{C}^{2,0}=\qT{P}^{(2,2)}:\qT{C}:\qT{P}^{(2,0)},
\\
\qT{C}^{0,2}=\qT{P}^{(2,0)}:\qT{C}:\qT{P}^{(2,2)},
\\
\qT{C}^{0,0}=\qT{P}^{(2,0)}:\qT{C}:\qT{P}^{(2,0)}.
\end{cases}
\een
Expressed under this form, the symmetry of the constitutive law is not obvious. Using the relation
\ben
\qT{P}^{(2,0)}=2\dT{\Phi}^{\{2,0\}}\otimes\dT{\Phi}^{\{0,2\}}=\frac{1}{2}\id\otimes\id
\een
the previous relationships can be rewritten as follows
\ben
\begin{cases}
\qT{C}^{2,2}=\qT{P}^{(2,2)}:\qT{C}:\qT{P}^{(2,2)},\\
\qT{C}^{2,0}=2(\qT{P}^{(2,2)}:\qT{C}:\dT{\Phi}^{\{2,0\}})\otimes\dT{\Phi}^{\{0,2\}}=2\dT{h}^{2,0}\otimes\dT{\Phi}^{\{0,2\}},\\
\qT{C}^{0,2}=2\dT{\Phi}^{\{2,0\}}\otimes(\dT{\Phi}^{\{0,2\}}:\qT{C}:\qT{P}^{(2,2)})=2\dT{\Phi}^{\{2,0\}}\otimes\dT{h}^{2,0},\\
\qT{C}^{0,0}=2\alpha^{0,0}\qT{P}^{(2,0)},
\end{cases}
\een
where the following intermediate quantities have been introduced:
\ben
\dT{h}^{2,0}=\qT{P}^{(2,2)}:\qT{C}:\dT{\Phi}^{\{2,0\}},\quad \alpha^{0,0}=\dT{\Phi}^{\{0,2\}}:\qT{C}:\dT{\Phi}^{\{2,0\}}=\frac{1}{2}\qT{P}^{(2,0)}\qc\qT{C}.
\een
Therefore, we obtain:
\ban
\dt^{\star}&=&\dd^{\star}+\ds^{\star}\\
&=&\left(\qT{C}^{2,2}+2\dT{h}^{2,0}\otimes\dT{\Phi}^{\{0,2\}}\right):\dd+\left(2\dT{\Phi}^{\{2,0\}}\otimes\dT{h}^{2,0} +2\alpha^{0,0}\qT{P}^{(2,0)}\right):\ds\\
&=&\left(\qT{C}^{2,2}+2\dT{h}^{2,0}\otimes\dT{\Phi}^{\{0,2\}}\right):\left(\qT{P}^{(2,2)}:\dt\right)+\left(2\dT{\Phi}^{\{2,0\}}\otimes\dT{h}^{2,0} +2\alpha^{0,0}\qT{P}^{(2,0)}\right):\left(\qT{P}^{(2,0)}:\dt\right)\\
&=&\left(\qT{C}^{2,2}+2\left(\dT{\Phi}^{\{2,0\}}\otimes\dT{h}^{2,0}+\dT{h}^{2,0}\otimes\dT{\Phi}^{\{0,2\}}\right) +2\alpha^{0,0}\qT{P}^{(2,0)}\right):\dt
\ean
and hence, by identification:
\ben
\qT{C}=\qT{C}^{2,2}+2\left(\dT{\Phi}^{\{2,0\}}\otimes\dT{h}^{2,0}+\dT{h}^{2,0}\otimes\dT{\Phi}^{\{0,2\}}\right) +2\alpha^{0,0}\qT{P}^{(2,0)}.
\een
\end{proof}

\subsection{Step 3: Clebsch-Gordan Harmonic Decomposition of $\Elaq$}

The \red{Intermediate Block Decomposition} given by Equation \eqref{eq:DecC}  for the elasticity tensor is not irreducible, meaning that some of its components can further be decomposed.
To determine which components can be reduced and how to reduce them, the Clebsch-Gordan formula is essential.
By construction, $\qT{C}^{2,2}\in\KK^2\otimes^s\KK^2$, $\dT{h}^{2,0}\in\KK^2\otimes\KK^0$ and $\alpha^{0,0}\in\KK^0\otimes^s\KK^0$.  By applying the Clebsch-Gordan formula (cf. Lemma \ref{lem-Gordan}) to each one of these spaces we obtain:
\ben
\KK^2\otimes\KK^0\simeq\KK^2, \quad
\KK^0\otimes^s\KK^0\simeq\KK^0, \quad
\KK^2\otimes^s\KK^2\simeq \KK^4\oplus\KK^0,
\een
which indicates that:
\begin{itemize}
\item the components $(\dT{h}^{2,0},\alpha^{0,0})\in \KK^2\times\KK^0$ are already irreducible;
\item  the component $\qT{C}^{2,2}$ is reducible and can be decomposed into a scalar and a fourth-order harmonic tensor.
\end{itemize}
\noindent To proceed the decomposition of $\qT{C}^{2,2}$ consider the following lemma which is a direct application of Theorem \ref{thm:HarDom}:
\begin{lem}\label{lem:K2sK2}
Tensors $\qT{T}^{2,2}\in\KK^2\otimes^s\KK^2$ admit the uniquely defined harmonic decomposition
\ben
\qT{T}^{2,2}=\qT{H}+\frac{\alpha}{2}\qT{P}^{(2,2)},\quad\text{with}\  \qT{H}\in\KK^4,\ \alpha\in\RR
\een
conversely
\ben
\alpha=\qT{T}^{2,2}\qc\qT{P}^{(2,2)},\quad \qT{H}=\qT{T}^{2,2}-\frac{\alpha}{2}\qT{P}^{(2,2)}.
\een
\end{lem}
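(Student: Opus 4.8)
The plan is to realise explicitly the abstract isomorphism $\KK^2\otimes^s\KK^2\simeq\KK^4\oplus\KK^0$ furnished by the Clebsch--Gordan formula (Lemma \ref{lem-Gordan}). Since the two summands carry harmonic tensors of \emph{distinct} orders, the isotypic components are uniquely determined, so the only real task is to produce the two pieces concretely. The $\KK^0$-isotypic line is spanned by an isotropic element of $\KK^2\otimes^s\KK^2$, and the projector $\qT{P}^{(2,2)}$ is exactly such an element: it is pair-symmetric and, as one reads off from \eqref{eq:P2}, trace-free on each pair, so $\qT{P}^{(2,2)}\in\KK^2\otimes^s\KK^2$. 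It therefore plays the role of the harmonic embedding of $\KK^0$ into $\KK^2\otimes^s\KK^2$.

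First I would extract the scalar part through the only $\ode$-invariant linear functional available, namely the full contraction against the isotropic tensor $\qT{P}^{(2,2)}$: set $\alpha:=\qT{T}^{2,2}\qc\qT{P}^{(2,2)}$ and $\qT{H}:=\qT{T}^{2,2}-\frac{\alpha}{2}\qT{P}^{(2,2)}$. Using the normalisation $\qT{P}^{(2,2)}\qc\qT{P}^{(2,2)}=\dim(\KK^2)=2$ recorded in the Remark following Proposition \ref{thm:DecT2}, a one-line computation gives
\[
\qT{H}\qc\qT{P}^{(2,2)}=\alpha-\frac{\alpha}{2}\,\bigl(\qT{P}^{(2,2)}\qc\qT{P}^{(2,2)}\bigr)=\alpha-\alpha=0 .
\]
Thus $\qT{H}$ belongs to $\KK^2\otimes^s\KK^2$ and is orthogonal, for the $\qc$ pairing, to the $\KK^0$-isotypic line $\RR\,\qT{P}^{(2,2)}$.

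Next I would argue that this orthogonality forces $\qT{H}\in\KK^4$. Writing the (unique) isotypic splitting $\qT{H}=c\,\qT{P}^{(2,2)}+\qT{H}_4$ with $\qT{H}_4\in\KK^4$, the distinct-order components are mutually orthogonal, so $\qT{H}_4\qc\qT{P}^{(2,2)}=0$ and hence $0=\qT{H}\qc\qT{P}^{(2,2)}=2c$, giving $c=0$ and $\qT{H}\in\KK^4$. Uniqueness is then immediate: any decomposition $\qT{T}^{2,2}=\qT{H}'+\frac{\alpha'}{2}\qT{P}^{(2,2)}$ with $\qT{H}'\in\KK^4$ satisfies $\qT{H}'\qc\qT{P}^{(2,2)}=0$, so contracting against $\qT{P}^{(2,2)}$ recovers $\alpha'=\alpha$ and then $\qT{H}'=\qT{H}$. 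This is precisely the harmonic-embedding picture of Theorem \ref{thm:HarEmb} applied with embedding $\mathrm{\Phi}=\frac{1}{2}\qT{P}^{(2,2)}$ and constant $\gamma=\frac12$, which is why the statement is announced as a direct application of that machinery.

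The only genuinely non-computational point — and the step I expect to be the main obstacle — is the claim that orthogonality to $\qT{P}^{(2,2)}$ carves out exactly $\KK^4$ inside $\KK^2\otimes^s\KK^2$. This rests on two facts: that inequivalent $\ode$-irreducibles (here $\KK^4$ and $\KK^0$) are orthogonal for the invariant $\qc$ pairing (a Schur-type statement, valid because $\ode$ is compact and $\qc$ is the invariant Frobenius product), and that the $\KK^0$-isotypic subspace is the single line $\RR\,\qT{P}^{(2,2)}$ (multiplicity one in Lemma \ref{lem-Gordan}). If one prefers to avoid representation-theoretic input, the same conclusion can be reached by brute force: writing $\qT{T}^{2,2}$ as a symmetric combination of tensor products of deviators in $\KK^2$ and checking directly that the corrected tensor $\qT{H}$ is totally symmetric and trace-free, at the cost of a short but mechanical index computation.
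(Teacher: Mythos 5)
Your proof is correct and takes essentially the same route as the paper: there, the lemma is obtained as a direct application of Theorem \ref{thm:HarDom}, whose proof likewise invokes the Clebsch--Gordan structure $\KK^2\otimes^s\KK^2\simeq\KK^4\oplus\KK^0$, identifies the one-dimensional isotropic (i.e. $\KK^0$-isotypic) line with $\RR\,\qT{P}^{(2,2)}$, and fixes the scaling via $\qT{P}^{(2,2)}\qc\qT{P}^{(2,2)}=\dim(\KK^2)=2$ together with the orthogonality of the $\KK^4$ part to the isotropic line. Your explicit construction of $\alpha$ and $\qT{H}$ and the uniqueness check amount to the specialisation of that argument to $n=2$, so nothing is missing.
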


%
\begin{rem}
The structure of the projections is summed up in the following diagram:
\begin{equation}\label{eq:diagComm}
\xymatrix @!0 @R=3cm @C=4cm{
    \KK^2\otimes^s\KK^2\ar[r]^{\frac{1}{2}\qT{P}^{(2,2)}\otimes\qT{P}^{(2,2)}} & \HH^{(4,0)}\ar@<2pt>[d]^{\qT{\Pi}^{\{0,4\}}=\qT{P}^{(2,2)}} \\
     \KK^{0} \ar[r]^{1} & \KK^{0}\ar@<2pt>[u]^{\qT{\Phi}^{\{4,0\}}=\frac{1}{2}\qT{P}^{(2,2)}}
  }
\end{equation}
It can be observed that the method provides the intrinsic expression of the projector $\vP^{(4,0)}$ from  $\KK^2\otimes^s\KK^2$ onto $\HH^{(4,0)}$:
\ben
\vP^{(4,0)}=\frac{1}{2}\qT{P}^{(2,2)}\otimes\qT{P}^{(2,2)}.
\een
\end{rem}
The insertion of  the result of Lemma \ref{lem:K2sK2} in the \red{Intermediate Block Decomposition} demonstrates the following proposition:
\begin{prop}
The tensor $\qT{C}\in\Elaq$ admits the uniquely defined Clebsch-Gordan Harmonic Decomposition associated to the family of projectors $(\qT{P}^{(2,2)},\qT{P}^{(2,0)})$:
\begin{equation}\label{eq:DecHarEla}
\qT{C}=\qT{H}^{2,2}+\frac{\alpha^{2,2}}{2}\qT{P}^{(2,2)}+2\left(\dT{h}^{2,0}\otimes\dT{\Phi}^{\{0,2\}}+\dT{\Phi}^{\{2,0\}}\otimes\dT{h}^{2,0}\right)+2\alpha^{0,0}\qT{P}^{(2,0)}
\end{equation}
in which $(\qT{H}^{2,2},\dT{h}^{2,0}, \alpha^{2,2}, \alpha^{0,0})$ are elements of $\KK^4\times \KK^2\times\KK^0 \times \KK^0$ defined from $\qT{C}$ as follows:
\begin{center}
\begin{tabular}{|c|c|c|}
  \hline
 $\KK^{0}$ & $\KK^{2}$ & $\KK^{4}$  \\
  \hline
  $\alpha^{0,0}=\qT{P}^{(2,0)}\qc\qT{C}$  & &  \\ 
                                                              &$\dT{h}^{2,0}=\qT{P}^{(2,2)}:\qT{C}:\dT{\Phi}^{\{2,0\}}$ & \\
  $\alpha^{2,2}=\qT{C}^{2,2}::\qT{P}^{(2,2)}$&    &$\qT{H}^{2,2}=\qT{C}^{2,2}-\frac{\alpha^{2,2}}{2}\qT{P}^{(2,2)}$\\                                              
  \hline
\end{tabular}
\end{center}
where $\qT{C}^{2,2}=\qT{P}^{(2,2)}:\qT{C}:\qT{P}^{(2,2)}$. The projectors  $\qT{P}^{(2,2)}$ and $\qT{P}^{(2,0)}$ are defined in Equations \eqref{eq:P2} and \eqref{eq:P0} and $\dT{\Phi}^{\{2,0\}}$ is defined in Proposition \ref{thm:DecT2}.
\end{prop}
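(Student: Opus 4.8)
The plan is to obtain the claimed decomposition by inserting the harmonic splitting of the single reducible block into the Intermediate Block Decomposition already established in Proposition~\ref{prop:DecC}, so that the statement follows as a direct combination of that proposition with Lemma~\ref{lem:K2sK2}; no new computation is genuinely required.

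First I would recall the Intermediate Block Decomposition of Proposition~\ref{prop:DecC},
\[
\qT{C}=\qT{C}^{2,2}+2\left(\dT{h}^{2,0}\otimes\dT{\Phi}^{\{0,2\}}+\dT{\Phi}^{\{2,0\}}\otimes\dT{h}^{2,0}\right)+2\alpha^{0,0}\qT{P}^{(2,0)},
\]
in which $\qT{C}^{2,2}=\qT{P}^{(2,2)}:\qT{C}:\qT{P}^{(2,2)}\in\KK^2\otimes^s\KK^2$ (the symmetry of this block being inherited from the self-adjointness of $\qT{C}$), while $\dT{h}^{2,0}\in\KK^2$ and $\alpha^{0,0}\in\KK^0$. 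By the Clebsch-Gordan analysis preceding the statement, the only block whose ambient space is reducible is $\qT{C}^{2,2}$, since $\KK^2\otimes^s\KK^2\simeq\KK^4\oplus\KK^0$, whereas the coupling and spherical pieces already live in the irreducible spaces $\KK^2$ and $\KK^0$.

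Next I would apply Lemma~\ref{lem:K2sK2} to $\qT{C}^{2,2}$, writing
\[
\qT{C}^{2,2}=\qT{H}^{2,2}+\frac{\alpha^{2,2}}{2}\qT{P}^{(2,2)},\qquad \qT{H}^{2,2}\in\KK^4,\quad \alpha^{2,2}=\qT{C}^{2,2}\qc\qT{P}^{(2,2)}\in\RR,
\]
and substitute this into the Intermediate Block Decomposition. The substitution is a plain replacement of the first summand and yields Equation~\eqref{eq:DecHarEla} verbatim; reading off the components then reproduces the table, with $\alpha^{0,0}$ and $\dT{h}^{2,0}$ carried over unchanged from Proposition~\ref{prop:DecC} and $(\qT{H}^{2,2},\alpha^{2,2})$ supplied by Lemma~\ref{lem:K2sK2}. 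The $\ode$-equivariance of the decomposition is automatic, since each component is built from the $\ode$-equivariant projectors and embeddings $(\qT{P}^{(2,2)},\qT{P}^{(2,0)},\dT{\Phi}^{\{2,0\}})$, so nothing beyond the two cited results has to be checked for that part.

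The only point deserving care is the asserted \emph{uniqueness}, and this is where I expect the main (mild) obstacle to lie. Uniqueness of the Intermediate Block Decomposition is fixed by Proposition~\ref{prop:DecC} once the projector family is chosen, and the further splitting of $\qT{C}^{2,2}$ is unique by Lemma~\ref{lem:K2sK2}; the potential ambiguity is that the harmonic structure contains two scalar slots, $\alpha^{0,0}$ and $\alpha^{2,2}$, which are abstractly isomorphic copies of $\KK^0$. I would therefore emphasize that they are unambiguously separated by construction: $\alpha^{0,0}$ is extracted from the purely spherical block $\qT{P}^{(2,0)}:\qT{C}:\qT{P}^{(2,0)}$, whereas $\alpha^{2,2}$ is the $\KK^0$ part of the purely deviatoric block $\qT{C}^{2,2}$, so the choice of State Tensor Harmonic Decomposition pins down which scalar is which. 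Thus the real work is bookkeeping rather than algebra: verifying that the numerical coefficients ($\tfrac12$ in front of $\alpha^{2,2}\qT{P}^{(2,2)}$ and the factor $2$ in the coupling and spherical terms) propagate correctly through the substitution, and confirming that the two $\KK^0$ components are genuinely distinguished so that the claimed uniqueness is legitimate.
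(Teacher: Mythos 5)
Your proposal is correct and follows exactly the paper's own route: the paper proves this proposition in one line, by inserting Lemma \ref{lem:K2sK2} into the Intermediate Block Decomposition of Proposition \ref{prop:DecC}, which is precisely your substitution step. Your extra remarks on $\ode$-equivariance and on how the two $\KK^0$ slots $\alpha^{0,0}$ and $\alpha^{2,2}$ are distinguished by the choice of projector family merely make explicit what the paper leaves implicit (cf.\ its footnote on uniqueness of the CGHD).
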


\subsection{Link with symmetry classes}

\red{The symmetry classes\footnote{\red{The \emph{symmetry class} of a tensor $\vT$ is the conjugacy class, in $\ode$, of its symmetry group $G_{\vT}:=\{\vG\in\ode,\ \vG\star\vT=\vT\}$.
The interested reader can find more details concerning this concept in the following references \cite{FV96,AKO16}.}} of bidimensional elasticity tensor have been determined in \cite{HZ96,BOR96,Via97}. The space $\Elaq$ is divided into the following $4$ symmetry classes:
\ben
\SymC(\Elaq) = \{[\ZZ_2], [\DD_2], [\DD_4], [\ode] \},
\een
in which $\SymC(\TT)$ is the set of symmetry classes of the space $\TT$.
These symmetry classes can be connected in the way indicated by the following diagram:
\ben
\xymatrix{
[\ZZ_2] \ar[r] & [\DD_2] \ar[r] & [\DD_4] \ar[r] & [\ode]
}
\een
As discussed in \cite{AKO16}, for a $\ZZ_{p}$-invariance, only the harmonic tensors of the order $kp, k\in\mathbb{N}$, are non-zero. The invariance with respect to $\ZZ^{\pi}_2$ cancels, as for it, the pseudo-scalars present in the decomposition. Consequently, the proposed harmonic decomposition becomes simpler as the number of symmetries increases :
\begin{itemize}
\item Symmetry class $[\ZZ_2], [\DD_2]$: the full expression; 
\item Symmetry class $[\DD_4]$:
\ben
\qT{C}=\qT{H}^{2,2}+\frac{\alpha^{2,2}}{2}\qT{P}^{(2,2)}+2\alpha^{0,0}\qT{P}^{(2,0)};
\een
\item Symmetry class $[\ode]$:
\ben
\qT{C}=\frac{\alpha^{2,2}}{2}\qT{P}^{(2,2)}+2\alpha^{0,0}\qT{P}^{(2,0)}.
\een
\end{itemize}
As discussed in \cite{Via97,AR16,DA19}, these simplifications are related to relations between invariants of the studied tensor. In these references, it is indicated, for instance, how to distinguish $[\ZZ_2]$ symmetry class from the $[\DD_2]$ one in terms of the harmonic tensors. For the sake of brevity, this aspect will not be  discussed further here. 
}

\section{Application to strain-gradient elasticity}

In this last section we apply the proposed methodology to the  fifth- and sixth-order elasticity tensors involved in strain-gradient elasticity.
To proceed these decompositions, in accordance with our method, the first step  consists in decomposing the \emph{state space} $\TT_{(ij)k}$ into a direct sum of $\ode$-irreducible spaces. 

\subsection{Decomposition of the state tensor space $\sgrd$}\label{ss:ProjT3}

The space $\sgrd$ has the following harmonic structure:
\ben\label{eq:HarDec2}
\sgrd\simeq \KK^{3}\oplus2\KK^{1}.
\een
Due to the multiplicity of  $\KK^{1}$ in the harmonic structure, the explicit harmonic decomposition is not uniquely defined \cite{GSS89}. The tensor in $\KK^{3}$  is canonically defined but there are multiple possibilities concerning the decomposition of the vector parts. Among the different possibilities, some have more physical content than others.  The one considered here consists in partitioning $\sgrd$ into a totally symmetric tensor ($\tT{S}\in\sym^3$) and a remainder, before proceeding to the harmonic decomposition of each part separately\footnote{Unlike the decomposition of  $\sdef$ which was direct, the decomposition of $\sgrd$ is two-step: 1) a splitting according to index symmetries of the tensor space; 2) the harmonic decomposition of the symmetric elementary part. This approach can be formalised (\emph{Schur-Weyl Harmonic Decomposition}), however this is not the subject of this contribution.}. \red{Another decomposition is proposed in Appendix \ref{sec:TypeII}.}
The process of the decomposition is described in the following diagram:
\ben
\xymatrix{
       &  \tT{T}\in\sgrd \ar[ld]_{\mathrm{Sym}}\ar[rd]^{\mathrm{Id}-\mathrm{Sym}} &  \\
  \tT{S}\in\sym^3\ar[d]^{\mathcal{H}}  &    & \tT{R}\in\HH^{r(3,1)} \ar[d]^{\mathcal{H}} \\
  (\tT{H},\V{v}^s)\in(\KK^{3}\times\KK^{1})&&\V{v}^r\in\KK^{1}}
\een
where $\mathrm{Sym}$ and $\mathcal{H}$ stand for the symmetrization and the harmonic decomposition processes, respectively. The space $\HH^{r(3,1)}$ appearing in this diagram is defined as
\ben
\HH^{r(3,1)}:=\{\tT{T}\in\sgrd |\ T_{ijk}+T_{jki}+T_{ikj}=0 \},\quad \dim\left(\HH^{r(3,1)}\right)=2.
\een
 In the strain-gradient literature $\tT{S}$ describes  the \emph{stretch-gradient part} of the strain-gradient tensor, while  $\tT{R}$ is the \emph{rotation-gradient} \cite{Min64,ME68}.

We have the following result:
\begin{thm}[\textbf{Harmonic decomposition of} $\TT_{(ij)k}$]\label{thm:DecT3}
There exists an $\ode$-equivariant isomorphism between $\TT_{(ij)k}$ and $\KK^{3} \oplus \KK^{1} \oplus \KK^{1}$ such that for  $\tT{H}\in \KK^{3}$ and $ (\V{v}^s, \V{v}^r) \in\KK^{1} \times\KK^{1}$,
\begin{equation}\label{eq:DecT3}
\tT{T} = \tT{H} + \qT{\Phi}^{s \{3,1\}} \udot \V{v}^s + \qT{\Phi}^{r \{3,1\}} \udot \V{v}^r,
\end{equation}
with $\Big(\qT{\Phi}^{s\{3,1\}},\qT{\Phi}^{r\{3,1\}} \Big)$ the harmonic embeddings of the form:
\begin{equation*}
\qT{\Phi}^{s\{3,1\}} = \frac{1}{4} \left(\qT{i}^{(4)}_1 + \qT{i}^{(4)}_2 + \qT{i}^{(4)}_3 \right) \quad;\quad
\qT{\Phi}^{r\{3,1\}} = \frac{1}{3} \left(2\qT{i}^{(4)}_1 - \qT{i}^{(4)}_2 - \qT{i}^{(4)}_3 \right),
\end{equation*}
in which $(\qT{i}^{(4)}_1, \qT{i}^{(4)}_2,\qT{i}^{(4)}_3)$ are the fourth-order elementary  isotropic tensors defined by Equation \eqref{eq:TIso4}.
Conversely, for any $\tT{T} \in \TT_{(ij)k}$,  $(\tT{H}, \V{v}^s, \V{v}^r) \in \KK^{3}\times\KK^{1} \times\KK^{1}$ are  defined from $\tT{T}$ as follows:
\begin{center}
\begin{tabular}{|c|c|}
  \hline
 $\KK^{1}$ & $\KK^{3}$  \\ \hline
 $ \V{v}^r=\qT{\Pi}^{r\{1,3\}}\tc\tT{T}$ &\\
 $ \V{v}^s=\qT{\Pi}^{s\{1,3\}}\tc\tT{T}$ &  $\tT{H}= \tT{T}- \qT{\Phi}^{s \{3,1\}} \udot \V{v}^s - \qT{\Phi}^{r \{3,1\}} \udot \V{v}^r$ \\
  \hline
\end{tabular}
\end{center}
with $\Big(\qT{\Pi}^{s\{1,3\}},\qT{\Pi}^{r\{1,3\}} \Big)$ the harmonic projectors of the form:
\begin{equation*}
\qT{\Pi}^{s\{1,3\}}=\frac{1}{3}\left(\qT{i}^{(4)}_3+\qT{i}^{(4)}_2+\qT{i}^{(4)}_1\right) \quad;\quad
\qT{\Pi}^{r\{1,3\}}=\frac{1}{2}\left(2\qT{i}^{(4)}_3-\qT{i}^{(4)}_2-\qT{i}^{(4)}_1\right).
\end{equation*}
\end{thm}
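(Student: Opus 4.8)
The plan is to follow the two-stage strategy encoded in the diagram preceding the statement: first split $\TT_{(ij)k}$ by index symmetry into a totally symmetric piece and a remainder, and only then perform the harmonic decomposition of each piece. Concretely, I would write $\tT{T} = \tT{S} + \tT{R}$ with $\tT{S} = \mathrm{Sym}(\tT{T}) \in \sym^3$ and $\tT{R} = \tT{T} - \tT{S}$. The first thing to check is that this is an $\ode$-invariant direct sum: since the full symmetrization operator is a $\GS_3$-average it commutes with the diagonal $\ode$-action, so both $\sym^3$ and its complement are $\ode$-invariant, and the dimensions add up as $4 + 2 = 6 = \dim \TT_{(ij)k}$. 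A short index computation shows $S_{ijk} = \tfrac{1}{3}(T_{ijk}+T_{jki}+T_{ikj})$ for a tensor symmetric in $(ij)$, from which $R_{ijk}+R_{jki}+R_{ikj} = (T_{ijk}+T_{jki}+T_{ikj}) - 3S_{ijk} = 0$ follows at once; this is exactly the defining relation of $\HH^{r(3,1)}$, so $\tT{R}$ lands in the claimed remainder space.

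Next I would decompose each piece using Lemma \ref{lem-Gordan}. The totally symmetric part satisfies $\sym^3 \simeq \KK^3 \oplus \KK^1$, and its harmonic decomposition is the classical one: the vector $\V{v}^s$ is, up to normalization, the trace of $\tT{S}$ and $\tT{H} \in \KK^3$ is the traceless remainder. The space $\HH^{r(3,1)}$ is $2$-dimensional and $\ode$-invariant, hence isomorphic to a single $\KK^1$, so it is entirely parametrized by a vector $\V{v}^r$. Combining the two stages yields the announced splitting $\TT_{(ij)k} \simeq \KK^3 \oplus \KK^1 \oplus \KK^1$; $\ode$-equivariance is inherited because every operation used (symmetrization, traces, contraction with isotropic tensors) commutes with the $\ode$-action, and the map is an isomorphism by the matching dimension count.

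To obtain the \emph{explicit} embeddings I would invoke the harmonic-embedding machinery of Section 3. Any $\ode$-equivariant embedding $\KK^1 \to \TT_{(ij)k}$ is, by isotropy, a linear combination of the three elementary isotropic tensors $\qT{i}^{(4)}_p$ of \eqref{eq:TIso4} contracted against the vector, i.e. a combination of $\delta_{ij}v_k$, $\delta_{ik}v_j$ and $\delta_{jk}v_i$. I would then fix the two combinations by their target spaces: requiring total symmetry of the image singles out $\qT{\Phi}^{s\{3,1\}} = \tfrac{1}{4}(\qT{i}^{(4)}_1 + \qT{i}^{(4)}_2 + \qT{i}^{(4)}_3)$, while requiring the image to obey the relation $R_{ijk}+R_{jki}+R_{ikj}=0$ singles out $\qT{\Phi}^{r\{3,1\}} = \tfrac{1}{3}(2\qT{i}^{(4)}_1 - \qT{i}^{(4)}_2 - \qT{i}^{(4)}_3)$. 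The normalizations are then pinned down by Theorem \ref{thm:HarEmb} (equivalently Proposition \ref{prop:TrGam}), so that the projectors $\qT{\Pi}^{s\{1,3\}}$, $\qT{\Pi}^{r\{1,3\}}$ are genuine left inverses on $\KK^1$; in practice this reduces to evaluating the contractions $\qT{\Pi}^{\bullet\{1,3\}} \tc (\qT{\Phi}^{\bullet\{3,1\}} \udot \V{v})$ through the multiplication table of the $\qT{i}^{(4)}_p$.

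The step I expect to be the main obstacle is the bookkeeping forced by the multiplicity two of $\KK^1$. Because two isomorphic copies appear, the decomposition is fixed only once the symmetric/remainder splitting is imposed, and one must verify the \emph{cross-orthogonality} relations: each projector must annihilate both the image of the \emph{other} embedding and the harmonic part $\tT{H} \in \KK^3$ (the latter being automatic, since every term of the projectors carries a $\delta$ that meets the traceless tensor $\tT{H}$). It is this decoupling that guarantees the inversion formulas for $(\tT{H}, \V{v}^s, \V{v}^r)$ stated in the theorem and secures the uniqueness of this particular choice; once it is checked, the explicit formula \eqref{eq:DecT3} follows by assembling the pieces.
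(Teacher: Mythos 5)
Your proposal follows essentially the same route as the paper's proof: the same two-step splitting of $\TT_{(ij)k}$ into $\sym^3\oplus\HH^{r(3,1)}$ by index symmetrization, the same identification of the two embeddings of $\KK^1$ as isotropic fourth-order tensors pinned down by their target-space constraints (total symmetry, respectively the vanishing cyclic sum), and the same appeal to Theorem \ref{thm:HarEmb} to fix normalizations and produce the projectors $\qT{\Pi}^{s\{1,3\}}$ and $\qT{\Pi}^{r\{1,3\}}$. Your closing cross-orthogonality check (each projector annihilating $\KK^{3}$ and the image of the other embedding) is the one point the paper's proof leaves implicit — it is indeed what justifies applying the inversion formulas to the full tensor $\tT{T}$, and the paper only records it afterwards via the multiplication table of Proposition \ref{cor:Proj3} — but this is a refinement of the same argument, not a different approach.
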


%
\begin{rem}
In the decomposition given by Equation \eqref{eq:DecT3},  $\V{v}^s$ represents the vector part of the \emph{stretch-gradient tensor} and $\V{v}^r$ represents the vector part of the \emph{rotation-gradient tensor}.
\end{rem}

\begin{rem}\label{rem:TypeIa}
\red{The decomposition is only valid for the \emph{Type II} formulation of strain-gradient elasticity. To obtain the same expressions for the \emph{Type I} formulation, that is for the decomposition of $\TT_{i(jk)}$,  the definition of $\qT{\Phi}^{r\{3,1\}}$  must be replaced by the following $\qT{\Phi}^{\sharp\ r\{3,1\}}$ 
\begin{equation}\label{eq:RelTI}
\qT{\Phi}^{\sharp, r\{3,1\}} = \frac{1}{3} \left(2\qT{i}^{(4)}_3 - \qT{i}^{(4)}_1 - \qT{i}^{(4)}_2 \right).
\end{equation}
This substitution applied to all future formulas will make it possible to obtain decompositions of the operators of the \emph{Type I} formulation from those of the \emph{Type II} formulation.}
\end{rem}

\begin{proof}
The decomposition of $\tT{T} \in \TT_{(ij)k}$ is two-step: 1) $\tT{T}$ is decomposed according to its index symmetries, then 2) each elementary part is decomposed into harmonic tensors.\\
\noindent \textbf{Step 1: Index symmetry splitting}\\
Any tensor $\tT{T} \in \TT_{(ij)k}$ can be decomposed into a complete symmetric tensor $\tT{S} \in\sym^3$ and a remainder $\tT{V}^{r}$ as:
\begin{equation} \label{dec-third-order}
\tT{T} = \tT{S} + \tT{V}^{r} 
\end{equation}
with
\begin{equation} \label{def-two-parts}
S_{ijk} = \frac{1}{3} (T_{ijk} + T_{ikj} + T_{jki}) \quad \text{and} \quad V_{ijk}^{r} = \frac{1}{3} (2 T_{ijk} - T_{ikj} - T_{jki}).
\end{equation}
It is direct to check that the tensor $\tT{V}^{r}$ belongs to $\mathbb{H}^{r(3,1)}$ and that the spaces $\sym^3$ and $\mathbb{H}^{r(3,1)}$ are in direct sum. \\

\noindent \textbf{Step 2: Harmonic decompositions}\\
Let us decompose $\tT{S}\in\sym^3$ and  $\tT{V}^{r}\in \mathbb{H}^{r(3,1)}$ into harmonic tensors. We consider this last case first. The following map
\ben
v^r_{i}\mapsto V^r_{ijk}:=\frac{1}{3} (2 \delta_{ij} v_k^r - \delta_{ik} v_j^r - \delta_{jk} v_i^r)
\een
is an embedding of $\KK^1$ into $\mathbb{H}^{r(3,1)}$, which can be rewritten in an intrinsic form as 
\begin{equation} \label{def-Phir31}
\tT{V}^{r} = \qT{\Phi}^{r\{3,1\}} \udot \V{v}^{r} \quad \text{with} \quad \qT{\Phi}^{r\{3,1\}} = \frac{1}{3} \Big( 2 \qT{i}^{(4)}_1 - \qT{i}^{(4)}_2 - \qT{i}^{(4)}_3 \Big) \in \mathbb{T}_{(ij)kl}.
\end{equation}
Since $\dim (\mathbb{H}^{r(3,1)}) = 2 = \dim (\KK^1)$, there is no other harmonic embedding to consider. The expression of the  projection from $\mathbb{H}^{r(3,1)}$ to $\KK^1$ is obtained using  Theorem \ref{thm:HarEmb}:
\ben
\qT{\Pi}^{r\{1,3\}}=\frac{1}{2}\left(2\qT{i}^{(4)}_3-\qT{i}^{(4)}_2-\qT{i}^{(4)}_1\right).
\een
Now consider the decomposition of $\tT{S}\in\sym^3$ into the sum of harmonic tensors. Using Lemma \ref{lem-Gordan} the harmonic structure of $\sym^3$ is known to be isomorphic to $\KK^3\oplus\KK^1$. Let $\mathbb{H}^{s(3,1)}$ be the space  of totally symmetric third-order tensors which are orthogonal to tensors in $\KK^{3}$:
\begin{equation}
\mathbb{H}^{s(3,1)}: = \left \{\tT{T} \in \sym^3 \mid \forall \tT{H}\in\KK^{3}, \tT{H}\tc\tT{T} = 0 \right \}.
\end{equation}
Since $\dim(\sym^3) = 4$ and $\dim(\KK^3) = 2$,  we deduce that $\dim (\mathbb{H}^{s(3,1)})= 2$. 
The following map:
\ben
\qT{\Phi}^{s\{3,1\}} \colon \KK^1 \to \mathbb{H}^{s(3,1)},\quad v^s_{i}\mapsto V^s_{ijk}:=\frac{1}{4} ( \delta_{ij} v_k^r + \delta_{ik} v_j^r + \delta_{jk} v_i^r)
\een
which can be rewritten in an intrinsic form as 
\begin{equation} \label{def-Phis31}
\tT{V}^{s} = \qT{\Phi}^{s\{3,1\}} \udot \V{v}^{s} \quad \text{with} \quad \qT{\Phi}^{s\{3,1\}} = \frac{1}{4} \left( \qT{i}^{(4)}_1 + \qT{i}^{(4)}_2 + \qT{i}^{(4)}_3 \right) \in \sym^{4}
\end{equation}
is an embedding of $\KK^1$ into $\mathbb{H}^{s(3,1)}$.
A direct application of Theorem \ref{thm:HarEmb} provides the expression of  the  projection from $\mathbb{H}^{s(3,1)}$ onto  $\KK^1$:
\ben
\qT{\Pi}^{s\{1,3\}}=\frac{1}{3}\left(\qT{i}^{(4)}_3+\qT{i}^{(4)}_2+\qT{i}^{(4)}_1\right).
\een
The tensor $\tT{H} \in \KK^3$ is then directly deduced: 
\ben
\tT{H}=\tT{S}-\qT{\Phi}^{s\{3,1\}} \udot \V{v}^{s}.
\een
\end{proof}

From the embedding operators involved in Theorem \ref{thm:HarEmb} a family of projectors can be deduced. 
\begin{prop}\label{cor:Proj3}
 Let $\sT{I}^{\sgrd}=\frac{1}{2}\left(\sT{i}^{(6)}_8+\sT{i}^{(6)}_{12}\right)$ be the identity tensor on $\sgrd$. 
The following tensors
\ben
\sT{P}^{(3,1r)}:=\frac{3}{2}(\qT{\Phi}^{r\{3,1\}}\udot\qT{\Phi}^{r\{1,3\}}),\quad
\sT{P}^{(3,1s)}:=\frac{4}{3}(\qT{\Phi}^{s\{3,1\}}\udot\qT{\Phi}^{s\{1,3\}}),\quad
\sT{P}^{(3,3)}:=\sT{I}^{\sgrd}-\sT{P}^{(3,1s)}-\sT{P}^{(3,1r)},
\een
where $\qT{\Phi}^{s\{1,3\}}$ and $\qT{\Phi}^{r\{1,3\}}$ are the transposes of $\qT{\Phi}^{s\{3,1\}}$ and $\qT{\Phi}^{r\{3,1\}}$, constitute a family  of orthogonal projectors on $\HH^{r(3,1)}$, $\HH^{s(3,1)}$ and $\KK^{3}$, respectively.
\end{prop}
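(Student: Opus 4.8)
The plan is to read the two vector projectors off the general formula of Theorem~\ref{thm:HarEmb} and then obtain $\sT{P}^{(3,3)}$ by an orthogonal-complement argument. In the proof of Theorem~\ref{thm:DecT3}, $\qT{\Phi}^{s\{3,1\}}$ and $\qT{\Phi}^{r\{3,1\}}$ were shown to be harmonic embeddings of $\KK^1$ into $\HH^{s(3,1)}$ and $\HH^{r(3,1)}$ respectively, so Theorem~\ref{thm:HarEmb} applies verbatim with $n=3$, $k=1$ and $\rdots{1}=\udot$. It gives $\mathrm{P}^{(3,1s)}=\frac{1}{\gamma_s}\,\qT{\Phi}^{s\{3,1\}}\udot\qT{\Phi}^{s\{1,3\}}$ and $\mathrm{P}^{(3,1r)}=\frac{1}{\gamma_r}\,\qT{\Phi}^{r\{3,1\}}\udot\qT{\Phi}^{r\{1,3\}}$, which by \eqref{eq:Carac2} are exactly the self-adjoint idempotent projectors onto $\HH^{s(3,1)}$ and $\HH^{r(3,1)}$. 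The only genuine computation here is the two scaling constants: feeding a test vector $\V{v}\in\KK^1$ into $\gamma=\|\qT{\Phi}^{\{3,1\}}\udot\V{v}\|^2/\|\V{v}\|^2$ and contracting the resulting third-order tensors yields $\gamma_s=\frac{3}{4}$ and $\gamma_r=\frac{2}{3}$, hence $\frac{1}{\gamma_s}=\frac{4}{3}$ and $\frac{1}{\gamma_r}=\frac{3}{2}$, which are precisely the stated prefactors. This identifies $\sT{P}^{(3,1s)}$ and $\sT{P}^{(3,1r)}$ with $\mathrm{P}^{(3,1s)}$ and $\mathrm{P}^{(3,1r)}$ and settles the first two claims.

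Next I would prove that the three target subspaces are pairwise orthogonal, which is what upgrades the individual projectors to a genuine \emph{family}. The decisive fact is $\sym^3\perp\HH^{r(3,1)}$: for $\tT{S}\in\sym^3$ and $\tT{T}\in\HH^{r(3,1)}$ the total symmetry of $\tT{S}$ lets one relabel dummy indices so that $S_{ijk}T_{jki}=S_{ijk}T_{ikj}=S_{ijk}T_{ijk}$, whence $3\,\tT{S}\tc\tT{T}=S_{ijk}(T_{ijk}+T_{jki}+T_{ikj})=0$ by the defining relation of $\HH^{r(3,1)}$. Since $\HH^{s(3,1)}$ and $\HH^{(3,3)}\cong\KK^3$ are both subspaces of $\sym^3$, each is orthogonal to $\HH^{r(3,1)}$; and $\HH^{s(3,1)}\perp\HH^{(3,3)}$ holds by the very definition of $\HH^{s(3,1)}$ as the orthogonal complement of $\KK^3$ inside $\sym^3$. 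Because orthogonal projectors onto mutually orthogonal subspaces compose to zero, this gives $\sT{P}^{(3,1s)}\tc\sT{P}^{(3,1r)}=\sT{P}^{(3,1r)}\tc\sT{P}^{(3,1s)}=0$ (equivalently one may check directly that the mixed contraction $\qT{\Phi}^{s\{1,3\}}\tc\qT{\Phi}^{r\{3,1\}}$ vanishes).

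Finally, for $\sT{P}^{(3,3)}$ I would invoke the orthogonal direct-sum decomposition $\sgrd=\HH^{(3,3)}\oplus\HH^{s(3,1)}\oplus\HH^{r(3,1)}$ supplied by Theorem~\ref{thm:DecT3}. Since $\sT{P}^{(3,1s)}$ and $\sT{P}^{(3,1r)}$ are mutually orthogonal self-adjoint idempotents, their sum is the orthogonal projector onto $\HH^{s(3,1)}\oplus\HH^{r(3,1)}$, and therefore $\sT{P}^{(3,3)}=\sT{I}^{\sgrd}-\sT{P}^{(3,1s)}-\sT{P}^{(3,1r)}$ is the orthogonal projector onto the complementary summand $\HH^{(3,3)}\cong\KK^3$; it is automatically self-adjoint, idempotent, and orthogonal to the other two, completing the resolution of the identity on $\sgrd$.

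The main obstacle is only modest bookkeeping: the two norm computations pinning down $\gamma_s$ and $\gamma_r$, and the index-relabeling step giving $\sym^3\perp\HH^{r(3,1)}$. The latter is the one step not forced by $\ode$-equivariance alone, precisely because the two vector copies carry the same harmonic order $\KK^1$; everything else follows formally from Theorems~\ref{thm:HarEmb} and~\ref{thm:DecT3} and the elementary principle that orthogonal projectors onto an orthogonal family of subspaces spanning the ambient space form a resolution of the identity.
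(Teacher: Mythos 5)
Your proof is correct, but it follows a genuinely different route from the paper's. The paper's proof is a single direct computation: it exhibits the multiplication table of the triple $(\sT{P}^{(3,3)},\sT{P}^{(3,1s)},\sT{P}^{(3,1r)})$ under the triple contraction $\tc$, showing each tensor is idempotent and every mixed product vanishes, and concludes from that table alone. You instead argue structurally: you obtain $\sT{P}^{(3,1s)}$ and $\sT{P}^{(3,1r)}$ as instances of the general formula of Theorem \ref{thm:HarEmb} (your values $\gamma_s=\frac{3}{4}$ and $\gamma_r=\frac{2}{3}$ are correct and explain the prefactors $\frac{4}{3}$ and $\frac{3}{2}$); you prove mutual orthogonality at the level of subspaces, via the relabeling identity $S_{ijk}\left(T_{ijk}+T_{jki}+T_{ikj}\right)=3\,S_{ijk}T_{ijk}$ for totally symmetric $\tT{S}$, which gives $\sym^3\perp\HH^{r(3,1)}$, together with $\HH^{s(3,1)}\perp\KK^3$ holding by definition; and you then obtain $\sT{P}^{(3,3)}$ by complementation in the orthogonal direct sum $\sgrd=\KK^3\oplus\HH^{s(3,1)}\oplus\HH^{r(3,1)}$ furnished by Theorem \ref{thm:DecT3}. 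What the paper's table buys is a compact, self-contained certificate of the projector identities; what your argument buys is an explanation of where the normalizations come from, an explicit identification of the images of the three projectors (which the bare table does not address), and the isolation of the only non-formal ingredient: since both vector-valued summands are copies of $\KK^1$, equivariance (Schur-type) reasoning cannot force their orthogonality, so the index-relabeling step is genuinely needed --- a point your proof makes explicit, whereas the paper leaves it implicit inside the unshown computation of the table entries.
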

\begin{proof}
The multiplication table of the family $(\sT{P}^{(3,3)},\sT{P}^{(3,1s)},\sT{P}^{(3,1r)})$ given below
\begin{table}[H]
\center
\begin{tabular}{|c||c|c|c|}
\hline
$\tc$& $\sT{P}^{(3,3)}$ & $\sT{P}^{(3,1s)}$ & $\sT{P}^{(3,1r)}$  \\ \hline
$\sT{P}^{(3,3)}$ &$\sT{P}^{(3,3)}$ & $\sT{0}$ &$\sT{0}$   \\ \hline
$\sT{P}^{(3,1s)}$ &$\sT{0}$ &$\sT{P}^{(3,1s)}$ & $\sT{0}$   \\ \hline
$\sT{P}^{(3,1r)}$ &$\sT{0}$&$\sT{0}$&$\sT{P}^{(3,1r)}$ \\   \hline
\end{tabular}
\end{table}
shows that  $(\sT{P}^{(3,3)},\sT{P}^{(3,1s)},\sT{P}^{(3,1r)})$ constitutes a family of orthogonal projectors with respect to the triple contraction.
\end{proof}

\begin{rem}\label{rem:TypeIb}
\red{To obtain the projectors for the \emph{Type I} formulation, that is for the decomposition of $\TT_{i(jk)}$,  the definition of $\sT{I}^{\sgrd}$  must be replaced by the following one
\ben
\sT{I}^{\TT_{i(jk)}}=\frac{1}{2}\left(\sT{i}^{(6)}_8+\sT{i}^{(6)}_{9}\right)
\een
By combining this identity with $\qT{\Phi}^{\sharp\ r\{3,1\}}$, the projectors $(\sT{P}^{\sharp(3,3)},\sT{P}^{(3,1s)},\sT{P}^{\sharp(3,1r)})$  adapted to the \emph{Type I} formulation are constructed.
}
\end{rem}

\begin{rem}
In Proposition \ref{cor:Proj3}, the transpose $\qT{\Phi}^{s\{1,3\}}$ of $\qT{\Phi}^{s\{3,1\}}$ (in the sense of Equation \eqref{def:Transp}) has been used.  However, since $\qT{\Phi}^{s\{3,1\}}\in \sym^4$, we have in this peculiar case $\qT{\Phi}^{s\{1,3\}}=\qT{\Phi}^{s\{3,1\}}$.  Even if it is superfluous in terms of algebra, the distinction has nevertheless been made here, and will be made in the following to ensure consistency of notation in our different expressions.
\end{rem}

\begin{rem}
From Lemma \ref{lem:SymPro}, the tensors $(\sT{P}^{(3,3)},\sT{P}^{(3,1s)},\sT{P}^{(3,1r)})$ can be considered as isotropic elasticity tensors in $\Elas$. Interpreted as elements of  $(\Elas,\rdots{6})$, these tensors are associated to the following Gram matrix:
\begin{table}[H]
\center
\begin{tabular}{|c||c|c|c|}
\hline
$\rdots{6}$& $\sT{P}^{(3,3)}$ & $\sT{P}^{(3,1s)}$ & $\sT{P}^{(3,1r)}$  \\ \hline
$\sT{P}^{(3,3)}$ &$2$ & $0$ &$0$   \\ \hline
$\sT{P}^{(3,1s)}$ &$0$ &$2$ & $0$ \\ \hline
$\sT{P}^{(3,1r)}$ &$0$&$0$&$2$ \\   \hline
\end{tabular}
\end{table}
\noindent on which it can be checked, with a slight abuse of notation, that $\sT{P}^{(3,k)}\rdots{6}\sT{P}^{(3,k)}=\dim(\KK^{k}),\ k\in\{1,3\}$.
\end{rem}

\subsection{Decomposition of $\Elas$}

We will  begin by considering this case, since the construction of  Clebsch-Gordan Harmonic Decomposition of $\sT{A}\in\Elas$ follows almost directly the method previously introduced for $\Elaq$.

\subsubsection{Clebsch-Gordan  Decomposition}
In the following proposition, the tensors $\qT{a}^{1s,3}$ and $\qT{a}^{1r,3}$ will denote the transposes of the tensors $\qT{a}^{3,1s}$ and $\qT{a}^{3,1r}$, respectively, in the sense defined by  Equation \eqref{def:Transp}.
\begin{prop}\label{prop:CGDec6}
The tensor $\sT{A}\in\Elas$ admits the uniquely defined \red{Intermediate Block Decomposition} associated to the family of projectors $(\sT{P}^{(3,3)},\sT{P}^{(3,1s)},\sT{P}^{(3,1r)})$:
\begin{eqnarray*}
\sT{A}&=&\sT{A}^{3,3}+\frac{16}{9}\qT{\Phi}^{s\{3,1\}}\udot\dT{a}^{1s,1s}\udot\qT{\Phi}^{s\{1,3\}}+\frac{9}{4}\qT{\Phi}^{r\{3,1\}}\udot\dT{a}^{1r,1r}\udot\qT{\Phi}^{r\{1,3\}}\\
&+&\frac{4}{3}\left(\qT{a}^{3,1s}\udot\qT{\Phi}^{s\{1,3\}} +\qT{\Phi}^{s\{3,1\}}\udot\qT{a}^{1s,3}\right)+\frac{3}{2}\left(\qT{a}^{3,1r}\udot\qT{\Phi}^{r\{1,3\}}+\qT{\Phi}^{r\{3,1\}}\udot\qT{a}^{1r,3}\right)\\
&+&2\left(\qT{\Phi}^{s\{3,1\}}\udot\dT{a}^{1s,1r}\udot\qT{\Phi}^{r\{1,3\}}+\qT{\Phi}^{r\{3,1\}}\udot\dT{a}^{1r,1s}\udot\qT{\Phi}^{s\{1,3\}}\right)
\end{eqnarray*}
in which $\sT{A}^{3,3}\in\KK^3\otimes^s\KK^3$, $(\qT{a}^{3,1s}, \qT{a}^{3,1r})\in(\KK^3\otimes\KK^1)^2$, $(\dT{a}^{1s,1s},\dT{a}^{1r,1r})\in(\KK^1\otimes^s\KK^1)^2$ and $\dT{a}^{1s,1r}\in\KK^1\otimes\KK^1$ and $\qT{\Phi}^{s\{3,1\}}, \qT{\Phi}^{r\{3,1\}}$ are defined in Proposition \ref{thm:DecT3}.
Those elements are defined from $\sT{A}$ as follows:
\begin{center}
\begin{tabular}{|c|c|c|}
  \hline
 $\TT^{2}$ & $\TT^{4}$ & $\TT^{6}$  \\
  \hline
  $\dT{a}^{1s,1s}:=\qT{\Phi}^{s\{1,3\}}\tc\sT{A}\tc\qT{\Phi}^{s\{3,1\}}$&$\qT{a}^{3,1s}:=\sT{P}^{(3,3)}\tc\sT{A}\tc\qT{\Phi}^{s\{3,1\}}$&$\sT{A}^{3,3}:=\sT{P}^{(3,3)}\tc\sT{A}\tc\sT{P}^{(3,3)}$\\
  $\dT{a}^{1s,1r}:=\qT{\Phi}^{s\{1,3\}}\tc\sT{A}\tc\qT{\Phi}^{r\{3,1\}}$  &$ \qT{a}^{3,1r}:=\sT{P}^{(3,3)}\tc\sT{A}\tc\qT{\Phi}^{r\{3,1\}}$&  \\ 
   $\dT{a}^{1r,1r}:=\qT{\Phi}^{r\{1,3\}}\tc\sT{A}\tc\qT{\Phi}^{r\{3,1\}}$ & &\\         \hline
\end{tabular}
\end{center}
\end{prop}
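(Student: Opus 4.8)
The plan is to mirror the argument used for the fourth-order tensor in the proof of Proposition~\ref{prop:DecC}, replacing the double contraction by the triple contraction $\tc$ and the two-fold family of projectors by the three-fold family $(\sT{P}^{(3,3)},\sT{P}^{(3,1s)},\sT{P}^{(3,1r)})$ of Proposition~\ref{cor:Proj3}. First I would take two state tensors $\tT{\upeta},\tT{\uptau}\in\sgrd$ linked by $\tT{\uptau}=\sT{A}\tc\tT{\upeta}$ and insert the resolution of the identity $\sT{I}^{\sgrd}=\sT{P}^{(3,3)}+\sT{P}^{(3,1s)}+\sT{P}^{(3,1r)}$ on both sides of $\sT{A}$, giving
\ben
\sT{A}=\sT{I}^{\sgrd}\tc\sT{A}\tc\sT{I}^{\sgrd}=\sum_{X,Y\in\{3,1s,1r\}}\sT{P}^{(3,X)}\tc\sT{A}\tc\sT{P}^{(3,Y)}.
\een
This exhibits $\sT{A}$ as a sum of nine blocks indexed by the pairs $(X,Y)$, the strain-gradient analogue of the $2\times 2$ block matrix appearing in the proof of Proposition~\ref{prop:DecC}.

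The second step is to substitute each projector by its embedding expression $\sT{P}^{(3,1s)}=\frac{4}{3}\qT{\Phi}^{s\{3,1\}}\udot\qT{\Phi}^{s\{1,3\}}$ and $\sT{P}^{(3,1r)}=\frac{3}{2}\qT{\Phi}^{r\{3,1\}}\udot\qT{\Phi}^{r\{1,3\}}$ from Proposition~\ref{cor:Proj3}. Each block then factors through a low-order reduced operator: the pure vector blocks $(X,Y\in\{1s,1r\})$ become $\qT{\Phi}^{X\{3,1\}}\udot[\qT{\Phi}^{X\{1,3\}}\tc\sT{A}\tc\qT{\Phi}^{Y\{3,1\}}]\udot\qT{\Phi}^{Y\{1,3\}}$, where the bracket is one of the second-order tensors $\dT{a}^{1s,1s}$, $\dT{a}^{1r,1r}$, $\dT{a}^{1s,1r}$; the mixed $\KK^3$/vector blocks become $\qT{a}^{3,X}\udot\qT{\Phi}^{X\{1,3\}}$ with $\qT{a}^{3,X}=\sT{P}^{(3,3)}\tc\sT{A}\tc\qT{\Phi}^{X\{3,1\}}$; and the top block is simply $\sT{A}^{3,3}=\sT{P}^{(3,3)}\tc\sT{A}\tc\sT{P}^{(3,3)}$. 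The numerical prefactors $\frac{16}{9}$, $\frac{9}{4}$ and $2$ are precisely the products of the projector normalisations $\frac{4}{3}$ and $\frac{3}{2}$, while the single prefactors $\frac{4}{3}$ and $\frac{3}{2}$ survive in the mixed blocks. Finally I would invoke the major symmetry of $\sT{A}\in\mathcal{L}^s(\sgrd,\sgrd)$, i.e. $A_{ijklmn}=A_{lmnijk}$, to identify the three lower off-diagonal blocks as transposes of the upper ones, so that $\qT{a}^{1s,3}$, $\qT{a}^{1r,3}$ and $\dT{a}^{1r,1s}$ appear as the transposes of $\qT{a}^{3,1s}$, $\qT{a}^{3,1r}$ and $\dT{a}^{1s,1r}$. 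Collecting the nine blocks reproduces the stated expression.

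The membership of each reduced operator in the announced product space is read off directly: $\sT{P}^{(3,3)}$ projects its slots onto $\KK^3$ and each $\qT{\Phi}^{X\{1,3\}}$ lands in $\KK^1$, so $\sT{A}^{3,3}\in\KK^3\otimes^s\KK^3$, $\qT{a}^{3,1s},\qT{a}^{3,1r}\in\KK^3\otimes\KK^1$, and the $\dT{a}^{\cdots}$ live in the appropriate $\KK^1\otimes\KK^1$ (symmetrised on the diagonal by self-adjointness). Uniqueness of the block decomposition follows because the state-space decomposition fixed in Theorem~\ref{thm:DecT3} determines the projectors, hence the reduced operators, without ambiguity. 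The only genuine difficulty is purely clerical: keeping track of the index slots and the accumulated normalisation constants through the interleaved single ($\udot$) and triple ($\tc$) contractions. Once the prefactors $\frac{4}{3}$ and $\frac{3}{2}$ of Proposition~\ref{cor:Proj3} are carried through, the coefficients emerge automatically, exactly as the factor $2$ did in Equation~\eqref{eq:DecC}.
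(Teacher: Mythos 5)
Your proof is correct and takes essentially the same route as the paper's: the paper likewise splits $\sT{A}$ into the nine blocks determined by the projector family of Proposition \ref{cor:Proj3} (presented there as the block-matrix form of the constitutive law acting on state tensors decomposed via Theorem \ref{thm:DecT3}), then factors each projector through its harmonic embedding so that the normalisation constants $\tfrac{4}{3}$ and $\tfrac{3}{2}$ produce the stated prefactors, with the major symmetry of $\sT{A}$ identifying the lower off-diagonal blocks as transposes. Inserting the resolution of the identity $\sT{I}^{\sgrd}=\sT{P}^{(3,3)}+\sT{P}^{(3,1s)}+\sT{P}^{(3,1r)}$ directly on both sides of $\sT{A}$ is merely a more compact way of writing that same computation.
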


\begin{proof}
The proof follows the steps detailed in the proof of Proposition \ref{prop:DecC}, only the main points will be summed up here.
Starting from the decomposition introduced in Theorem \ref{thm:DecT3}, any $\tT{T}\in\sgrd$  decomposes as follows:
\ben
\tT{T}=\tT{H}+\tT{V}^{s}+\tT{V}^{r}.
\een
Using the projectors  $(\sT{P}^{(3,3)},\sT{P}^{(3,1s)},\sT{P}^{(3,1r)})$ and following the method used in the proof of Proposition \ref{prop:DecC}, the constitutive law can be brought to the following matrix form:
\ben
\begin{pmatrix}
\tT{H}^{\star}\\
\tT{V}^{\star s}\\
\tT{V}^{\star r}\\
\end{pmatrix}
=
\begin{pmatrix}
\sT{A}^{3,3} &\sT{A}^{3,1s} &\sT{A}^{3,1r} \\
\sT{A}^{1s,3} &\sT{A}^{1s,1s}&\sT{A}^{1s,1r}\\
\sT{A}^{1r,3} &\sT{A}^{1r,1s} &\sT{A}^{1r,1r} \\
\end{pmatrix}
\begin{pmatrix}
\tT{H}\\
\tT{V}^{s}\\
\tT{V}^{r}\\
\end{pmatrix},
\een
where the quantities are defined as:
\ben
\begin{cases}
\sT{A}^{3,3}:=\sT{P}^{(3,3)}\tc\sT{A}\tc\sT{P}^{(3,3)},\\
\sT{A}^{3,1s}:=\frac{4}{3}\qT{a}^{3,1s}\udot\qT{\Phi}^{s\{1,3\}},\\
\sT{A}^{3,1r}:=\frac{3}{2}\qT{a}^{3,1r}\udot\qT{\Phi}^{r\{1,3\}},\\
\sT{A}^{1s,3}:=\frac{4}{3}\qT{\Phi}^{s\{3,1\}}\udot\qT{a}^{1s,3},\\
\sT{A}^{1s,1s}:=\frac{16}{9}\qT{\Phi}^{s\{3,1\}}\udot\dT{a}^{1s,1s}\udot\qT{\Phi}^{s\{1,3\}},\\
\sT{A}^{1s,1r}:=2\qT{\Phi}^{s\{3,1\}}\udot\dT{a}^{1s,1r}\udot\qT{\Phi}^{r\{1,3\}},\\
\sT{A}^{1r,3}:=\frac{3}{2}\qT{\Phi}^{r\{3,1\}}\udot\qT{a}^{1r,3},\\
\sT{A}^{1r,1s}:=2\qT{\Phi}^{r\{3,1\}}\udot\dT{a}^{1r,1s}\udot\qT{\Phi}^{s\{1,3\}},\\
\sT{A}^{1r,1r}:=\frac{9}{4}\qT{\Phi}^{r\{3,1\}}\udot\dT{a}^{1r,1r}\udot\qT{\Phi}^{r\{1,3\}},\\
\end{cases}
\quad
\text{in which} 
\quad
\begin{cases}
\qT{a}^{3,1s}:=\sT{P}^{(3,3)}\tc\sT{A}\tc\qT{\Phi}^{s\{3,1\}},\\
\qT{a}^{3,1r}:=\sT{P}^{(3,3)}\tc\sT{A}\tc\qT{\Phi}^{r\{3,1\}},\\
\dT{a}^{1s,1s}:=\qT{\Phi}^{s\{1,3\}}\tc\sT{A}\tc\qT{\Phi}^{s\{3,1\}},\\
\dT{a}^{1s,1r}:=\qT{\Phi}^{s\{1,3\}}\tc\sT{A}\tc\qT{\Phi}^{r\{3,1\}},\\
\dT{a}^{1r,1r}:=\qT{\Phi}^{r\{1,3\}}\tc\sT{A}\tc\qT{\Phi}^{r\{3,1\}}.\\
\end{cases}
\een
Using matrix notation, we have
\ben
\begin{pmatrix}
\tT{H}^{\star}\\
\tT{V}^{\star s}\\
\tT{V}^{\star r}\\
\end{pmatrix}
=
\begin{pmatrix}
\sT{A}^{3,3}&\frac{4}{3}\qT{a}^{3,1s}\udot\qT{\Phi}^{s\{1,3\}} &\frac{3}{2}\qT{a}^{3,1r}\udot\qT{\Phi}^{r\{1,3\}} \\
\frac{4}{3}\qT{\Phi}^{s\{3,1\}}\udot\qT{a}^{1s,3}&\frac{16}{9}\qT{\Phi}^{s\{3,1\}}\udot\dT{a}^{1s,1s}\udot\qT{\Phi}^{s\{1,3\}} &2\qT{\Phi}^{s\{3,1\}}\udot\dT{a}^{1s,1r}\udot\qT{\Phi}^{r\{1,3\}} \\
\frac{3}{2}\qT{\Phi}^{r\{3,1\}}\udot\qT{a}^{1r,3} &2\qT{\Phi}^{r\{3,1\}}\udot\dT{a}^{1r,1s}\udot\qT{\Phi}^{s\{1,3\}}&\frac{9}{4}\qT{\Phi}^{r\{3,1\}}\udot\dT{a}^{1r,1r}\udot\qT{\Phi}^{r\{1,3\}} \\
\end{pmatrix}
\begin{pmatrix}
\tT{H}\\
\tT{V}^{s}\\
\tT{V}^{r}\\
\end{pmatrix}
\een.
\end{proof}

\subsubsection{Harmonic  Decomposition}

In the \red{Intermediate Block Decomposition} of $\sT{A}$ the non-harmonic tensors belong to $4$ different spaces:
\begin{itemize}
\item $\sT{A}^{3,3}\in\KK^3\otimes^s\KK^3\simeq\KK^6\oplus\KK^0$;
\item $(\qT{a}^{3,1s},\qT{a}^{3,1r})\in\KK^3\otimes\KK^1\simeq\KK^4\oplus\KK^2$;
\item $(\dT{a}^{1s,1s},\dT{a}^{1r,1r})\in\KK^1\otimes^s\KK^1\simeq\KK^2\oplus\KK^0$;
\item $\dT{a}^{1s,1r}\in\KK^1\otimes\KK^1\simeq\KK^2\oplus\KK^0\oplus\KK^{-1}$.
\end{itemize}
Their harmonic decompositions are provided in Appendix \ref{sec:HarmEmb}, the associated results are the following:\\
\noindent$\bullet$ Tensors $\sT{A}^{3,3}\in\KK^3\otimes^s\KK^3$ admit the uniquely defined harmonic decomposition
\ben
\sT{A}^{3,3}=\sT{H}+\frac{\alpha}{2}\sT{P}^{(3,3)},\ \text{where}\  \sT{H} \in \KK^6\  \text{and}\  \alpha \in \KK^0.  
\een
Conversely,
\ben
\alpha=\sT{A}^{3,3}\rdots{6}\sT{P}^{(3,3)} \quad\text{and}\  \quad \sT{H}=\sT{A}^{3,3}-\frac{\alpha}{2}\sT{P}^{(3,3)}.
\een
\noindent$\bullet$ Tensors $\qT{a}^{3,1}\in\KK^3\otimes\KK^1$ admit the uniquely defined harmonic decomposition
\ben
\qT{a}^{3,1}=\qT{H}+\sT{\Phi}^{\{4,2\}}:\dT{h},\quad\text{with}\  \qT{H}\in \KK^4,\   \dT{h} \in \KK^2\ \text{and}\ 
(\sT{\Phi}^{\{4,2\}}:\dT{h})_{ijkl}=\frac{1}{2}(h_{ij}\delta_{kl}+h_{ik}\delta_{jl}-h_{il}\delta_{jk}).
\een
Conversely,
\ben
\dT{h}=\tr_{14}\qT{a}^{3,1}\quad\text{and}\  \quad \qT{H}=\qT{a}^{3,1}-\sT{\Phi}^{\{4,2\}}:\dT{h}.
\een
The  notation $\tr_{ij}$ indicates that the contraction should be done on the $i$th and $j$th indices.\\
\noindent$\bullet$ Tensors $\dT{a}^{1,1}\in\KK^1\otimes\KK^1$ admit the uniquely defined harmonic decomposition
\ben
\dT{a}^{1,1}=\dd+\frac{\beta}{2}\dT{\epsilon}+\frac{\alpha}{2}\id,\  \text{with}\  \dT{d} \in \KK^2,\  (\alpha,\beta) \in \KK^0.
\een
Conversely,
\ben
\alpha=\dT{a}^{1,1}:\id, \quad \beta=\dT{a}^{1,1}:\dT{\epsilon}, \quad  \dd=\dT{a}^{1,1}-\frac{\beta}{2}\dT{\epsilon}-\frac{\alpha}{2}\id.
\een
In the following proposition, the notation $\vT^{T_{\alpha,\beta}}$ indicates a generalized transposition operation in which, in components, the first $\alpha$ indices are permuted with the last $\beta$ ones.

\begin{prop}[\textbf{Clebsch-Gordan Harmonic Decomposition of} $\sT{A}\in\Elas$]

The tensor $\sT{A}\in\Elas$ admits the uniquely defined  Clebsch-Gordan Harmonic Decomposition associated to the family of projectors $(\sT{P}^{(3,3)},\sT{P}^{(3,1s)},\sT{P}^{(3,1r)})$:

\ban
\sT{A}&=&\sT{H}^{3,3}+\frac{4}{3}\left(\qT{H}^{3,1s}\udot\qT{\Phi}^{s\{1,3\}} +\qT{\Phi}^{s\{3,1\}}\udot\qT{H}^{3,1s}\right)+\frac{3}{2}\left(\qT{H}^{3,1r}\udot\qT{\Phi}^{r\{1,3\}}+\qT{\Phi}^{r\{3,1\}}\udot\qT{H}^{3,1r}\right)\\
&+&\frac{16}{9}\qT{\Phi}^{s\{3,1\}}\udot\dT{h}^{1s,1s}\udot\qT{\Phi}^{s\{1,3\}}+\frac{9}{4}\qT{\Phi}^{r\{3,1\}}\udot\dT{h}^{1r,1r}\udot\qT{\Phi}^{r\{1,3\}}\\
&+&\frac{4}{3}\left(\left(\sT{\Phi}^{\{4,2\}}:\dT{h}^{3,1s} \right)\udot\qT{\Phi}^{s\{1,3\}}+\qT{\Phi}^{s\{3,1\}}\udot\left(\sT{\Phi}^{\{4,2\}}:\dT{h}^{3,1s}\right )^{T_{3,1}}\right)\\
&+&\frac{3}{2}\left(\left(\sT{\Phi}^{\{4,2\}}:\dT{h}^{3,1r}\right) \udot\qT{\Phi}^{r\{1,3\}}+\qT{\Phi}^{r\{3,1\}}\udot\left(\sT{\Phi}^{\{4,2\}}:\dT{h}^{3,1r} \right)^{T_{3,1}}\right)\\
&+&2\left(\qT{\Phi}^{s\{3,1\}}\udot\dT{h}^{1s,1r}\udot\qT{\Phi}^{r\{1,3\}}+\qT{\Phi}^{r\{3,1\}}\udot\dT{h}^{1r,1s}\udot\qT{\Phi}^{s\{1,3\}}\right)\\
&+&\frac{\alpha^{3,3}}{2}\sT{P}^{(3,3)}+\frac{2}{3}\alpha^{1s,1s}\sT{P}^{(3,1s)}+\frac{3}{4}\alpha^{1r,1r}\sT{P}^{(3,1r)}+\alpha^{1s,1r}\left(\qT{\Phi}^{s\{3,1\}}\udot\qT{\Phi}^{r\{1,3\}}+\qT{\Phi}^{r\{3,1\}}\udot\qT{\Phi}^{s\{1,3\}}\right)\\
&+&\beta^{1s,1r}\left(\qT{\Phi}^{s\{3,1\}}\udot\dT{\epsilon}\udot\qT{\Phi}^{r\{1,3\}}-\qT{\Phi}^{r\{3,1\}}\udot\dT{\epsilon}\udot\qT{\Phi}^{s\{1,3\}}\right),
\ean
in which $\sT{H}^{3,3}\in\KK^6$, $(\qT{H}^{3,1s}, \qT{H}^{3,1r})\in(\KK^4)^2$, $(\dT{h}^{3,1s},\dT{h}^{3,1r},\dT{h}^{1s,1r},\dT{h}^{1s,1s}\dT{h}^{1r,1r})\in(\KK^2)^5$, $(\alpha^{3,3},\alpha^{1s,1s},\alpha^{1r,1r},\alpha^{1r,1s})\in(\KK^0)^4$ and $\beta^{1r,1s}\in\KK^{-1}$.
Those elements are defined from $\sT{A}$ as follows:
\begin{center}
\begin{footnotesize}
\begin{tabular}{|c|c|c|c|c|}
  \hline
 $\KK^{-1}$ & $\KK^{0}$ & $\KK^{2}$ &  $\KK^{4}$ & $\KK^{6}$ \\
  \hline
  $\beta^{1s,1r} =\dT{a}^{1s,1r}:\Jd$  & $\alpha^{1s,1r} =\dT{a}^{1s,1r}:\id$ & $\dT{h}^{1s,1r}=\dT{a}^{1s,1r}:\qT{P}^{(2,2)}$ &&\\
                                                            & $\alpha^{1r,1r} =\dT{a}^{1r,1r}:\id$ & $\dT{h}^{1r,1r}=\dT{a}^{1r,1r}:\qT{P}^{(2,2)}$ &&\\
                                                            & $\alpha^{1s,1s} =\dT{a}^{1s,1s}:\id$ & $\dT{h}^{1s,1s}=\dT{a}^{1s,1s}:\qT{P}^{(2,2)}$ &&\\
                                                            &         &$\dT{h}^{3,1r}=\tr_{14}(\qT{a}^{3,1r})$&$\qT{H}^{3,1r}=\qT{a}^{3,1r}-\sT{\Phi}^{\{4,2\}}:\dT{h}^{3,1r}$&\\
           &                                                          &$\dT{h}^{3,1s}=\tr_{14}(\qT{a}^{3,1s})$&$\qT{H}^{3,1s}=\qT{a}^{3,1s}-\sT{\Phi}^{\{4,2\}}:\dT{h}^{3,1s}$&\\
      &$\alpha^{3,3}=\sT{A}^{3,3}\rdots{6}\sT{P}^{(3,3)}$&&&    $\sT{H}^{3,3}=\sT{A}^{3,3}-\frac{\alpha^{3,3}}{2}\sT{P}^{(3,3)}$\\ 
  \hline
\end{tabular}
\end{footnotesize}
\end{center}
in which the intermediate quantities are defined in Proposition \ref{prop:CGDec6}. 
\end{prop}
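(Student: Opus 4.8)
The plan is to carry out, \emph{mutatis mutandis}, exactly the argument that produced the explicit harmonic decomposition of $\qT{C}$ at the end of Section \ref{s:ClaEla}, only on the larger nine-block system. The starting point is the Intermediate Block Decomposition of $\sT{A}$ established in Proposition \ref{prop:CGDec6}, which expresses $\sT{A}$ as a sum of blocks parametrised by the non-harmonic tensors $\sT{A}^{3,3}$, $\qT{a}^{3,1s}$, $\qT{a}^{3,1r}$, $\dT{a}^{1s,1s}$, $\dT{a}^{1r,1r}$, $\dT{a}^{1s,1r}$ together with their transposes. Just as the classical proof injected Lemma \ref{lem:K2sK2} into Proposition \ref{prop:DecC}, I would inject into each block the elementary harmonic decomposition recalled immediately before the statement: $\sT{A}^{3,3}=\sT{H}^{3,3}+\tfrac{\alpha^{3,3}}{2}\sT{P}^{(3,3)}$; $\qT{a}^{3,1s}=\qT{H}^{3,1s}+\sT{\Phi}^{\{4,2\}}:\dT{h}^{3,1s}$ and likewise for the $r$-block; $\dT{a}^{1s,1s}=\dT{h}^{1s,1s}+\tfrac{\alpha^{1s,1s}}{2}\id$ and likewise for $\dT{a}^{1r,1r}$; and $\dT{a}^{1s,1r}=\dT{h}^{1s,1r}+\tfrac{\beta^{1s,1r}}{2}\Jd+\tfrac{\alpha^{1s,1r}}{2}\id$.

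The self-adjointness of $\sT{A}\in\mathcal{L}^s(\sgrd,\sgrd)$ is what pairs the off-diagonal blocks: $\qT{a}^{1s,3}$, $\qT{a}^{1r,3}$ and $\dT{a}^{1r,1s}$ are the (generalised) transposes of $\qT{a}^{3,1s}$, $\qT{a}^{3,1r}$ and $\dT{a}^{1s,1r}$. Transposition leaves the $\KK^0$ part ($\tfrac{\alpha}{2}\id$) and the $\KK^2$ part ($\dT{h}$) invariant but reverses the sign of the $\KK^{-1}$ part ($\tfrac{\beta}{2}\Jd$); this single sign flip is precisely what turns the scalar and deviatoric contributions into the symmetric combinations appearing in the statement while producing the antisymmetric combination $\qT{\Phi}^{s\{3,1\}}\udot\Jd\udot\qT{\Phi}^{r\{1,3\}}-\qT{\Phi}^{r\{3,1\}}\udot\Jd\udot\qT{\Phi}^{s\{1,3\}}$ carrying $\beta^{1s,1r}$. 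The generalised transposition $\left(\sT{\Phi}^{\{4,2\}}:\dT{h}^{3,1\cdot}\right)^{T_{3,1}}$ in the statement is nothing but the transpose of the $\KK^2$-embedding of the $(3,1)$-blocks dictated by this pairing.

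It then remains to collect terms. The genuinely irreducible pieces $\sT{H}^{3,3}\in\KK^6$, the $\qT{H}^{3,1\cdot}\in\KK^4$, and the various $\dT{h}\in\KK^2$ are transported verbatim into their blocks, whereas the spherical contributions must be simplified through the projector identities coming from Proposition \ref{cor:Proj3}, namely $\qT{\Phi}^{s\{3,1\}}\udot\qT{\Phi}^{s\{1,3\}}=\tfrac{3}{4}\,\sT{P}^{(3,1s)}$ and $\qT{\Phi}^{r\{3,1\}}\udot\qT{\Phi}^{r\{1,3\}}=\tfrac{2}{3}\,\sT{P}^{(3,1r)}$ (using that $\id$ is neutral for the single contraction). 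For instance the spherical part of the $(1s,1s)$ block becomes $\tfrac{16}{9}\cdot\tfrac{\alpha^{1s,1s}}{2}\cdot\tfrac{3}{4}\,\sT{P}^{(3,1s)}=\tfrac{2}{3}\alpha^{1s,1s}\sT{P}^{(3,1s)}$, and the $(1r,1r)$ spherical part gives $\tfrac{9}{4}\cdot\tfrac{\alpha^{1r,1r}}{2}\cdot\tfrac{2}{3}\,\sT{P}^{(3,1r)}=\tfrac{3}{4}\alpha^{1r,1r}\sT{P}^{(3,1r)}$, reproducing the coefficients of the statement. Uniqueness is inherited from the uniqueness of the Intermediate Block Decomposition (Proposition \ref{prop:CGDec6}) together with the uniqueness of each elementary harmonic decomposition, which in turn rests on the fact (the footnote following Lemma \ref{lem-Gordan}) that in 2D the relevant tensor products split into irreducibles of pairwise distinct orders.

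The main obstacle is purely organisational rather than conceptual: there are nine blocks, several splitting into two or three harmonic pieces of different orders, and one must keep track of every transpose and of the interplay between the embedding prefactors $\tfrac{4}{3},\tfrac{3}{2},\tfrac{16}{9},\tfrac{9}{4},2$ and the projector normalisations $\tfrac{3}{4},\tfrac{2}{3}$. The real work is therefore to perform these substitutions without arithmetic slips and to check that every scalar and pseudo-scalar contribution lands on the correct isotropic generator; no new idea beyond the fourth-order treatment of Section \ref{s:ClaEla} is required.
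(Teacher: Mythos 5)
Your proposal is correct and follows essentially the same route as the paper: the paper also obtains this result by inserting the elementary harmonic decompositions of $\sT{A}^{3,3}$, $\qT{a}^{3,1s}$, $\qT{a}^{3,1r}$, $\dT{a}^{1s,1s}$, $\dT{a}^{1r,1r}$, $\dT{a}^{1s,1r}$ (established via Appendix~\ref{sec:HarmEmb}) into the Intermediate Block Decomposition of Proposition~\ref{prop:CGDec6}, exactly as was done for $\qT{C}$ in Section~\ref{s:ClaEla}. Your handling of the transposes (in particular the sign flip of the $\Jd$ part producing the antisymmetric $\beta^{1s,1r}$ term), the projector identities $\qT{\Phi}^{s\{3,1\}}\udot\qT{\Phi}^{s\{1,3\}}=\tfrac{3}{4}\sT{P}^{(3,1s)}$, $\qT{\Phi}^{r\{3,1\}}\udot\qT{\Phi}^{r\{1,3\}}=\tfrac{2}{3}\sT{P}^{(3,1r)}$, and the resulting coefficients $\tfrac{2}{3}\alpha^{1s,1s}$, $\tfrac{3}{4}\alpha^{1r,1r}$ all check out.
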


\begin{rem}
\red{Following Remarks \ref{rem:TypeIa} and \ref{rem:TypeIb}, the formula for strain-gradient elasticity expressed within type I formulation is directly obtained by substituting $\sT{P}^{\sharp(3,3)}$, $\sT{P}^{\sharp(3,1r)}$, and $\qT{\Phi}^{\sharp\ r\{3,1\}}$ for $\sT{P}^{(3,3)}$, $\sT{P}^{(3,1r)}$ and  $\qT{\Phi}^{r\{3,1\}}$  in the given formula.}
\end{rem}


\subsection{Link with symmetry classes}

\red{As demonstrated in \cite{ABB09,AKO16}, the space $\Elas$ is divided into  the 8 following symmetry classes:
\ben
 \SymC(\Elas)=\{[\ZZ_2],[\DD_2],[\ZZ_4],[\DD_4],[\ZZ_6],[\DD_6],[\sod],[\ode]\}
\een
These symmetry classes can be linked as shown in the following diagram:
\ben
 \xymatrix {
    [\DD_2] \ar[rr]  \ar[dr] && [\DD_6] \ar[dr]  \\
    & [\DD_4] \ar[rr]  && \ode  \\
    [\ZZ_2] \ar[uu] \ar[rr] |!{[ur];[dr]}\hole \ar[dr] && [\ZZ_6] \ar[rd]\ar[uu] |!{[ul];[ur]}\hole \\
    & [\ZZ_4] \ar[uu]\ar[rr] && \sod\ar[uu]\\
  } 
\een
For the high-symmetry classes $[\ZZ_6]$,$[\DD_6]$,$[\sod]$,$[\ode]$ the harmonic decomposition reduces to 
\begin{itemize}
\item Symmetry class $[\ZZ_6]$:
\ban
\sT{A}&=&\sT{H}^{3,3}+\frac{\alpha^{3,3}}{2}\sT{P}^{(3,3)}+\frac{2}{3}\alpha^{1s,1s}\sT{P}^{(3,1s)}+\frac{3}{4}\alpha^{1r,1r}\sT{P}^{(3,1r)}+\alpha^{1s,1r}\left(\qT{\Phi}^{s\{3,1\}}\udot\qT{\Phi}^{r\{1,3\}}+\qT{\Phi}^{r\{3,1\}}\udot\qT{\Phi}^{s\{1,3\}}\right)\\
&+&\beta^{1s,1r}\left(\qT{\Phi}^{s\{3,1\}}\udot\dT{\epsilon}\udot\qT{\Phi}^{r\{1,3\}}-\qT{\Phi}^{r\{3,1\}}\udot\dT{\epsilon}\udot\qT{\Phi}^{s\{1,3\}}\right);
\ean
\item Symmetry class $[\DD_6]$:
\ban
\sT{A}&=&\sT{H}^{3,3}+\frac{\alpha^{3,3}}{2}\sT{P}^{(3,3)}+\frac{2}{3}\alpha^{1s,1s}\sT{P}^{(3,1s)}+\frac{3}{4}\alpha^{1r,1r}\sT{P}^{(3,1r)}+\alpha^{1s,1r}\left(\qT{\Phi}^{s\{3,1\}}\udot\qT{\Phi}^{r\{1,3\}}+\qT{\Phi}^{r\{3,1\}}\udot\qT{\Phi}^{s\{1,3\}}\right);
\ean
\item Symmetry class $[\sod]$:
\ban
\sT{A}&=&\frac{\alpha^{3,3}}{2}\sT{P}^{(3,3)}+\frac{2}{3}\alpha^{1s,1s}\sT{P}^{(3,1s)}+\frac{3}{4}\alpha^{1r,1r}\sT{P}^{(3,1r)}+\alpha^{1s,1r}\left(\qT{\Phi}^{s\{3,1\}}\udot\qT{\Phi}^{r\{1,3\}}+\qT{\Phi}^{r\{3,1\}}\udot\qT{\Phi}^{s\{1,3\}}\right)\\
&+&\beta^{1s,1r}\left(\qT{\Phi}^{s\{3,1\}}\udot\dT{\epsilon}\udot\qT{\Phi}^{r\{1,3\}}-\qT{\Phi}^{r\{3,1\}}\udot\dT{\epsilon}\udot\qT{\Phi}^{s\{1,3\}}\right);
\ean
\item Symmetry class $[\ode]$:
\ban
\sT{A}&=\frac{\alpha^{3,3}}{2}\sT{P}^{(3,3)}+\frac{2}{3}\alpha^{1s,1s}\sT{P}^{(3,1s)}+\frac{3}{4}\alpha^{1r,1r}\sT{P}^{(3,1r)}+\alpha^{1s,1r}\left(\qT{\Phi}^{s\{3,1\}}\udot\qT{\Phi}^{r\{1,3\}}+\qT{\Phi}^{r\{3,1\}}\udot\qT{\Phi}^{s\{1,3\}}\right).
\ean
\end{itemize}
In order to provide the harmonic decomposition for the other cases without ambiguity, additional tools, not discussed in this paper, are needed. We therefore defer this point to a future contribution.}

\subsection{Decomposition of $\Elac$}

In this last subsection, the Clebsch-Gordan Harmonic Decomposition of $\cT{M}\in\Elac$ is provided. Since $\Elac\simeq\mathcal{L}(\sgrd,\sdef)$, two different state tensor spaces are involved for constructing the decomposition.

\subsubsection{Clebsch-Gordan  Decomposition}

\begin{prop}\label{prop:CGDec5}
The tensor $\cT{M}\in\Elac$ admits the uniquely defined \red{Intermediate Block Decomposition} associated to the family of projectors $(\sT{P}^{(3,3)},\sT{P}^{(3,1s)},\sT{P}^{(3,1r)},\qT{P}^{(2,2)},\qT{P}^{(2,0)})$:
\ben
\cT{M}=\cT{M}^{2,3}+\frac{4}{3}\tT{m}^{2,1s}\udot\qT{\Phi}^{s\{1,3\}}+\frac{3}{2}\tT{m}^{2,1r}\udot\qT{\Phi}^{r\{1,3\}}+2\dT{\Phi}^{\{2,0\}}\otimes\tT{m}^{0,3}+\frac{8}{3}\left(\dT{\Phi}^{\{2,0\}}\otimes\V{\mu}^{0,1s}\right)\udot\qT{\Phi}^{s\{1,3\}}+3\left(\dT{\Phi}^{\{2,0\}}\otimes\V{\mu}^{0,1r}\right)\udot\qT{\Phi}^{r\{1,3\}} 
\een
in which $\cT{M}^{2,3}\in\KK^2\otimes\KK^3$, $(\tT{m}^{2,1s}, \tT{m}^{2,1r})\in (\KK^2\otimes\KK^1)^2$, $\tT{m}^{0,3}\in\KK^3$,  $(\V{\mu}^{0,1s},\V{\mu}^{0,1r})\in(\KK^1)^2$,  and $\qT{\Phi}^{s\{3,1\}}, \qT{\Phi}^{r\{3,1\}}$ and $\dT{\Phi}^{\{0,2\}}$ are defined, respectively, in Propositions \ref{thm:DecT3} and \ref{thm:DecT2}. 
Those elements are defined from $\cT{M}$ as follows:
\begin{center}
\begin{tabular}{|c|c|c|}
  \hline
 $\TT^{1}$ & $\TT^{3}$ & $\TT^{5}$  \\
  \hline
  $\V{\mu}^{0,1s}:=\dT{\Phi}^{\{0,2\}}:\cT{M}\tc\qT{\Phi}^{s\{3,1\}}$&$\tT{m}^{2,1s}:=\qT{P}^{(2,2)}:\cT{M}\tc\qT{\Phi}^{s\{3,1\}}$&$\cT{M}^{2,3}:=\qT{P}^{(2,2)}:\cT{M}\tc\sT{P}^{(3,3)}$\\
  $\V{\mu}^{0,1r}:=\dT{\Phi}^{\{0,2\}}:\cT{M}\tc\qT{\Phi}^{r\{3,1\}}$  &$ \tT{m}^{2,1r}:=\qT{P}^{(2,2)}:\cT{M}\tc\qT{\Phi}^{r\{3,1\}}$&  \\ 
    &$\tT{m}^{0,3}:=\dT{\Phi}^{\{0,2\}}:\cT{M}\tc\sT{P}^{(3,3)}$ & \\          \hline
\end{tabular}
\end{center}
\end{prop}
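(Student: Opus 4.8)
The plan is to reproduce, almost verbatim, the strategy used in the proof of Proposition~\ref{prop:DecC} and its higher-order analogue Proposition~\ref{prop:CGDec6}, with one structural simplification: since $\cT{M}$ is an element of $\mathcal{L}(\sgrd,\sdef)$ rather than a self-adjoint operator, its two arguments live in \emph{different} spaces, so no symmetry has to be imposed and the induced block array is the rectangular $2\times 3$ array matching $\sdef\simeq\KK^2\oplus\KK^0$ and $\sgrd\simeq\KK^3\oplus2\KK^1$. Concretely, I would write $\dt^{\star}=\cT{M}\tc\tT{T}$ for arbitrary $\tT{T}\in\sgrd$ and $\dt^{\star}\in\sdef$, and insert the two resolutions of the identity: $\qT{I}^{\sdef}=\qT{P}^{(2,2)}+\qT{P}^{(2,0)}$ on the output side and $\sT{I}^{\sgrd}=\sT{P}^{(3,3)}+\sT{P}^{(3,1s)}+\sT{P}^{(3,1r)}$ (Proposition~\ref{cor:Proj3}) on the input side. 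Expanding $\qT{I}^{\sdef}:\cT{M}\tc\sT{I}^{\sgrd}$ splits $\cT{M}$ into the six blocks $\qT{P}^{(2,a)}:\cT{M}\tc\sT{P}^{(3,b)}$ with $a\in\{2,0\}$ and $b\in\{3,1s,1r\}$.

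The second step is to rewrite each block through the embedding operators so that the reduced inner tensors listed in the table are exposed. For the $\KK^3\to\KK^2$ block nothing has to be done: it is already the genuine two-index object $\cT{M}^{2,3}=\qT{P}^{(2,2)}:\cT{M}\tc\sT{P}^{(3,3)}$. For each block whose input direction is one of the two vector summands I would substitute the factored forms $\sT{P}^{(3,1s)}=\frac{4}{3}\,\qT{\Phi}^{s\{3,1\}}\udot\qT{\Phi}^{s\{1,3\}}$ and $\sT{P}^{(3,1r)}=\frac{3}{2}\,\qT{\Phi}^{r\{3,1\}}\udot\qT{\Phi}^{r\{1,3\}}$, then regroup the triple contraction so that the left embedding is absorbed into $\cT{M}$, peeling off $\tT{m}^{2,1s}=\qT{P}^{(2,2)}:\cT{M}\tc\qT{\Phi}^{s\{3,1\}}$ and its rotation counterpart. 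Likewise, whenever the output direction is the spherical summand I would insert $\qT{P}^{(2,0)}=2\,\dT{\Phi}^{\{2,0\}}\otimes\dT{\Phi}^{\{0,2\}}$ and absorb $\dT{\Phi}^{\{0,2\}}$ into $\cT{M}$, peeling off $\tT{m}^{0,3}=\dT{\Phi}^{\{0,2\}}:\cT{M}\tc\sT{P}^{(3,3)}$ and the vectors $\V{\mu}^{0,1s},\V{\mu}^{0,1r}$ exactly as in the table.

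The numerical prefactors then appear automatically as products of the reciprocal normalisation constants $1/\gamma$ carried by the projectors: $2$ from $\qT{P}^{(2,0)}$, $\frac{4}{3}$ from $\sT{P}^{(3,1s)}$ and $\frac{3}{2}$ from $\sT{P}^{(3,1r)}$. The four mixed blocks thus acquire the coefficients $\frac{4}{3}$, $\frac{3}{2}$, $\frac{8}{3}=2\cdot\frac{4}{3}$ and $3=2\cdot\frac{3}{2}$, while the two pure blocks keep $1$ and $2$; summing the six rewritten blocks reproduces the claimed formula. Uniqueness is then immediate: by Proposition~\ref{cor:Proj3} and the order-two construction of Section~\ref{ss:decS2} the two families are orthogonal resolutions of the identity, so once the state-space decompositions of $\sdef$ and $\sgrd$ are fixed the six blocks, hence the whole expression, are determined with no remaining freedom.

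I expect the only real difficulty to be bookkeeping rather than anything conceptual: one must keep the two distinct contraction valences straight ($:$ on the $\sdef$ side and $\tc$ on the $\sgrd$ side), track the index orderings through each embedding and its transpose $\qT{\Phi}^{s\{1,3\}},\qT{\Phi}^{r\{1,3\}},\dT{\Phi}^{\{0,2\}}$, and accumulate the scalar factors without slippage. Getting the regrouping of the triple contraction right, so that $\cT{M}\tc(\qT{\Phi}^{r\{3,1\}}\udot\qT{\Phi}^{r\{1,3\}})=(\cT{M}\tc\qT{\Phi}^{r\{3,1\}})\udot\qT{\Phi}^{r\{1,3\}}$ even when the embedding is not totally symmetric, is the single point where a careless index count would produce wrong constants, and is where I would concentrate the verification.
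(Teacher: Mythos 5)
Your proposal is correct and follows essentially the same route as the paper's proof: both insert the two resolutions of the identity $\qT{I}^{\sdef}=\qT{P}^{(2,2)}+\qT{P}^{(2,0)}$ and $\sT{I}^{\sgrd}=\sT{P}^{(3,3)}+\sT{P}^{(3,1s)}+\sT{P}^{(3,1r)}$ to produce the rectangular $2\times 3$ block array, then factor each projector through its harmonic embedding ($\sT{P}^{(3,1s)}=\frac{4}{3}\,\qT{\Phi}^{s\{3,1\}}\udot\qT{\Phi}^{s\{1,3\}}$, $\sT{P}^{(3,1r)}=\frac{3}{2}\,\qT{\Phi}^{r\{3,1\}}\udot\qT{\Phi}^{r\{1,3\}}$, $\qT{P}^{(2,0)}=2\,\dT{\Phi}^{\{2,0\}}\otimes\dT{\Phi}^{\{0,2\}}$) to peel off the reduced tensors of the table with exactly the prefactors $\frac{4}{3},\frac{3}{2},2,\frac{8}{3},3$ you identify. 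The paper states this more tersely (referring back to the proof of Proposition \ref{prop:DecC} and displaying the block matrix with the factored coefficients), but the content is identical, including the implicit uniqueness from the orthogonality of the two projector families.
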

\begin{proof}
The proof follows the steps detailed in the proof of Proposition \ref{prop:DecC}, only the main points will be summed up here.
The constitutive law can be expressed in matrix notation as follows:
\ben
\begin{pmatrix}
\dd^{\star}\\
\ds^{\star}\\
\end{pmatrix}
=
\begin{pmatrix}
\cT{M}^{2,3} &\cT{M}^{2,1s} &\cT{M}^{2,1r}   \\
\cT{M}^{0,3} &\cT{M}^{0,1s} &\cT{M}^{0,1r}   \\
\end{pmatrix}
\begin{pmatrix}
\tT{H}\\
\tT{V}^{s}\\
\tT{V}^{r}\\
\end{pmatrix}
\een
with
\ben
\begin{cases}
\cT{M}^{2,3}:=\qT{P}^{(2,2)}:\cT{M}\tc\sT{P}^{(3,3)},\\
\cT{M}^{2,1s}:=\frac{4}{3}\tT{m}^{2,1s}\udot\qT{\Phi}^{s\{1,3\}},\\
\cT{M}^{2,1r}:=\frac{3}{2}\tT{m}^{2,1r}\udot\qT{\Phi}^{r\{1,3\}},\\
\cT{M}^{0,3}:=2\dT{\Phi}^{\{2,0\}}\otimes\tT{m}^{0,3},\\
\cT{M}^{0,1s}:=\frac{8}{3}\left(\dT{\Phi}^{\{2,0\}}\otimes\V{\mu}^{0,1s}\right)\udot\qT{\Phi}^{s\{1,3\}},\\
\cT{M}^{0,1r}:=3\left(\dT{\Phi}^{\{2,0\}}\otimes\V{\mu}^{0,1r}\right)\udot\qT{\Phi}^{r\{1,3\}},\\
\end{cases}
\quad
\text{in which}
\quad
\begin{cases}
\tT{m}^{2,1s}:=\qT{P}^{(2,2)}:\cT{M}\tc\qT{\Phi}^{s\{3,1\}},\\
\tT{m}^{2,1r}:=\qT{P}^{(2,2)}:\cT{M}\tc\qT{\Phi}^{r\{3,1\}},\\
\tT{m}^{0,3}:=\dT{\Phi}^{\{0,2\}}:\cT{M}\tc\sT{P}^{(3,3)},\\
\V{\mu}^{0,1s}:=\dT{\Phi}^{\{0,2\}}:\cT{M}\tc\qT{\Phi}^{s\{3,1\}},\\
\V{\mu}^{0,1r}:=\dT{\Phi}^{\{0,2\}}:\cT{M}\tc\qT{\Phi}^{r\{3,1\}}.\\
\end{cases}
\een
In matrix form, we have
\ben
\begin{pmatrix}
\dd^{\star}\\
\ds^{\star}\\
\end{pmatrix}
=
\begin{pmatrix}
\cT{M}^{2,3} &\frac{4}{3}\tT{m}^{2,1s}\udot\qT{\Phi}^{s\{1,3\}}&\frac{3}{2}\tT{m}^{2,1r}\udot\qT{\Phi}^{r\{1,3\}} \\
2\dT{\Phi}^{\{2,0\}}\otimes\tT{m}^{0,3}&\frac{8}{3}\left(\dT{\Phi}^{\{2,0\}}\otimes\V{\mu}^{0,1s}\right)\udot\qT{\Phi}^{s\{1,3\}} &3\left(\dT{\Phi}^{\{2,0\}}\otimes\V{\mu}^{0,1r}\right)\udot\qT{\Phi}^{r\{1,3\}}   \\
\end{pmatrix}
\begin{pmatrix}
\tT{H}\\
\tT{V}^{s}\\
\tT{V}^{r}\\
\end{pmatrix}.
\een
By construction,  it can directly be checked that
\ben
\cT{M}^{2,3}\in\KK^2\otimes\KK^3,\ (\tT{m}^{2,1s},\tT{m}^{2,1r})\in\KK^2\otimes\KK^1,\ \tT{m}^{0,3}\in\KK^0\otimes\KK^3,\ \text{and} (\V{\mu}^{0,1s},\V{\mu}^{0,1r})\in\KK^0\otimes\KK^1.
\een
\end{proof}

\subsubsection{Harmonic Decomposition}

In the \red{Intermediate Block Decomposition} of $\cT{M}$ the only non-harmonic tensors are $\cT{M}^{2,3}\in\KK^2\otimes\KK^3$ and  $(\tT{m}^{2,1s},\tT{m}^{2,1r})$ which belong to $\KK^2\otimes\KK^1$. Their harmonic decompositions are provided in Appendix \ref{sec:HarmEmb}, the associated results are the following:

\noindent$\bullet$ Tensors $\tT{m}^{2,1}\in\KK^2\otimes\KK^1$ admit the uniquely defined harmonic decomposition
\ben
\tT{m}^{2,1}=\tT{H}+\qT{\Phi}^{\{3,1\}}\udot\V{v},\quad  \text{where}\  \tT{H} \in \KK^3,\ \V{v} \in \KK^1,\ \text{and}\ 
(\qT{\Phi}^{\{3,1\}}\udot\V{v})_{ijk}=\frac{1}{2}(v_{i}\delta_{jk}+v_{j}\delta_{ik}-v_{k}\delta_{ij}).
\een
Conversely,
\ben
\V{v}=\tr_{13}\tT{m}^{2,1}\ \text{and}\   \tT{H}=\tT{m}^{2,1}-\qT{\Phi}^{\{3,1\}}\udot\V{v}.
\een
\noindent$\bullet$ Tensors $\cT{M}^{2,3}\in\KK^2\otimes\KK^3$ admit the uniquely defined harmonic decomposition
\ben
\cT{M}^{2,3}=\cT{H}+\sT{\Phi}^{\{5,1\}}\udot\V{v},\   \text{where}\  \cT{H} \in \KK^5,\  \V{v} \in \KK^1 
\een
and
\ban
4(\sT{\Phi}^{\{5,1\}}\udot\V{v})_{ijklm}&=&v_{i}(\delta_{jk}\delta_{lm}-\delta_{jl}\delta_{km}-\delta_{jm}\delta_{kl})-v_{j}\delta_{im}\delta_{kl}\\
&+&v_{k}(-2\delta_{ij}\delta_{lm}+\delta_{il}\delta_{jm}+2\delta_{im}\delta_{jl})+v_{l}\delta_{ik}\delta_{jm}+v_{m}\delta_{ij}\delta_{kl}
\ean
Conversely,
\ben
\V{v}=\tr_{12}\left(\tr_{13}\cT{M}^{2,3}\right)\ \text{and}\    \tT{H}=\cT{M}^{2,3}-\sT{\Phi}^{\{5,1\}}\udot\V{v}.
\een

Using these results we obtain:

\begin{prop}[\textbf{Clebsch-Gordan Harmonic Decomposition of} $\cT{M}\in\Elac$]
The tensor $\cT{M}\in\Elac$ admits the uniquely defined Clebsch-Gordan  Harmonic Decomposition associated to the family of projectors $(\sT{P}^{(3,3)},\sT{P}^{(3,1s)},\sT{P}^{(3,1s)},\qT{P}^{(2,2)},\qT{P}^{(2,0)})$:
\ban
\cT{M}&=&\cT{H}^{2,3}+\frac{4}{3}\tT{H}^{2,1s}\udot\qT{\Phi}^{s\{1,3\}}+\frac{3}{2}\tT{H}^{2,1r}\udot\qT{\Phi}^{r\{1,3\}}+2\dT{\Phi}^{\{2,0\}}\otimes\tT{H}^{0,3}+\sT{\Phi}^{\{5,1\}}\udot\V{v}^{2,3}\\
&+&\frac{4}{3}(\qT{\Phi}^{\{3,1\}}\udot\V{v}^{2,1s})\udot\qT{\Phi}^{s\{1,3\}}+\frac{3}{2}(\qT{\Phi}^{\{3,1\}}\udot\V{v}^{2,1r})\udot\qT{\Phi}^{r\{1,3\}}+
\frac{8}{3}\left(\dT{\Phi}^{\{2,0\}}\otimes\V{\mu}^{0,1s}\right)\udot\qT{\Phi}^{s\{1,3\}}\\
&+&3\left(\dT{\Phi}^{\{2,0\}}\otimes\V{\mu}^{0,1r}\right)\udot\qT{\Phi}^{r\{1,3\}}  
\ean
in which $\cT{H}^{2,3}\in\KK^5$, $(\tT{H}^{2,1s}, \tT{H}^{2,1r},\tT{H}^{0,3})\in(\KK^3)^3$, $(\V{v}^{2,3},\V{v}^{2,1s},\V{v}^{2,1r},\V{v}^{0,1s},\V{v}^{0,1r})\in(\KK^1)^5$.
Those elements are defined from $\cT{M}$ as follows:
\begin{center}
\begin{tabular}{|c|c|c|}
  \hline
 $\KK^{1}$ & $\KK^{3}$ & $\KK^{5}$  \\
  \hline
  $\V{v}^{0,1r} =\V{\mu}^{0,1r}$  & &  \\ 
  $\V{v}^{0,1s} =\V{\mu}^{0,1s}$ & & \\
  $\V{v}^{2,1r} =  \tT{m}^{2,1r}:\id$ &$\tT{H}^{2,1r}=\tT{m}^{2,1r}-\qT{\Phi}^{\{3,1\}}\udot\V{v}^{2,1r}$&\\
  $\V{v}^{2,1s}=\tT{m}^{2,1s}:\id$   &$\tT{H}^{2,1s}=\tT{m}^{2,1s}-\qT{\Phi}^{\{3,1\}}\udot\V{v}^{2,1s}$&\\
  &$\tT{H}^{0,3}= \tT{m}^{0,3}$&\\
  $\V{v}^{2,3}=\tr_{12}(\tr_{13}(\cT{M}^{2,3}))$&&$\cT{H}^{2,3}=\cT{M}^{2,3}-\sT{\Phi}^{\{5,1\}}\udot\V{v}^{2,3}$\\                                                                         
  \hline
\end{tabular}
\end{center}
in which the intermediate quantities are defined in Proposition \ref{prop:CGDec5}  and $\qT{\Phi}^{s\{3,1\}}, \qT{\Phi}^{r\{3,1\}}$ and $\dT{\Phi}^{\{0,2\}}$ are defined, respectively, in Propositions \ref{thm:DecT3} and \ref{thm:DecT2}. 
\end{prop}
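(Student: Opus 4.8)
The plan is to reproduce, for the coupling tensor $\cT{M}$, the same two-stage reduction already carried out for $\qT{C}$ and $\sT{A}$: start from the Intermediate Block Decomposition and then reduce each non-harmonic block by the appropriate Clebsch-Gordan identity. First I would invoke Proposition \ref{prop:CGDec5}, which supplies the uniquely defined IBD of $\cT{M}$ in terms of the six blocks $\cT{M}^{2,3}$, $\tT{m}^{2,1s}$, $\tT{m}^{2,1r}$, $\tT{m}^{0,3}$, $\V{\mu}^{0,1s}$ and $\V{\mu}^{0,1r}$, together with the embedding operators $\qT{\Phi}^{s\{1,3\}}$, $\qT{\Phi}^{r\{1,3\}}$ and $\dT{\Phi}^{\{2,0\}}$. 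By construction these blocks lie in $\KK^2\otimes\KK^3$, $(\KK^2\otimes\KK^1)^2$, $\KK^0\otimes\KK^3$ and $(\KK^0\otimes\KK^1)^2$ respectively; since $\KK^0\otimes\KK^3\simeq\KK^3$ and $\KK^0\otimes\KK^1\simeq\KK^1$ are already irreducible, the three blocks $\tT{m}^{0,3}$, $\V{\mu}^{0,1s}$, $\V{\mu}^{0,1r}$ need only be relabelled as $\tT{H}^{0,3}$, $\V{v}^{0,1s}$, $\V{v}^{0,1r}$, leaving exactly three blocks to reduce.

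Next I would apply the Clebsch-Gordan formula of Lemma \ref{lem-Gordan}, which gives $\KK^2\otimes\KK^3\simeq\KK^5\oplus\KK^1$ and $\KK^2\otimes\KK^1\simeq\KK^3\oplus\KK^1$, so that $\cT{M}^{2,3}$ reduces to a pair in $\KK^5\times\KK^1$ and each of $\tT{m}^{2,1s},\tT{m}^{2,1r}$ to a pair in $\KK^3\times\KK^1$. The explicit reductions are precisely the elementary harmonic decompositions recalled just above the proposition and established in Appendix \ref{sec:HarmEmb}: I would write $\cT{M}^{2,3}=\cT{H}^{2,3}+\sT{\Phi}^{\{5,1\}}\udot\V{v}^{2,3}$ with $\V{v}^{2,3}=\tr_{12}(\tr_{13}\cT{M}^{2,3})$, and $\tT{m}^{2,1s}=\tT{H}^{2,1s}+\qT{\Phi}^{\{3,1\}}\udot\V{v}^{2,1s}$, $\tT{m}^{2,1r}=\tT{H}^{2,1r}+\qT{\Phi}^{\{3,1\}}\udot\V{v}^{2,1r}$ with $\V{v}^{2,1}=\tr_{13}\tT{m}^{2,1}$, which are exactly the inverse formulas appearing in the final table.

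Finally I would substitute these three expansions into the IBD of Proposition \ref{prop:CGDec5} and expand linearly, distributing the single contractions $\udot\qT{\Phi}^{s\{1,3\}}$ and $\udot\qT{\Phi}^{r\{1,3\}}$ over the sums $\tT{H}^{2,1s}+\qT{\Phi}^{\{3,1\}}\udot\V{v}^{2,1s}$ and $\tT{H}^{2,1r}+\qT{\Phi}^{\{3,1\}}\udot\V{v}^{2,1r}$; reading off the prefactors $\tfrac{4}{3}$, $\tfrac{3}{2}$, $\tfrac{8}{3}$, $3$ and the factor $2$ directly from the IBD then produces the displayed formula term by term. Uniqueness follows by combining the two sources of uniqueness already available: the IBD itself is uniquely determined once the state-space decompositions of $\sdef$ and $\sgrd$ are fixed (Proposition \ref{prop:CGDec5}), and each Clebsch-Gordan reduction is unique because in 2D the summands $\KK^5$, $\KK^3$, $\KK^1$ carry pairwise distinct orders. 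The one place that requires care---and the step most prone to error---is the bookkeeping of the five copies of $\KK^1$ predicted by the structure $\KK^5\oplus3\KK^3\oplus5\KK^1$ of $\Elac$: I would check that the five vectors $\V{v}^{2,3}$, $\V{v}^{2,1s}$, $\V{v}^{2,1r}$, $\V{v}^{0,1s}$, $\V{v}^{0,1r}$ extracted by the procedure are genuinely independent components of the parametrisation, which is guaranteed precisely because they are extracted from distinct, mutually orthogonal blocks of the IBD.
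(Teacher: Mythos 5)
Your proposal is correct and follows essentially the same route as the paper: the paper likewise obtains this proposition by inserting the elementary harmonic decompositions of $\cT{M}^{2,3}\in\KK^2\otimes\KK^3$ and $\tT{m}^{2,1s},\tT{m}^{2,1r}\in\KK^2\otimes\KK^1$ (Lemmas \ref{lem:K2xK3} and \ref{lem:K2xK1} of Appendix \ref{sec:HarmEmb}) into the Intermediate Block Decomposition of Proposition \ref{prop:CGDec5}, the remaining blocks $\tT{m}^{0,3}$, $\V{\mu}^{0,1s}$, $\V{\mu}^{0,1r}$ being already irreducible. Your uniqueness argument (uniqueness of the IBD combined with the fact that in 2D each product $\KK^p\otimes\KK^q$ splits into summands of distinct orders) is also exactly the paper's justification.
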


\begin{rem}
\red{Following Remarks \ref{rem:TypeIa} and \ref{rem:TypeIb}, the formula for strain-gradient elasticity expressed within type I formulation is directly obtained by substituting $\sT{P}^{\sharp(3,3)}$, $\sT{P}^{\sharp(3,1r)}$, and $\qT{\Phi}^{\sharp\ r\{3,1\}}$   for $\sT{P}^{(3,3)}$, $\sT{P}^{(3,1r)}$ and  $\qT{\Phi}^{r\{3,1\}}$  in the given formula.}
\end{rem}

\subsection{Link with symmetry classes}
\red{As demonstrated in \cite{ADR15,AKO16}, the space $\Elac$ is divided into the 6 following symmetry classes:
 \begin{equation}
 \SymC(\Elac)=\{\triv,[\ZZ^{\pi}_2],[\ZZ_3],[\DD_3],[\DD_5],[\ode]\}
 \end{equation}
These symmetry classes can be linked as shown in the following diagram:
\begin{equation}
 \xymatrix {
    [\ZZ^{\pi}_2]\ar[rr]  \ar[dr] && [\DD_5] \ar[dr]  \\
    & [\DD_3] \ar[rr]  && [\ode]  \\
    [\triv] \ar[uu] \ar[dr] && \\
    & [\ZZ_3] \ar[uu]&& \\
  } 
\end{equation}
For the high-symmetries classes $[\DD_5]$,$[\ode]$ the harmonic decomposition  simplifies into
\begin{itemize}
\item  Symmetry class $[\DD_5]$:
\ben
\cT{M}=\cT{H}^{2,3};
\een
\item  Symmetry class $[\ode]$:
\ben
\cT{M}=\cT{0}.
\een
\end{itemize}
In order to provide the harmonic decomposition for the other cases without ambiguity, additional tools, not discussed in this paper, are needed. We therefore defer this point to a future contribution.}
\begin{rem}
As demonstrated in \cite{ADR15}, the set of symmetry classes for the complete constitutive law, that is of a triplet $\mathcal{E}:=\Big(\qT{C}, \cT{M}, \sT{A} \Big)\in \Sgrd$, is not the union of the symmetry classes of each tensor space considered separately. It can be shown, for example, that $[\ZZ_{5}]$ is a symmetry class of the complete  constitutive law, although it is not a symmetry classes of any individual tensor in the triplet  $\mathcal{E}$ .   We therefore refer interested readers to the reference \cite{ADR15} to see how to correctly combine the different results for each of the 14  symmetry classes of the 2D strain-gradient elasticity.
\end{rem}

\section{Conclusion}

In this paper the harmonic decomposition of the constitutive tensors appearing in the 2D Mindlin's strain-gradient elasticity has been investigated. Since no method available in the literature was considered satisfactory for the harmonic decomposition of higher-order tensors, a new harmonic decomposition, referred to here as the \textit{Clebsch-Gordan Harmonic Decomposition}, was proposed.
The main results of the paper are two-fold:
\begin{itemize}
\item the explicit 2D Clebsh-Gordan harmonic decompositions of:
\begin{itemize}
\item the fifth-order coupling tensor of strain-gradient elasticity;
\item the sixth-order elasticity tensor of strain-gradient elasticity.
\end{itemize}
\item the algorithm for the explicit Clebsh-Gordan harmonic decomposition for bidimensional tensors.
\end{itemize}
The Clebsch-Gordan algorithm is two-step and based on the explicit construction of the Clebsch-Gordan harmonic products. This approach, which shares some ideas with the one introduced by Zou in \cite{ZZD+01}, allows one to easily obtain orthogonal harmonic decompositions of higher-order tensors. Since the Clebsch-Gordan construction generates a new harmonic decomposition from a known one, the procedure can be iterated to obtain harmonic decompositions of arbitrary order tensors. The approach developed here in the 2D situation can be extended without any problem to the harmonic decomposition of 3D tensors. The study of this extension will be the object of a future contribution. 

\red{As we have already mentioned, the proposed method for decomposing tensors that we have introduced is very general and its applicability is by no means restricted to strain-gradient elasticity.  We believe that this method will find interesting applications beyond the one considered in the present contribution.}

\section*{Acknowledgements}
We thank Marc Olive for the multiple scientific discussions throughout the redaction of the manuscript.
The first and the second authors acknowledge the support of the French Agence Nationale de la Recherche (ANR), under grant ANR-17-CE08-0039 (project ArchiMatHOS).

\appendix
\renewcommand*{\thesection}{\Alph{section}}
 
\section{Proofs of Theorem 3.1 and Proposition 3.2}\label{s:Pro31}

This appendix is devoted to the formulation and the proofs of Lemmas required to demonstrate Theorem \ref{thm:HarEmb}  and  Proposition \ref{prop:TrGam}.
The main results of this appendix are Proposition \ref{prop:TrGam2} and Lemmas \ref{lem:SymPro} and \ref{prop:TrGam2}. The other Propositions are intermediate results necessary to demonstrate them.
%

\begin{prop}
If $n$ and $k$ are of the same parity, $\mathrm{\Phi}^{\{n,k\}}$ is an non-null isotropic tensor of order $k+n$, i.e.
\ben
\mathrm{\Phi}^{\{n,k\}}=\sum_{i}\lambda_{i}\vi^{(n+k)}_{i},
\een
otherwise $\mathrm{\Phi}^{\{n,k\}}$ is the null tensor.
\end{prop}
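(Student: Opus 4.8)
The plan is to show that the defining $\ode$-equivariance of the embedding $\mathrm{\Phi}^{\{n,k\}}$, viewed as a linear map $\KK^{k}\to\TT^{n}$, is equivalent to the \emph{isotropy} of the $(n+k)$th-order tensor $\mathrm{\Phi}^{\{n,k\}}$, and then to read off the stated form from the classification of isotropic tensors recalled in Section 1.

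First I would translate equivariance into isotropy. By construction $\mathrm{\Phi}^{\{n,k\}}$ satisfies, for every $\vG\in\ode$ and every $\vv\in\KK^{k}$,
\ben
\mathrm{\Phi}^{\{n,k\}}\rdots{k}(\vG\star\vv)=\vG\star\left(\mathrm{\Phi}^{\{n,k\}}\rdots{k}\vv\right).
\een
Replacing $\vv$ by $\vG^{-1}\star\vv$ and using that $\vG$ is orthogonal (so that contracting the last $k$ slots of $\vG\star\mathrm{\Phi}^{\{n,k\}}$ against $\vv$ amounts to contracting $\mathrm{\Phi}^{\{n,k\}}$ against $\vG^{-1}\star\vv$), a short component computation yields
\ben
\left(\vG\star\mathrm{\Phi}^{\{n,k\}}\right)\rdots{k}\vv=\mathrm{\Phi}^{\{n,k\}}\rdots{k}\vv,\qquad \forall\,\vv\in\KK^{k},
\een
where now $\star$ acts on all $n+k$ indices. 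Recalling from the construction (the transpose relation \eqref{def:Transp} together with the fact that $\mathrm{\Pi}^{\{k,n\}}$ has image in $\KK^{k}$) that $\mathrm{\Phi}^{\{n,k\}}$ is harmonic in its last $k$ indices, and that the induced pairing on $\KK^{k}$ is non-degenerate, a tensor that is harmonic in its last $k$ slots is determined by its contractions against all of $\KK^{k}$; hence $\vG\star\mathrm{\Phi}^{\{n,k\}}=\mathrm{\Phi}^{\{n,k\}}$ for every $\vG\in\ode$, i.e. $\mathrm{\Phi}^{\{n,k\}}\in\II^{n+k}$.

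Next I would invoke the description of isotropic tensors recalled after Equation \eqref{eq:TIso4}: by Weyl's theorem \cite{Wey46} every element of $\II^{m}$ is a linear combination of products of the Kronecker tensor $\id$, i.e. of the elementary isotropic tensors $\vi^{(m)}_{i}$. Such products necessarily have even order, and the Levi-Civita tensor $\Jd$ is excluded because it is only $\sod$-invariant and transforms by $\det\vG$ under the full group $\ode$ (cf. Equation \eqref{eq:TrsIrr0}). Consequently, if $n$ and $k$ have the same parity then $n+k$ is even and $\mathrm{\Phi}^{\{n,k\}}=\sum_{i}\lambda_{i}\,\vi^{(n+k)}_{i}$, while if $n$ and $k$ have opposite parity then $n+k$ is odd, $\II^{n+k}=\{0\}$, and therefore $\mathrm{\Phi}^{\{n,k\}}=0$. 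Non-nullity in the same-parity case is immediate: a harmonic embedding is by definition an injective (hence nonzero) map on $\KK^{k}\neq\{0\}$, so the representing tensor cannot vanish.

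The main obstacle is precisely the first step: one must be careful that the equivariance of the \emph{map} $\KK^{k}\to\TT^{n}$ a priori only controls $\mathrm{\Phi}^{\{n,k\}}$ through its contractions against harmonic arguments, so upgrading this to genuine isotropy of the full $(n+k)$th-order tensor relies on $\mathrm{\Phi}^{\{n,k\}}$ being harmonic in its last $k$ slots together with the non-degeneracy of the pairing on $\KK^{k}$. Equivalently, and perhaps more transparently, one may identify the space of $\ode$-equivariant maps $\KK^{k}\to\TT^{n}$ with the $\ode$-invariant subspace of $\TT^{n}\otimes\KK^{k}$ and argue directly there; either way, once isotropy is established the remainder is a direct appeal to Weyl's classification and a parity count.
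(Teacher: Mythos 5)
Your proof is correct and follows essentially the same route as the paper's: substitute $\vv \mapsto g^{T}\star\vv$ in the equivariance relation, transfer the group action onto all $n+k$ indices of $\mathrm{\Phi}^{\{n,k\}}$, conclude that it is an isotropic tensor, and finish with Weyl's classification of $\ode$-isotropic tensors and a parity count. The only differences are refinements: you explicitly justify the passage from ``all contractions against $\KK^{k}$ vanish'' to ``$g\star\mathrm{\Phi}^{\{n,k\}}=\mathrm{\Phi}^{\{n,k\}}$'' (via harmonicity of the last $k$ slots and non-degeneracy of the pairing on $\KK^{k}$), a step the paper asserts without comment, and you also supply the non-nullity argument in the same-parity case, which the paper's proof leaves implicit.
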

\begin{proof}
$\mathrm{\Phi}^{\{n,k\}}$ is an $\ode$-equivariant linear map between tensor spaces of order $n$ and $k$ with $n\geq k$, consequently:
\ben
\forall\vv\in\KK^k,\  \forall g\in\ode,\quad g\rayp{n}(\mathrm{\Phi}^{\{n,k\}}\rdots{k}\vv)=\mathrm{\Phi}^{\{n,k\}}\rdots{k}(g\rayp{k}\vv).
\een
Using the change of variables $\vv=g^{T}\rdots{k}\vv^{\star}$, we have
\ben
\forall\vv^\star\in\KK^k,\  \forall g\in\ode,\quad g\rayp{n}(\mathrm{\Phi}^{\{n,k\}}\rdots{k}(g^{T}\rayp{k}\vv^{\star}))=\mathrm{\Phi}^{\{n,k\}}\rdots{k}\vv^{\star}.
\een
Moreover,
\ban
g\rayp{n}(\mathrm{\Phi}^{\{n,k\}}\rdots{k}(g^{T}\rayp{k}\vv^{\star}))&=&g_{i_{1}j_{1}}\ldots g_{i_{n}j_{n}}\Phi_{j_{1}\ldots j_{n}k_{1}\ldots k_{k}}g^{T}_{k_{1}l_{1}}\ldots g^{T}_{k_{k}l_{k}}v^\star_{l_{1}\ldots l_{k}}\\
&=&g_{i_{1}j_{1}}\ldots g_{i_{n}j_{n}}g^{T}_{k_{1}l_{1}}\ldots g^{T}_{k_{k}l_{k}}\Phi_{j_{1}\ldots j_{n}k_{1}\ldots k_{k}}v^\star_{l_{1}\ldots l_{k}}\\
&=&g_{i_{1}j_{1}}\ldots g_{i_{n}j_{n}}g_{l_{1}k_{1}}\ldots g_{l_{k}k_{k}}\Phi_{j_{1}\ldots j_{n}k_{1}\ldots k_{k}}v^\star_{l_{1}\ldots l_{k}}\\
&=&(g\rayp{n+k}\mathrm{\Phi}^{\{n,k\}})\rdots{k}\vv^{\star}.
\ean
Then
\ben
\forall\vv^\star\in\KK^k,\  \forall g\in\ode,\quad (g\rayp{n+k}\mathrm{\Phi}^{\{n,k\}}- \mathrm{\Phi}^{\{n,k\}}) \rdots{k}\vv^{\star}=0
\een
which implies that
\ben
\forall g\in\ode,\quad g\rayp{n+k}\mathrm{\Phi}^{\{n,k\}} =\mathrm{\Phi}^{\{n,k\}}.
\een
Since the only isotropic tensor of odd order is the null tensor, $\mathrm{\Phi}^{\{n,k\}}$ is null if $n$ and $k$ are of different parity. If $n$ and $k$ are of the same parity, then $\mathrm{\Phi}^{\{n,k\}}$ is an isotropic tensor of order $n+k$, and thus can be expressed as a linear combination of elements of $\II^{(n+k)}$, i.e.
\ben
\mathrm{\Phi}^{\{n,k\}}=\sum_{i}\lambda_{i}\vi^{(n+k)}_{i}.
\een
\end{proof}

%

\begin{prop}
Consider $\mathrm{\Phi}^{\{n,k\}}\in\II^{(n+k)}$ and $\vv\in\KK^{k}$.  The image $\vV\in\HH^{(n,k)}$ of $\vv$ by $\mathrm{\Phi}^{\{n,k\}}$ has the following form:
\ben
\vV=\mathrm{\Phi}^{\{n,k\}}\rdots{k}\vv=\sum_{j}\lambda_{j}\varsigma_{j}\ast(\vi^{(n-k)}_{1}\otimes\vv), \quad \text{with}\  \varsigma_{j}\in\GS_{n}.
\een
\end{prop}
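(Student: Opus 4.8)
The plan is to reduce the claim to a single elementary isotropic tensor and then exploit the fact that $\vv$ is traceless to discard every unwanted contraction. By the preceding proposition $\mathrm{\Phi}^{\{n,k\}}=\sum_{i}\lambda_{i}\vi^{(n+k)}_{i}$, and by Weyl's theorem \cite{Wey46} each $\vi^{(n+k)}_{i}$ may be taken to be an elementary isotropic tensor, i.e.\ a product of $(n+k)/2$ copies of $\dT{i}^{(2)}$. Since the contraction $\rdots{k}$ is linear, it suffices to compute $\vi^{(n+k)}_{i}\rdots{k}\vv$ for one such factor. (When $n$ and $k$ have opposite parity the preceding proposition gives $\mathrm{\Phi}^{\{n,k\}}=0$ and the identity holds with an empty sum, so I assume $n\equiv k\pmod 2$, whence $n-k$ is even and $\vi^{(n-k)}_1$ is defined.)

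First I would label the slots of $\vi^{(n+k)}_{i}$ so that the first $n$ carry the free indices $i_1,\dots,i_n$ and the last $k$ carry the indices $j_1,\dots,j_k$ contracted against $\vv$, so that
\ben
\big(\vi^{(n+k)}_{i}\rdots{k}\vv\big)_{i_1\ldots i_n}=\big(\vi^{(n+k)}_{i}\big)_{i_1\ldots i_n j_1\ldots j_k}\,v_{j_1\ldots j_k}.
\een
The product of $(n+k)/2$ Kronecker deltas defining $\vi^{(n+k)}_{i}$ pairs these $n+k$ slots, and each delta is of exactly one of three types: (a) it pairs two free slots, (b) it pairs two contracted slots, or (c) it pairs a free slot with a contracted one.

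The hard part is the following observation, which is the heart of the argument: a type-(b) factor $\delta_{j_aj_b}$, once inserted into $v_{j_1\ldots j_k}$, contracts two slots of $\vv$; since $\vv\in\KK^{k}$ is harmonic, hence totally symmetric and traceless, every such partial trace vanishes. Consequently only the terms containing no type-(b) delta survive. In such a term each of the $k$ contracted indices must be paired, through a type-(c) delta, with a distinct free index; this consumes $k$ deltas together with $k$ free slots, and the remaining $n-k$ free slots are necessarily paired among themselves by the $(n-k)/2$ type-(a) deltas, which reproduce a copy of $\vi^{(n-k)}_1$ on those slots.

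It then remains to recognise this surviving structure as a permutation of $\vi^{(n-k)}_1\otimes\vv$. Each type-(c) factor $\delta_{i_aj_b}$ relabels the $b$-th slot of $\vv$ by the free index $i_a$, so the term equals a product of $(n-k)/2$ deltas carried by one subset of $\{i_1,\dots,i_n\}$ times $\vv$ evaluated on the complementary free indices; the total symmetry of $\vv$ makes the internal ordering of its slots immaterial. This is exactly $\varsigma_{i}\ast(\vi^{(n-k)}_1\otimes\vv)$ for the $\varsigma_{i}\in\GS_{n}$ sending the first $n-k$ positions onto the delta-paired free slots and the last $k$ positions onto the $\vv$-linked ones. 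Summing over $i$ and absorbing the constants $\lambda_{i}$ into new coefficients gives $\vV=\sum_{j}\lambda_{j}\,\varsigma_{j}\ast(\vi^{(n-k)}_1\otimes\vv)$; membership $\vV\in\HH^{(n,k)}$ is automatic since $\vV=\mathrm{\Phi}^{\{n,k\}}\rdots{k}\vv$ is by construction the image of the harmonic embedding.
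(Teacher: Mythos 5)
Your proof is correct and follows exactly the paper's own argument: expand $\mathrm{\Phi}^{\{n,k\}}$ on the elementary isotropic tensors, observe that any delta contracting two slots of $\vv$ kills the term because $\vv$ is traceless, and recognise the survivors as permutations of $\vi^{(n-k)}_{1}\otimes\vv$. Your classification of the deltas into types (a), (b), (c) merely spells out in detail what the paper states in one sentence, so there is nothing to add.
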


\begin{proof}
Since $\mathrm{\Phi}^{\{n,k\}}=\sum_{i}\lambda_{i}\vi^{(n+k)}_{i}$, $\vV$ has the following expression 
\ben
\vV=\mathrm{\Phi}^{\{n,k\}}\rdots{k}\vv=\sum_{i}\lambda_{i}\vi^{(n+k)}_{i}\rdots{k}\vv.
\een
Since $\vv\in\KK^{k}$, $\vv$ is totally symmetric and traceless, as such any term $\vi^{(n+k)}_{i}\rdots{k}\vv$  which implies contraction within $\vv$ disappears. The non-zero terms are, up to index permutation, those of the form $\vi^{(n-k)}_{1}\otimes\vv$, which gives the announced result.
\end{proof}

\begin{prop}\label{lem:PropNorm}
Let $\vV\in\HH^{(n,k)}$ be the image of $\vv\in\KK^{k}\setminus\{0\}$ by $\mathrm{\Phi}^{\{n,k\}}\in\II^{(n+k)}$.  There exists  $\gamma>0$ independent of $\vv$ such that
\ben
\|\vV\|^2=\gamma \|\vv\|^2.
\een
\end{prop}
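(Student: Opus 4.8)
The plan is to recognise $\|\vV\|^2$ as the diagonal of a symmetric bilinear form on $\KK^{k}$ obtained by pulling back the inner product of $\TT^{n}$ along $\mathrm{\Phi}^{\{n,k\}}$, and then to exploit the near-transitivity of the orthogonal action to force that form to be a positive multiple of the canonical inner product on $\KK^{k}$. Concretely, I would introduce
\ben
B(\vv,\ww):=\langle\mathrm{\Phi}^{\{n,k\}}\rdots{k}\vv,\ \mathrm{\Phi}^{\{n,k\}}\rdots{k}\ww\rangle_{\TT^{n}},\qquad\vv,\ww\in\KK^{k},
\een
which is symmetric and bilinear and satisfies $B(\vv,\vv)=\|\vV\|^2$. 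The statement to prove is then exactly that $B=\gamma\,\langle\cdot,\cdot\rangle_{\KK^{k}}$ for some $\gamma>0$.

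First I would check that $B$ is $\ode$-invariant. Since the $\ode$-action on tensors is orthogonal it preserves the inner product of $\TT^{n}$, and $\mathrm{\Phi}^{\{n,k\}}$ is $\ode$-equivariant (by the preceding proposition, $\mathrm{\Phi}^{\{n,k\}}\in\II^{(n+k)}$), so for every $g\in\ode$
\ben
B(g\rayp{k}\vv,\,g\rayp{k}\ww)=\langle g\rayp{n}(\mathrm{\Phi}^{\{n,k\}}\rdots{k}\vv),\ g\rayp{n}(\mathrm{\Phi}^{\{n,k\}}\rdots{k}\ww)\rangle_{\TT^{n}}=B(\vv,\ww).
\een

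The key step is to turn this invariance into proportionality. For $k\geq 1$ the space $\KK^{k}$ is two-dimensional and the subgroup $\sod\subset\ode$ already acts on it through $\theta\mapsto\rho_{k}(\vR(\theta))$ by all planar rotations (the angle $k\theta$ sweeps the whole circle), hence $\sod$ acts transitively on its unit circle. The $\sod$-invariant function $\vv\mapsto\|\vV\|^2$ is therefore constant on that circle, equal to some $\gamma$, and by degree-two homogeneity $\|\vV\|^2=\gamma\|\vv\|^2$ for all $\vv$. Equivalently, writing $B=\langle S\,\cdot,\cdot\rangle_{\KK^{k}}$ with $S$ self-adjoint, invariance forces $S$ to commute with every $\rho_{k}(\vR(\theta))$, so $S=aI+bJ$ with $J$ the rotation by $\pi/2$; self-adjointness kills the $J$-part and gives $S=\gamma\,\mathrm{Id}$, which is the real form of Schur's lemma in disguise. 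The one-dimensional cases $k\in\{0,-1\}$ are immediate, any two quadratic forms on a line being proportional.

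Finally, $\gamma>0$ and is manifestly independent of $\vv$: the embedding $\mathrm{\Phi}^{\{n,k\}}$ is injective, since it realises the isomorphism $\KK^{k}\simeq\HH^{(n,k)}$, so $\vv\neq 0$ gives $\vV\neq 0$ and thus $\gamma\|\vv\|^2=\|\vV\|^2>0$. The step I expect to require the most care is precisely establishing the $\vv$-independence of $\gamma$: everything hinges on the irreducibility of $\KK^{k}$ (here used in the concrete form of transitivity of $\sod$ on its unit circle), without which the ratio $\|\vV\|^2/\|\vv\|^2$ could a priori vary with the direction of $\vv$.
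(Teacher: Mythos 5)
Your proof is correct and follows essentially the same route as the paper's own argument: the paper likewise pulls the norm back through the $\ode$-equivariant map $\mathrm{\Phi}^{\{n,k\}}$, checks that the ratio $\|\vV\|^2/\|\vv\|^2$ is homogeneous of degree zero and invariant under the orthogonal action, and concludes constancy from the fact that every nonzero element of $\KK^{k}$ is a rotation of a scaled reference element. Your Schur-lemma packaging and the explicit remark that $\rho_{k}(\vR(\theta))$ sweeps the whole circle are just sharper statements of the same irreducibility/transitivity ingredient the paper invokes, so no substantive difference remains.
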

\begin{proof}
As
\ben
\vV=\sum_{j}\lambda_{j}\varsigma_{j}\ast(\vi^{(n-k)}_{1}\otimes\vv) \quad \text{with}\  \varsigma_{j}\in\GS_{n},
\een
we remark that $\vV$ is null if and only if $\vv$ is  null.  So let us assume that $\vv\in\KK^{k}\setminus\{0\}$. Since $\|\vV\|^2$ and $\|\vv\|^2$ are strictly positive,  there exists $\gamma>0$ such $\|\vV\|^2=\gamma \|\vv\|^2$.  Let us show that $\gamma$ is independent of $\vv$. We consider the function $\rho:\KK^{k}\setminus\{0\}\to\RR^{+}$ defined by
\ben
\rho(\vv):=\frac{\|\mathrm{\Phi}^{\{n,k\}}\rdots{k}\vv\|^{2}}{\|\vv\|^2}
\een
in which the norms are the Frobenius norms associated with the dot product corresponding to the tensor order, i.e. $\rdots{k}$ for $k$th-order tensors.
Since $\KK^{k}$ is irreducible and its elements transform as vectors, any element of $\KK^{k}$ can be obtained from a non-null reference one, $\vv^{1}$, up to a scaling factor and up to a rotation. We can then write
\ben
\vv=\lambda g\rayp{k}\vv^{1},\quad \text{with}\ \lambda\in\RR^*,\ \vG\in\ode,
\een
obviously the scaling and the rotation transformations commute. As a consequence, $\gamma$ is independent of $\vv$ if the function $\rho$ is constant on $\KK^{k}\setminus\{0\}$.
Let us show that the function $\rho$ is constant on $\KK^k\setminus\{0\}$.  
\begin{itemize}
\item  We observe that
\ben
\forall(\lambda,\vv)\neq(0,\mathbf{0}),\quad \rho(\lambda\vv)=\frac{\|\mathrm{\Phi}^{\{n,k\}}\rdots{k}\lambda\vv\|^{2}}{ \|\lambda\vv\|^2}=\frac{\|\mathrm{\Phi}^{\{n,k\}}\rdots{k}\vv\|^{2}}{ \|\vv\|^2}=\rho(\vv).
\een
So $\rho$ is an homogeneous function of degree $0$, meaning that $\rho(\vv)$ is independent of the norm of $\vv$.
\item $\rho$ is an isotropic function, i.e. $\rho(\vG\star\vv)= \rho(\vv)$ for $\vv\neq 0$ and $\forall \vG\in\ode$. 
Firstly, $\mathrm{\Phi}^{\{n,k\}}$ is $\ode$-equivariant, hence 
\ben
\vv'=g\rayp{k}\vv \Rightarrow \vV'=g\rayp{n} \vV;
\een
secondly, norms are isotropic functions, as such
\ben
\forall \vG\in\ode,\ \rho(\vG\rayp{k}\vv)=\frac{\|\mathrm{\Phi}^{\{n,k\}}\rdots{k}(\vG\rayp{k}\vv)\|^{2}}{ \|\vG\rayp{k}\vv\|^2}=\frac{\|\vG\rayp{n}\vV\|^2}{ \|\vG\rayp{k}\vv\|^2}=\frac{\|\vV\|^2}{ \|\vv\|^2}=\rho(\vv).
\een
This result means that $\rho(\vv)$ is independent of the orientation of $\vv$.
\end{itemize}
Since the scaling and the rotation transformations commute, and since the function is constant for both actions considered separately, we have
\ben
\rho(\vv)=\rho(\lambda\vG\rayp{k}\vv^{1})=\rho(\vv^{1})=:\gamma, \quad\forall\vv\in\KK^{k}\setminus\{0\}.
\een
Hence the constant $\gamma$ is independent of the considered vector $\vv\in\KK^{k}$.
\end{proof}

The following proposition gives a method to compute the  parameter $\gamma$, it corresponds to  Proposition \ref{prop:TrGam} of  Section 3.

\begin{prop}\label{prop:TrGam2}
The constant $\gamma$  defined in Proposition \ref{lem:PropNorm} can be also calculated as
\ben
\gamma=\frac{1}{2} \tr \mathrm{M}
\een
in which $\mathrm{M}$ is the matrix of the linear map $\eta:\KK^k\rightarrow\KK^k$ defined by 
\ben
\eta(\vv)=\left(\mathrm{\Phi}^{\{k,n\}}\circ \mathrm{\Phi}^{\{n,k\}}\right)\udot\vv.
\een
\end{prop}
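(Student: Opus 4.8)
The plan is to recognize $\eta$ as the composition of the harmonic embedding $\mathrm{\Phi}^{\{n,k\}}$ with its transpose, so that the quadratic form of $\eta$ coincides with the squared norm already controlled by $\gamma$ in Proposition \ref{lem:PropNorm}; the trace is then recovered by evaluating this quadratic form on an orthonormal basis of $\KK^{k}$.

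First I would make explicit that $\eta$ is the composition of the map
\ben
\mathrm{\Phi}^{\{n,k\}}:\KK^{k}\to\TT^{n},\quad \vv\mapsto\mathrm{\Phi}^{\{n,k\}}\rdots{k}\vv,
\een
followed by its transpose $\mathrm{\Phi}^{\{k,n\}}:\TT^{n}\to\KK^{k}$, so that $\eta(\vv)=\mathrm{\Phi}^{\{k,n\}}\rdots{n}\big(\mathrm{\Phi}^{\{n,k\}}\rdots{k}\vv\big)$ indeed maps $\KK^{k}$ to itself. Using the defining property of the transpose recalled in Equation \eqref{def:Transp}, the intermediate $\TT^{n}$-pairing collapses and the quadratic form of $\eta$ becomes, for every $\vv\in\KK^{k}$,
\ben
\langle\vv,\eta(\vv)\rangle_{\KK^{k}}=\langle\mathrm{\Phi}^{\{n,k\}}\rdots{k}\vv,\mathrm{\Phi}^{\{n,k\}}\rdots{k}\vv\rangle_{\TT^{n}}=\big\|\mathrm{\Phi}^{\{n,k\}}\rdots{k}\vv\big\|^{2}.
\een
By Proposition \ref{lem:PropNorm} the right-hand side equals $\gamma\|\vv\|^{2}$, which yields the key identity $\langle\vv,\eta(\vv)\rangle_{\KK^{k}}=\gamma\|\vv\|^{2}$, valid for all $\vv\in\KK^{k}$.

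To conclude, I would exploit that the trace is basis-independent and compute it in an orthonormal basis $(\V{b}_{1},\V{b}_{2})$ of $\KK^{k}$, through $\tr\mathrm{M}=\tr\eta=\sum_{i}\langle\V{b}_{i},\eta(\V{b}_{i})\rangle_{\KK^{k}}$. The key identity gives $\langle\V{b}_{i},\eta(\V{b}_{i})\rangle_{\KK^{k}}=\gamma\|\V{b}_{i}\|^{2}=\gamma$ for each $i$, and summing over the $\dim(\KK^{k})=2$ basis vectors produces $\tr\mathrm{M}=2\gamma$, i.e. $\gamma=\tfrac{1}{2}\tr\mathrm{M}$. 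The prefactor $\tfrac{1}{2}$ is precisely the reciprocal of $\dim(\KK^{k})$.

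The computation is short, so the only real care needed is the bookkeeping of the contraction orders ($\rdots{k}$ on the $\KK^{k}$ side versus $\rdots{n}$ on the $\TT^{n}$ side) and the application of Equation \eqref{def:Transp} in the correct slot, so that the intermediate pairing genuinely reduces to a squared norm. An alternative finish would invoke Schur's lemma to argue that the $\ode$-equivariant self-adjoint map $\eta$ is a scalar multiple of the identity, but the orthonormal-basis trace computation is more elementary and sidesteps the subtleties of real irreducible representations.
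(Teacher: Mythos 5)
Your proof is correct. The key identity $\langle\vv,\eta(\vv)\rangle_{\KK^{k}}=\|\mathrm{\Phi}^{\{n,k\}}\rdots{k}\vv\|^{2}=\gamma\|\vv\|^{2}$ is exactly the starting point of the paper's proof as well (both rest on Proposition \ref{lem:PropNorm} and the adjoint property of $\mathrm{\Phi}^{\{k,n\}}$), but your finish is genuinely different. The paper differentiates the quadratic identity $\vv\udot\mathrm{M}\udot\vv=\gamma\,\vv\udot\vv$ with respect to $\vv$, interprets the result as an eigenvalue problem $(\mathrm{M}-\gamma\mathrm{I})\udot\vv=0$, and then invokes the isotropy of $\mathrm{M}$ (a Schur-lemma-type argument, which you correctly identify as the alternative route) to conclude that $\mathrm{M}$ is proportional to the identity, so that $\gamma$ is a double eigenvalue and hence half the trace. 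You instead bypass the eigenvalue and isotropy considerations entirely: since the trace of a linear map equals $\sum_{i}\langle\V{b}_{i},\eta(\V{b}_{i})\rangle$ in any orthonormal basis, the identity evaluated on two orthonormal vectors immediately gives $\tr\mathrm{M}=2\gamma$. Your argument is shorter, needs no differentiation and no representation-theoretic input, and makes transparent that the prefactor $\tfrac{1}{2}$ is $1/\dim(\KK^{k})$ — which is how the formula would generalize to 3D, where $\dim(\KK^{k})=2k+1$. What the paper's route buys in exchange is the stronger intermediate conclusion $\mathrm{M}=\gamma\,\mathrm{I}^{\KK^{k}}$ (not merely a trace identity), which is the fact actually exploited in Lemma \ref{lem:InvPhi} to invert $\mathrm{\Phi}^{\{n,k\}}$. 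One caveat common to both proofs: the factor $\tfrac{1}{2}$ presupposes $k\geq 1$; for $k\in\{0,-1\}$, where $\dim(\KK^{k})=1$, the formula would read $\gamma=\tr\mathrm{M}$, and you flag this dimension dependence more explicitly than the paper does.
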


\begin{proof}
Let $\vv \in \KK^k\setminus\{0\}$ and $\vV: = \Phi^{\{n,k\}} \rdots{k}\vv$. From Proposition \ref{lem:PropNorm}, there exists $\gamma > 0$ such that $\|\vV\|^2=\gamma \|\vv\|^2$.
Let us consider
\ben
\mathrm{M}^{(k,k)}=\mathrm{\Phi}^{\{k,n\}}\rdots{n}\mathrm{\Phi}^{\{n,k\}}
\een
which can be  considered  as a symmetric second-order tensor on $\KK^{k}$.  By introducing $\mathrm{I}^{\KK^k}$ the second-order identity tensor on $\KK^{k}$, the relation $\|\vV\|^2=\gamma \|\vv\|^2$ can be expressed as:
\ben
 \vv\udot\mathrm{M}^{(k,k)}\udot\vv=\gamma\vv\udot\mathrm{I}^{\KK^k}\udot\vv.
\een
Differentiating with respect to $\vv$ we obtain
\ben
\left(\mathrm{M}^{(k,k)}-\gamma\mathrm{I}^{\KK^k}\right)\udot\vv=0.
\een
By considering a specific basis for $\KK^k$, the former relation can be reformulated in terms of matrix,
\ben
 \left([\mathrm{M}]-\gamma[\mathrm{I}]\right)\udot[\vv]=0.
\een
Since $\vv\neq 0$ the previous relation shows that $\gamma$ is an eigenvalue of $[\mathrm{M}]$. This result can be refined by noticing the following points:
\begin{itemize}
\item $\forall k>0,\ \dim\left(\KK^k\right)=2$, as such $[\mathrm{M}]$ has at most 2 different eigenvalues ;
\item Since by construction $\mathrm{M}^{(k,k)}$ is an isotropic tensor, $[\mathrm{M}]$ is proportional to $[\mathrm{I}]$.  A a result, $\gamma$ is  a double eigenvalue.
\end{itemize}
As a consequence,
\ben
\gamma=\frac{1}{2} \tr [\mathrm{M}].
\een
\end{proof}

The next result is fundamental to demonstrate Theorem \ref{thm:HarEmb}. 

\begin{lem}\label{lem:InvPhi}
The tensor  $\mathrm{\Phi}^{\{n,k\}}$ is invertible, and its inverse $\left(\mathrm{\Phi}^{\{n,k\}}\right)^{-1}$ has the following expression:
\ben
\left(\mathrm{\Phi}^{\{n,k\}}\right)^{-1}=\frac{1}{\gamma}\mathrm{\Phi}^{\{k,n\}}
\een
in which $\mathrm{\Phi}^{\{k,n\}}$ denotes the transpose of $\mathrm{\Phi}^{\{n,k\}}$ as defined by Equation \eqref{def:Transp}.
\end{lem}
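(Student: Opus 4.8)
The plan is to show that the composite $\mathrm{\Phi}^{\{k,n\}}\circ\mathrm{\Phi}^{\{n,k\}}$ acts on $\KK^{k}$ as multiplication by the scalar $\gamma$, and then to promote this left-inverse relation to a genuine inverse by a dimension count. The central input, already isolated in the proof of Proposition \ref{prop:TrGam2}, is that the tensor
\[
\mathrm{M}^{(k,k)}=\mathrm{\Phi}^{\{k,n\}}\rdots{n}\mathrm{\Phi}^{\{n,k\}},
\]
regarded as a second-order tensor on $\KK^{k}$, is isotropic. Since $\KK^{k}$ is $\ode$-irreducible and of dimension two (one when $k\in\{0,-1\}$), Schur's lemma forces $\mathrm{M}^{(k,k)}$ to be a scalar multiple of the identity $\mathrm{I}^{\KK^{k}}$; comparing this with the trace formula $\gamma=\frac{1}{2}\tr\mathrm{M}$ of Proposition \ref{prop:TrGam2} identifies the scalar as $\gamma$, so that $\mathrm{M}^{(k,k)}=\gamma\,\mathrm{I}^{\KK^{k}}$.

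First I would rewrite this identity as a statement about maps. For every $\vv\in\KK^{k}$, reordering the two disjoint families of contraction indices gives
\[
\left(\mathrm{\Phi}^{\{k,n\}}\circ\mathrm{\Phi}^{\{n,k\}}\right)(\vv)=\mathrm{\Phi}^{\{k,n\}}\rdots{n}\left(\mathrm{\Phi}^{\{n,k\}}\rdots{k}\vv\right)=\left(\mathrm{\Phi}^{\{k,n\}}\rdots{n}\mathrm{\Phi}^{\{n,k\}}\right)\rdots{k}\vv=\gamma\,\vv .
\]
Because Proposition \ref{lem:PropNorm} guarantees $\gamma>0$, I may divide by $\gamma$ and conclude that $\frac{1}{\gamma}\mathrm{\Phi}^{\{k,n\}}$ is a left inverse of $\mathrm{\Phi}^{\{n,k\}}$, viewed as the embedding $\KK^{k}\to\HH^{(n,k)}$.

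It then remains to upgrade this to a two-sided inverse. Here I would invoke that $\mathrm{\Phi}^{\{n,k\}}$ is a linear map between the finite-dimensional spaces $\KK^{k}$ and its embedded image $\HH^{(n,k)}$, which are isomorphic and hence of equal dimension. Possessing a left inverse makes $\mathrm{\Phi}^{\{n,k\}}$ injective, and injectivity between spaces of equal finite dimension is equivalent to bijectivity; therefore $\mathrm{\Phi}^{\{n,k\}}$ is invertible and its left inverse is automatically its genuine inverse, yielding $\left(\mathrm{\Phi}^{\{n,k\}}\right)^{-1}=\frac{1}{\gamma}\mathrm{\Phi}^{\{k,n\}}$.

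I do not anticipate a serious obstacle, since the essential work---the proportionality of $\mathrm{M}^{(k,k)}$ to the identity---is already carried out in Proposition \ref{prop:TrGam2}. The only point requiring mild care is the bookkeeping of the domain and codomain of each operator: $\mathrm{\Phi}^{\{n,k\}}$ must be treated as the isomorphism onto $\HH^{(n,k)}$, rather than as a map into all of $\TT^{n}$, for the notion of inverse to be meaningful, and correspondingly $\frac{1}{\gamma}\mathrm{\Phi}^{\{k,n\}}$ is understood restricted to $\HH^{(n,k)}$.
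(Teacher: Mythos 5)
Your proposal is correct, and it reaches the key identity $\mathrm{\Phi}^{\{k,n\}}\rdots{n}\mathrm{\Phi}^{\{n,k\}}=\gamma\,\mathrm{I}^{\KK^{k}}$ by a genuinely different route than the paper. The paper's proof is a polarization argument: it expands $\|\mathrm{\Phi}^{\{n,k\}}\rdots{k}\vv\|^{2}$ using the defining property of the transpose, replaces that norm by $\gamma\|\vv\|^{2}$ via Proposition \ref{lem:PropNorm}, and concludes from $\langle\vv,(\mathrm{\Phi}^{\{k,n\}}\rdots{n}\mathrm{\Phi}^{\{n,k\}}-\gamma\mathrm{I}^{\KK^{k}})\udot\vv\rangle=0$ for all $\vv$ (together with the self-adjointness of the operator in parentheses) that this operator vanishes. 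You instead argue by equivariance: $\mathrm{M}^{(k,k)}$ is isotropic, so Schur's lemma on the irreducible space $\KK^{k}$ forces it to be a scalar multiple of the identity, and you pin down the scalar with the trace formula of Proposition \ref{prop:TrGam2}; there is no circularity here, since Proposition \ref{prop:TrGam2} is proved from Proposition \ref{lem:PropNorm} alone. Your version also makes explicit a step the paper glosses over, namely the upgrade from a left inverse to a two-sided inverse via injectivity plus equality of the finite dimensions of $\KK^{k}$ and $\HH^{(n,k)}$. One caveat: identifying the scalar through $\gamma=\frac{1}{2}\tr\mathrm{M}$ ties you to $\dim\KK^{k}=2$, i.e. $k\geq 1$, whereas the lemma is also invoked with $k=0$ (e.g. for $\dT{\Phi}^{\{2,0\}}$ in Section \ref{s:ClaEla}, through Theorem \ref{thm:HarEmb}), and for $k\in\{0,-1\}$ the factor $\frac{1}{2}$ is wrong since $\dim\KK^{k}=1$. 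The fix costs one line and removes the reliance on Proposition \ref{prop:TrGam2} entirely: once $\mathrm{M}^{(k,k)}=c\,\mathrm{I}^{\KK^{k}}$ is known, pairing with any $\vv\neq 0$ and using $\langle\vv,\mathrm{M}^{(k,k)}\udot\vv\rangle=\|\mathrm{\Phi}^{\{n,k\}}\rdots{k}\vv\|^{2}=\gamma\|\vv\|^{2}$ gives $c=\gamma$ in every dimension, which is precisely the information the paper's polarization argument extracts directly.
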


\begin{proof}
Let $\vv\in\KK^{k}$ and $\vV= \mathrm{\Phi}^{\{n,k\}}\rdots{k}\vv$, we have
\ben
\|\vV\|^2=\langle \mathrm{\Phi}^{\{n,k\}}\rdots{k}\vv,\mathrm{\Phi}^{\{n,k\}}\rdots{k}\vv\rangle_{\HH^{(n,k)}}=\langle\vv,\mathrm{\Phi}^{\{k,n\}}\rdots{n}\mathrm{\Phi}^{\{n,k\}}\rdots{k}\vv\rangle_{\KK^{k}}.
\een
From Lemma \ref{lem:PropNorm} we have:
\ben
\|\vV\|^2=\gamma\|\vv\|^2,
\een
then
\ben
\langle\vv,\mathrm{\Phi}^{\{k,n\}}\rdots{n}\mathrm{\Phi}^{\{n,k\}}\rdots{k}\vv\rangle_{\KK^{k}}=\langle\vv,\gamma\vI^{\KK^{k}}\rdots{k}\vv\rangle_{\KK^{k}}
\een
which is equivalent to
\ben
\langle\vv,(\mathrm{\Phi}^{\{k,n\}}\rdots{n}\mathrm{\Phi}^{\{n,k\}}-\gamma\vI^{\KK^{k}})\udot\vv\rangle_{\KK^{k}}=0.
\een
Since this true for all $\vv\in\KK^{k}$, we deduce that
\ben
\mathrm{\Phi}^{\{k,n\}}\rdots{n}\mathrm{\Phi}^{\{n,k\}}=\gamma\vI^{\KK^{k}}
\een
and then
\ben
\left(\mathrm{\Phi}^{\{n,k\}}\right)^{-1}=\frac{1}{\gamma}\mathrm{\Phi}^{\{k,n\}}.
\een
\end{proof}

%

Finally,  the next result will be used  in Appendix B to directly characterize some harmonic embeddings.

\begin{lem}\label{lem:SymPro} 
Consider $\TT^{n}$ a space of $n$th-order tensors  and let $\mathbb{H}^{(n,k)}$ be a subspace of $\TT^n$ isomorphic to a harmonic space $\KK^k$ with $k \leq n$. Then the projector  $\mathrm{P}^{(n,k)}$ from $\TT^{n}$ onto $\HH^{(n,k)}$ belongs to 
$\mathcal{L}^s(\TT^{n},\TT^{n})$.
\end{lem}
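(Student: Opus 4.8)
The plan is to reduce the statement to the explicit formula for $\mathrm{P}^{(n,k)}$ furnished by Theorem \ref{thm:HarEmb} and then to exploit the fact that the two harmonic embeddings appearing in it are mutually adjoint. Concretely, Theorem \ref{thm:HarEmb} gives
\ben
\mathrm{P}^{(n,k)}=\frac{1}{\gamma}\,\mathrm{\Phi}^{\{n,k\}}\rdots{k}\mathrm{\Phi}^{\{k,n\}},
\een
where $\gamma>0$ by Proposition \ref{lem:PropNorm} and $\mathrm{\Phi}^{\{k,n\}}=(\mathrm{\Phi}^{\{n,k\}})^{T}$ is, by the defining relation \eqref{def:Transp}, the adjoint of the linear map $\mathrm{\Phi}^{\{n,k\}}\colon\KK^{k}\to\TT^{n}$. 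Writing $L:=\mathrm{\Phi}^{\{n,k\}}$ and $L^{T}:=\mathrm{\Phi}^{\{k,n\}}$, the operator to analyse is therefore $\mathrm{P}^{(n,k)}=\frac{1}{\gamma}\,L\circ L^{T}$, a positive multiple of $L\circ L^{T}$.

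First I would verify self-adjointness by a direct inner-product computation, which is the most transparent route. Fix $\vA,\vB\in\TT^{n}$ and set $\vv:=\mathrm{\Phi}^{\{k,n\}}\rdots{n}\vA\in\KK^{k}$, so that $\mathrm{P}^{(n,k)}\rdots{n}\vA=\frac{1}{\gamma}\mathrm{\Phi}^{\{n,k\}}\rdots{k}\vv$. Using the symmetry of the Frobenius inner product on $\TT^{n}$ and then the transpose relation \eqref{def:Transp} with $\vV=\vB$ gives
\ben
\langle \mathrm{P}^{(n,k)}\rdots{n}\vA,\ \vB\rangle_{\TT^{n}}
=\frac{1}{\gamma}\langle \vB,\ \mathrm{\Phi}^{\{n,k\}}\rdots{k}\vv\rangle_{\TT^{n}}
=\frac{1}{\gamma}\langle \mathrm{\Phi}^{\{k,n\}}\rdots{n}\vB,\ \vv\rangle_{\KK^{k}}.
\een
Expanding $\vv$ and invoking the symmetry of the inner product on $\KK^{k}$, the right-hand side equals $\frac{1}{\gamma}\langle \mathrm{\Phi}^{\{k,n\}}\rdots{n}\vA,\ \mathrm{\Phi}^{\{k,n\}}\rdots{n}\vB\rangle_{\KK^{k}}$, an expression symmetric in $\vA$ and $\vB$. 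Applying the transpose relation \eqref{def:Transp} a final time (with $\vV=\vA$) returns $\langle \vA,\ \mathrm{P}^{(n,k)}\rdots{n}\vB\rangle_{\TT^{n}}$, so that
\ben
\langle \mathrm{P}^{(n,k)}\rdots{n}\vA,\ \vB\rangle_{\TT^{n}}=\langle \vA,\ \mathrm{P}^{(n,k)}\rdots{n}\vB\rangle_{\TT^{n}}
\een
for all $\vA,\vB\in\TT^{n}$, which is precisely the assertion $\mathrm{P}^{(n,k)}\in\mathcal{L}^{s}(\TT^{n},\TT^{n})$.

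At a structural level this is nothing more than the elementary fact that any operator of the form $L\circ L^{T}$ is self-adjoint, since $(L\circ L^{T})^{T}=L^{TT}\circ L^{T}=L\circ L^{T}$; the computation above simply spells this out in the contraction notation of the paper. I do not anticipate a genuine obstacle here: the only point requiring care is the bookkeeping of the contraction orders ($\rdots{k}$ versus $\rdots{n}$) and the correct identification of which inner product ($\TT^{n}$ or $\KK^{k}$) is used at each step, together with the fact that $\gamma$ is a nonzero real constant so that the prefactor $1/\gamma$ does not affect the symmetry.
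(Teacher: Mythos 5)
Your proof is correct and follows essentially the same route as the paper: both invoke the formula $\mathrm{P}^{(n,k)}=\frac{1}{\gamma}\,\mathrm{\Phi}^{\{n,k\}}\rdots{k}\mathrm{\Phi}^{\{k,n\}}$ from Theorem \ref{thm:HarEmb} and conclude self-adjointness from the $L\circ L^{T}$ structure. The only difference is that you spell out the inner-product verification which the paper leaves implicit, and this bookkeeping is carried out correctly.
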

\begin{proof}
By definition, $\mathrm{P}^{(n,k)}$ is a linear map from $\TT^{n}$ to $\TT^{n}$. We can therefore consider it as a tensor of order $2n$. It remains to verify the major symmetry, i.e. that  $\mathrm{P}^{(n,k)}=\left(\mathrm{P}^{(n,k)}\right)^T$.
Since
\ben
\mathrm{P}^{(n,k)}=\frac{1}{\gamma}\mathrm{\Phi}^{\{n,k\}}\rdots{k}\left(\mathrm{\Phi}^{\{n,k\}}\right)^{T},
\een
the major symmetry is verified.
\end{proof}


 
\section{Clebsch-Gordan harmonic embeddings}\label{sec:HarmEmb}

This appendix is devoted to the demonstrations of the fundamental explicit harmonic decompositions associated with the embeddings $\KK^{p+q}\oplus\KK^{|p-q|}\hookrightarrow\KK^p\otimes\KK^q$ used in the main part of the article.  Since the embedding of $\KK^{p+q}$ into $\KK^p \otimes \KK^q$   is trivial, the problem reduces to the determination of the unique embedding of $\KK^{|p-q|}$ into $\KK^p\otimes\KK^q$.  Knowing the algebraic characterisation of $\KK^p\otimes\KK^q$, this question can be reformulated in terms of linear algebra.

We have the following result:
\begin{thm}\label{thm:HarDom}
For $n\geq1$, let $\vP^{(n,n)}$ be the tensor associated to the projector from $\TT^{n}$ onto $\KK^{n}$ and consider $\vT^{n,n}$ an element of $\mathcal{L}^s(\KK^{n},\KK^{n})\simeq\KK^{n}\otimes^s\KK^{n}$. The tensor $\vT^{n,n}$ can be parametrized as follows:
\ben
\vT^{n,n}=\vK+\frac{\alpha}{2}\vP^{(n,n)},\quad (\vK,\alpha)\in\KK^{2n}\times\KK^{0},
\een
in such a way that $\vT^{n,n}\rdots{2n}\vP^{(n,n)}=\alpha$.
\end{thm}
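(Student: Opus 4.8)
The plan is to exploit the isotypic decomposition $\KK^n\otimes^s\KK^n\simeq\KK^{2n}\oplus\KK^0$ furnished by the symmetric-product table of Lemma \ref{lem-Gordan}, and to identify the one-dimensional $\KK^0$ summand with the line spanned by $\vP^{(n,n)}$. Since $\KK^{2n}$ and $\KK^0$ are non-isomorphic irreducible $\ode$-representations, the isotypic decomposition is unique, and it is orthogonal with respect to the full contraction $\rdots{2n}$: this pairing is $\ode$-invariant, and an invariant pairing between non-isomorphic irreducibles vanishes. Consequently every $\vT^{n,n}\in\mathcal{L}^s(\KK^n,\KK^n)$ splits uniquely as $\vT^{n,n}=\vK+c\,\vP^{(n,n)}$ with $\vK\in\KK^{2n}$ and $c\in\RR\simeq\KK^0$, once I check that $\vP^{(n,n)}$ indeed generates the scalar component.

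First I would establish that $\vP^{(n,n)}$ spans the $\KK^0$ factor. Regarded as an element of $\mathcal{L}^s(\KK^n,\KK^n)$, the projector onto $\KK^n$ restricts to the identity operator $\mathrm{I}^{\KK^n}$; by the $\ode$-equivariance of $\vP^{(n,n)}$ this operator commutes with the $\ode$-action, hence is isotropic and lies in the trivial isotypic class. As the $\KK^0$ summand is one-dimensional and $\vP^{(n,n)}\neq 0$, it spans it. Dually, the $\KK^{2n}$ summand is then exactly the space of traceless self-adjoint operators on $\KK^n$, the identity furnishing the isotropic complement.

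It remains to pin down the coefficient $c$. Contracting the decomposition with $\vP^{(n,n)}$ yields
\ben
\vT^{n,n}\rdots{2n}\vP^{(n,n)}=\vK\rdots{2n}\vP^{(n,n)}+c\,\vP^{(n,n)}\rdots{2n}\vP^{(n,n)}.
\een
The first term vanishes by the orthogonality above; equivalently, in operator language $\vK\rdots{2n}\vP^{(n,n)}=\tr(\vK\circ\mathrm{I}^{\KK^n})=\tr\vK=0$, since $\vK$ is traceless. For the second term, $\vP^{(n,n)}$ is idempotent, so $\vP^{(n,n)}\rdots{2n}\vP^{(n,n)}=\tr(\vP^{(n,n)})=\dim(\KK^n)=2$ (the same value recorded in the Gram-matrix remarks of the main text). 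Hence $\vT^{n,n}\rdots{2n}\vP^{(n,n)}=2c$; setting $\alpha:=\vT^{n,n}\rdots{2n}\vP^{(n,n)}$ gives $c=\alpha/2$, which is precisely the claimed parametrization together with its inversion formula.

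The main obstacle is the identification step of the second paragraph: verifying rigorously that $\vP^{(n,n)}$, viewed inside $\mathcal{L}^s(\KK^n,\KK^n)$, is the isotropic generator of the $\KK^0$ factor (and dually that $\vK$ is traceless as an operator). Everything else — uniqueness, orthogonality of the two isotypic components, and the value $\dim(\KK^n)=2$ — follows formally from irreducibility, the $\ode$-invariance of the Frobenius pairing $\rdots{2n}$, and the idempotency of the projector.
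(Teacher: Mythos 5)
Your proof is correct and follows essentially the same route as the paper's: both invoke the Clebsch--Gordan isomorphism $\mathcal{L}^s(\KK^{n},\KK^{n})\simeq\KK^{2n}\oplus\KK^{0}$, identify the one-dimensional isotropic summand with the span of $\vP^{(n,n)}$, and fix the factor $\frac{1}{2}$ from $\vP^{(n,n)}\rdots{2n}\vP^{(n,n)}=\dim(\KK^{n})=2$. The only differences are cosmetic: you justify the orthogonality $\vK\rdots{2n}\vP^{(n,n)}=0$ explicitly (via Schur-type invariance or tracelessness of $\vK$), a step the paper's proof uses silently, and you identify the scalar direction by equivariance of the projector where the paper instead cites its Lemma \ref{lem:SymPro} together with $\dim\left(\HH^{(n,0)}\right)=1$.
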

\begin{proof}
First, by using the Clebsch-Gordan formula, it is known that $\mathcal{L}^s(\KK^{n},\KK^{n})\simeq\KK^{2n}\oplus\KK^{0}$. As a consequence  $\vT^{n,n}\in\mathcal{L}^s(\KK^{n},\KK^{n})$ can be written as 
\ben
\vT^{n,n}=\vK+\alpha \mathrm{\Phi}^{\{2n,0\}}
\een
with $\vK\in\KK^{2n},\alpha\in\KK^{0}$ and $\mathrm{\Phi}^{\{2n,0\}}$ is an isotropic tensor of order $2n$ element of $\mathcal{L}^s(\TT^{n},\TT^{n})$.
As a direct consequence of Lemma $\ref{lem:SymPro}$, it can be observed that $\vP^{(n,n)}$ is also an isotropic tensor of order $2n$ element of $\mathcal{L}^s(\TT^{n},\TT^{n})$.
Since $\dim\left(\HH^{(n,0)}\right)=1$,
\ben
\mathrm{\Phi}^{\{2n,0\}}=\lambda\vP^{(n,n)}.
\een
The scaling factor $\lambda$ is determined so that  $\vT^{n,n}\rdots{2n}\vP^{(n,n)}=\alpha$. We have
\ben
\vT^{n,n}\rdots{2n}\vP^{(n,n)}=\alpha\lambda\vP^{(n,n)}\rdots{2n}\vP^{(n,n)}.
\een
Since $\vP^{(n,n)}=\mathrm{I}^{\KK^{n}}$, $(\vP^{(n,n)}\rdots{2n}\vP^{(n,n)})=\dim(\KK^{n})=2$. Then $\vT^{n,n}\rdots{2n}\vP^{(n,n)}=2\alpha\lambda$ and we deduce that $\lambda=\frac{1}{2}$.
\end{proof}

\begin{cor}[\textbf{Decomposition of} $\KK^3\otimes^s\KK^3$]
Elements $\sT{A}^{3,3}$ of $\KK^3\otimes^s\KK^3$ can be decomposed as follows:
\ben
\sT{A}^{3,3}=\sT{H}+\frac{\alpha^{3,3}}{2}\sT{P}^{(3,3)}, \quad \text{where}\  \alpha^{3,3}=\sT{A}^{3,3}\rdots{6}\sT{P}^{(3,3)}
\een
with $\sT{P}^{(3,3)}$ defined in Proposition \ref{cor:Proj3}.
\end{cor}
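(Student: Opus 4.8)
The plan is to recognize this corollary as the direct specialization of Theorem~\ref{thm:HarDom} to the case $n=3$. First I would note that, by the Clebsch-Gordan formula of Lemma~\ref{lem-Gordan}, one has $\KK^3\otimes^s\KK^3\simeq\KK^6\oplus\KK^0$, so that the harmonic part lives in $\KK^{2n}=\KK^6$ and the scalar part in $\KK^0$. This is precisely the situation covered by Theorem~\ref{thm:HarDom} with $n=3$, upon identifying the generic element $\vT^{n,n}$ with $\sT{A}^{3,3}$.

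Next I would invoke Proposition~\ref{cor:Proj3}, which furnishes the explicit projector $\sT{P}^{(3,3)}$ from $\sgrd$ onto $\KK^3$; this tensor plays the role of $\vP^{(n,n)}$ in Theorem~\ref{thm:HarDom} for $n=3$. Substituting $\vP^{(3,3)}=\sT{P}^{(3,3)}$, $\vK=\sT{H}\in\KK^6$, $\alpha=\alpha^{3,3}\in\KK^0$, and $\rdots{2n}=\rdots{6}$ directly into the conclusion of the theorem yields the claimed parametrization
\[
\sT{A}^{3,3}=\sT{H}+\frac{\alpha^{3,3}}{2}\sT{P}^{(3,3)},
\]
together with the retrieval formula $\alpha^{3,3}=\sT{A}^{3,3}\rdots{6}\sT{P}^{(3,3)}$, which is exactly the normalizing condition $\vT^{n,n}\rdots{2n}\vP^{(n,n)}=\alpha$ of the theorem specialized to this order.

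Since Theorem~\ref{thm:HarDom} already establishes both existence and uniqueness of this decomposition for arbitrary $n\geq1$, no genuinely new argument is required: there is no real obstacle here beyond verifying that the dimensional bookkeeping and the projector labelling line up. The only point worth an explicit check is that $\sT{P}^{(3,3)}$, as built in Proposition~\ref{cor:Proj3}, satisfies the normalization $\sT{P}^{(3,3)}\rdots{6}\sT{P}^{(3,3)}=\dim(\KK^3)=2$ (as recorded in the Gram-matrix remark following that proposition), since this is precisely what fixes the coefficient $\frac{1}{2}$ in the proof of Theorem~\ref{thm:HarDom}.
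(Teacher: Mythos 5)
Your proposal is correct and matches the paper's approach: the paper states this corollary without a separate proof precisely because it is the direct specialization of Theorem~\ref{thm:HarDom} to $n=3$, with $\sT{P}^{(3,3)}$ from Proposition~\ref{cor:Proj3} playing the role of $\vP^{(n,n)}$. Your added check that $\sT{P}^{(3,3)}\rdots{6}\sT{P}^{(3,3)}=\dim(\KK^{3})=2$ (which fixes the factor $\tfrac{1}{2}$) is exactly the normalization recorded in the Gram-matrix remark following that proposition, so nothing is missing.
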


%
%

\begin{lem}[\textbf{Decomposition of} $\KK^3\otimes\KK^1$ ]\label{lem:K3xK1}
There exists an $\ode$-equivariant isomorphism between $\KK^3\otimes\KK^1$ and $\KK^{4}\oplus\KK^{2}$ such that for any $\qT{T}^{3,1}\in\KK^3\otimes\KK^1$,
\ben
\qT{T}^{3,1}=\qT{H}+\sT{\Phi}^{\{4,2\}}:\dT{h},
\een
where  $(\qT{K}, \dT{h}) \in \KK^{4}\times\KK^{2}$ with $\dT{h}=\tr_{14}\qT{T}^{3,1}$ and $\sT{\Phi}^{\{4,2\}}$ is such that      
\ben
(\sT{\Phi}^{\{4,2\}}:\dT{h})_{ijkl}=\frac{1}{2}(h_{ij}\delta_{kl}+h_{ik}\delta_{jl}-h_{il}\delta_{jk})
\een
and
\ben
(\sT{\Phi}^{\{4,2\}})_{ijklmn}=\frac{1}{2}\left(\delta_{kl}P^{(2,2)}_{ijmn}+\delta_{jl}P^{(2,2)}_{ikmn}-\delta_{jk}P^{(2,2)}_{ilmn}\right).
\een   
Above, $\qT{P}^{(2,2)}$ is the standard deviatoric projector defined in Equation \eqref{eq:P2}.
Moreover, the inverse of $\sT{\Phi}^{\{4,2\}}$ is given by
\ben
(\sT{\Pi}^{\{2,4\}})_{ijklmn}=\sT{\Phi}^{\{2,4\}}_{klmnij}=\frac{1}{2}\left(\delta_{mn}P^{(2,2)}_{klij}+\delta_{ln}P^{(2,2)}_{kmij}-\delta_{lm}P^{(2,2)}_{knij}\right).
\een
\end{lem}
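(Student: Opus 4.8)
The plan is to apply the general embedding machinery of this appendix, exactly as announced in its preamble. By the Clebsch-Gordan formula (Lemma~\ref{lem-Gordan}, with $p=3$, $q=1$ and $p\neq q$) one has $\KK^3\otimes\KK^1\simeq\KK^{4}\oplus\KK^{2}$, and because the two summands have \emph{distinct} orders the isotypic decomposition is multiplicity-free and therefore unique. The top component $\KK^{4}$ embeds canonically as the totally symmetric traceless part, so the entire content of the lemma is the determination of the unique $\ode$-equivariant embedding of the lower piece $\KK^{|p-q|}=\KK^{2}$ into $\KK^3\otimes\KK^1$, together with the matching projection.

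First I would pin down the projection onto $\KK^{2}$. Setting $\dT{h}:=\tr_{14}\qT{T}^{3,1}$, a short index computation shows $\dT{h}$ is symmetric and traceless: symmetry follows from the symmetry of $\qT{T}^{3,1}$ on its first three slots, while $\tr\dT{h}=0$ follows from the tracelessness of $\qT{T}^{3,1}$ on those same slots (here $\delta_{ii}=2$ in dimension two is used). Since $\tr_{14}$ is manifestly $\ode$-equivariant, this yields an equivariant surjection $\KK^3\otimes\KK^1\to\KK^{2}$.

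Next I would exhibit the embedding directly rather than solve a general ansatz. By the isotropy result established in Appendix~\ref{s:Pro31}, any $\ode$-equivariant map $\KK^{2}\to\GG^{4}$ is the contraction of $\dT{h}$ with an isotropic sixth-order tensor, i.e. a linear combination of the six terms $h_{ij}\delta_{kl},\,h_{ik}\delta_{jl},\,h_{il}\delta_{jk},\,h_{jk}\delta_{il},\,h_{jl}\delta_{ik},\,h_{kl}\delta_{ij}$. I would then simply verify that the proposed combination $\tfrac12(h_{ij}\delta_{kl}+h_{ik}\delta_{jl}-h_{il}\delta_{jk})$ is symmetric and traceless on the triple $(ijk)$, hence lands in $\KK^3\otimes\KK^1$; the nontrivial point—and the one place where two-dimensionality is essential—is symmetry under $i\leftrightarrow j$, which holds only because of the identity $h_{ik}\delta_{jl}-h_{il}\delta_{jk}=h_{jk}\delta_{il}-h_{jl}\delta_{ik}$ satisfied by every symmetric traceless $\dT{h}$ in $\RR^2$. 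This verification, together with the bookkeeping for the scale, is the step I expect to carry the real weight. Being nonzero, by multiplicity-freeness the combination is the unique $\KK^{2}$-embedding up to scale, and its scale is fixed by the computation $\tr_{14}\!\left(\sT{\Phi}^{\{4,2\}}:\dT{h}\right)=\dT{h}$; rewriting the contraction intrinsically through $\qT{P}^{(2,2)}$ then produces the stated sixth-order tensor $\sT{\Phi}^{\{4,2\}}$.

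Finally I would close the decomposition and identify the inverse. Putting $\qT{H}:=\qT{T}^{3,1}-\sT{\Phi}^{\{4,2\}}:\dT{h}$, one has $\tr_{14}\qT{H}=0$ by construction. Since $\tr_{14}$ annihilates the irreducible summand $\KK^{4}$ (by Schur's lemma, a $\KK^{2}$-valued equivariant map must vanish on $\KK^{4}$) while restricting to an isomorphism on the $\KK^{2}$-image, a dimension count gives $\ker\tr_{14}=\KK^{4}$ inside $\KK^3\otimes\KK^1$, so $\qT{H}\in\KK^{4}$ follows at once. The projection $\sT{\Pi}^{\{2,4\}}$ is then supplied by Theorem~\ref{thm:HarEmb} as $\tfrac1\gamma\sT{\Phi}^{\{2,4\}}$, with $\gamma$ computed from Proposition~\ref{prop:TrGam}; alternatively one checks directly that the displayed $\sT{\Pi}^{\{2,4\}}$ obeys $\sT{\Pi}^{\{2,4\}}\rdots{4}\sT{\Phi}^{\{4,2\}}=\mathrm{I}^{\KK^{2}}$, confirming invertibility and completing the claimed $\ode$-equivariant isomorphism $\KK^3\otimes\KK^1\simeq\KK^{4}\oplus\KK^{2}$.
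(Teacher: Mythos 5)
Your proposal is correct and takes essentially the same route as the paper's proof: both rest on the Clebsch--Gordan structure $\KK^3\otimes\KK^1\simeq\KK^{4}\oplus\KK^{2}$, the uniqueness (up to scale) of the $\KK^{2}$-embedding via the 2D dependence identity $h_{ik}\delta_{jl}-h_{il}\delta_{jk}=h_{jk}\delta_{il}-h_{jl}\delta_{ik}$, the normalization $\tr_{14}(\sT{\Phi}^{\{4,2\}}:\dT{h})=\dT{h}$, and Theorem \ref{thm:HarEmb} for the inverse; the paper merely \emph{derives} the embedding by imposing the $(ijk)$-symmetry and trace conditions on a reduced four-term ansatz, where you \emph{verify} the displayed formula and invoke multiplicity-freeness, and your Schur/dimension-count argument that the remainder lies in $\KK^{4}$ makes explicit what the paper leaves implicit. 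Two harmless imprecisions: the tracelessness of $\dT{h}=\tr_{14}\qT{T}^{3,1}$ does not use $\delta_{ii}=2$ (it follows already from $T_{ijjl}=0$), and two-dimensionality is essential not only for the $i\leftrightarrow j$ symmetry but also for the $(j,k)$-trace of the candidate embedding, which equals $\tfrac{2-d}{2}\,h_{il}\neq 0$ in dimension $d\neq 2$.
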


\begin{proof}
From the Clebsch-Gordan formula in 2D, it is known that
\ben
\qT{T}^{3,1}=\qT{H}+\sT{\Phi}^{\{4,2\}}:\dT{h},\qquad\text{with}\ \qT{H}\in\KK^4\ \text{and}\ \dT{h}\in\KK^2.
\een
It can be checked that $\KK^4\subset\KK^3\otimes\KK^1$ and  the  question is  the embedding of $\KK^2$ into $\KK^3\otimes\KK^1$.
Up to a scaling factor, there is a unique way to do so. The embedding can be determined by solving a linear system. A general embedding of $\KK^2$ into $\overset{4}{\otimes}\RR^2$ is given by:
\ben
(\sT{\Phi}^{\{4,2\}}:\dT{h})_{ijkl}=a_{1}h_{ij}\delta_{kl}+a_{2}h_{ik}\delta_{jl}+a_{3}h_{il}\delta_{jk}+a_{4}h_{jk}\delta_{il}+a_{5}h_{jl}\delta_{ik}+a_{6}h_{kl}\delta_{ij}.
\een
It can be checked that, in $\RR^2$, for a generic $\dT{h}\in\KK^2$, the family of tensors $\{h_{ij}\delta_{kl},h_{ik}\delta_{jl},h_{il}\delta_{jk},h_{jk}\delta_{il},h_{jl}\delta_{ik},h_{kl}\delta_{ij}\}$ is not free. For instance, 
\ben
\begin{cases}
h_{kl}\delta_{ij}=h_{il}\delta_{jk}+h_{jk}\delta_{il}-h_{ij}\delta_{kl}\\
h_{jl}\delta_{ik}=h_{il}\delta_{jk}+h_{jk}\delta_{il}-h_{ik}\delta_{jl}.
\end{cases}
\een
However, the same family restricted to its four first elements is  free. As a consequence, we will consider the following free parametrization:
\ben
(\sT{\Phi}^{\{4,2\}}:\dT{h})_{ijkl}=b_{1}h_{ij}\delta_{kl}+b_{2}h_{ik}\delta_{jl}+b_{3}h_{il}\delta_{jk}+b_{4}h_{jk}\delta_{il}.
\een
For $\sT{\Phi}^{\{4,2\}}:\dT{h}$ to belong to $\KK^3\otimes\KK^1$ the following conditions have to be satisfied:
\begin{enumerate}
\item complete symmetry with respect to $(ijk)$:
\ben 
\varsigma_{(123)}\star (\sT{\Phi}^{\{4,2\}}:\dT{h})-(\sT{\Phi}^{\{4,2\}}:\dT{h})=\qT{0},\quad \text{with}\ \varsigma_{(123)}\in\GS_{3};
\een
\item traceless with respect to $(ijk)$:
\ben
 \id:(\sT{\Phi}^{\{4,2\}}:\dT{h})=\dT{0}.
\een
\end{enumerate}
As a consequence,  $\sT{\Phi}^{\{4,2\}}:\dT{h}\in\KK^3\otimes\KK^1$ has the following form
 \ben
(\sT{\Phi}^{\{4,2\}}:\dT{h})_{ijkl}=b_{1}(h_{ij}\delta_{kl}+h_{ik}\delta_{jl}-h_{il}\delta_{jk}).
\een
The value of $b_{1}$ is determined by the closure condition:
\ben
\tr_{14}\left(\sT{\Phi}^{\{4,2\}}:\dT{h}\right)=\dT{h}
\een
which implies that $b_{1}=\frac{1}{2}$. So, at the end:  
\begin{equation}\label{eq:Phi42}
(\sT{\Phi}^{\{4,2\}}:\dT{h})_{ijkl}=\frac{1}{2}(h_{ij}\delta_{kl}+h_{ik}\delta_{jl}-h_{il}\delta_{jk}).
\end{equation}
To obtain $\sT{\Phi}^{\{4,2\}}$, observe that 
\ben
\dT{h}=\qT{I}^{\KK^{2}}\dc\dT{h}=\qT{P}^{(2,2)}\dc\dT{h}.
\een
Inserting this relation into Equation \eqref{eq:Phi42}, we obtain the expression of $\Phi^{(4,2)}$.
Since $\|\sT{\Phi}^{\{4,2\}}:\dT{h}\|^2=\|\qT{h}\|^2$, the application of  Theorem \ref{thm:HarEmb} gives
\ben
\sT{\Pi}^{\{2,4\}}=\left(\sT{\Phi}^{\{4,2\}}\right)^{T},
\een
and a direct computation allows us to check that
\ben
\sT{\Pi}^{\{2,4\}}\qc\left(\sT{\Phi}^{\{4,2\}}:\dT{h}\right)=\tr_{14}\left(\sT{\Phi}^{\{4,2\}}:\dT{h}\right)=\dT{h}.
\een
The structure of the harmonic embedding is summed-up in the following diagram
\begin{equation}\label{diag:Phi42}
\xymatrix@!{
    \KK^3\otimes\KK^1\ar[r]^{} & \HH^{(4,2)}\ar@<2pt>[d]^{\qT{\Pi}^{(2,4)}} \\
     \GG^{2} \ar[r]^{\qT{P}^{(2,2)}} & \KK^{2}\ar@<2pt>[u]^{\qT{\Phi}^{\{4,2\}}}
  }
\end{equation}

\end{proof}
For the three next lemmas, the proofs follow the same lines and will not be detailed.
\begin{lem}[\textbf{Decomposition of }$\KK^2\otimes\KK^3$ ]\label{lem:K2xK3}
There exists an $\ode$-equivariant isomorphism between $\KK^2\otimes\KK^3$ and $\KK^{5}\oplus\KK^{1}$ such that for any $\cT{T}^{2,3}\in\KK^2\otimes\KK^3$,
\ben
\cT{T}^{2,3}=\cT{H}+\sT{\Phi}^{\{5,1\}}\udot\V{v}
\een
where $(\cT{H}, \V{v}) \in \KK^{5}\times\KK^{1}$, with $\V{v}=\tr_{12}\left(\tr_{13}\cT{T}^{2,3}\right)$ and $\sT{\Phi}^{\{5,1\}}$ is  such that
\ban
(\sT{\Phi}^{\{5,1\}}\udot\V{v})_{ijklm}&=&\frac{1}{4}\left(v_{i}(\delta_{jk}\delta_{lm}-\delta_{jl}\delta_{km}-\delta_{jm}\delta_{kl})-v_{j}\delta_{im}\delta_{kl}\right.\\
&+&\left. v_{k}(-2\delta_{ij}\delta_{lm}+\delta_{il}\delta_{jm}+2\delta_{im}\delta_{jl})+v_{l}\delta_{ik}\delta_{jm}+v_{m}\delta_{ij}\delta_{kl}\right),
\ean
Intrinsically ,
\ben
\sT{\Phi}^{\{5,1\}}=\frac{1}{4}\left(\sT{i}^{(6)}_{1}-2\sT{i}^{(6)}_{3}+\sT{i}^{(6)}_{5}+\sT{i}^{(6)}_{8}-\sT{i}^{(6)}_{11}+2\sT{i}^{(6)}_{12}+\sT{i}^{(6)}_{13}-\sT{i}^{(6)}_{14}-\sT{i}^{(6)}_{15}\right).
\een   
The inverse of $\sT{\Phi}^{\{5,1\}}$ is given by
\ben
(\sT{\Pi}^{\{1,5\}})_{ijklmn}=(\sT{\Phi}^{\{5,1\}})_{jklmni}.
\een
\end{lem}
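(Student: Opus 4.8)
The plan is to mirror the argument given for $\KK^3\otimes\KK^1$ in Lemma \ref{lem:K3xK1}. By the Clebsch-Gordan formula (Lemma \ref{lem-Gordan}) one has $\KK^2\otimes\KK^3\simeq\KK^5\oplus\KK^1$; the embedding of the top component $\KK^5$ is canonical, so the entire content of the statement is the determination, up to scale, of the unique embedding $\sT{\Phi}^{\{5,1\}}$ of $\KK^1$ into $\KK^2\otimes\KK^3$. Since $\sT{\Phi}^{\{5,1\}}$ is an $\ode$-equivariant map between spaces of orders $5$ and $1$, it is (by the isotropy argument of Appendix \ref{s:Pro31}) a sixth-order isotropic tensor, hence a linear combination of the fifteen elementary tensors $\sT{i}^{(6)}_{p}$.

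First I would write $(\sT{\Phi}^{\{5,1\}}\udot\V{v})_{ijklm}$ as the most general such combination: the $5\times3=15$ terms $v_a\,\delta_{bc}\delta_{de}$ obtained by choosing which of the five free indices carries $\V{v}$ and pairing the remaining four into two Kronecker deltas. Exactly as in Lemma \ref{lem:K3xK1}, these terms are not independent in $\RR^2$ (the dimensional relations of Racah \cite{Rac33}), so I would first extract a free sub-family and reduce to a parametrization with independent coefficients.

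Next I would impose membership in $\KK^2\otimes\KK^3$, namely: complete symmetry and tracelessness in the last three indices $(klm)$ (the $\KK^3$ slot), enforced through the action of $\GS_3$ and contraction with $\id$; together with symmetry and tracelessness in the first two indices $(ij)$ (the $\KK^2$ slot). This produces a homogeneous linear system in the coefficients whose solution space, by the multiplicity-one property of the Clebsch-Gordan decomposition, is one-dimensional; solving it fixes all coefficients up to a single overall constant and yields the stated shape of $(\sT{\Phi}^{\{5,1\}}\udot\V{v})_{ijklm}$.

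Finally I would pin down that constant by the closure condition $\tr_{12}(\tr_{13}(\sT{\Phi}^{\{5,1\}}\udot\V{v}))=\V{v}$, which forces the global factor $\frac{1}{4}$ and gives the explicit formula together with its intrinsic form in the $\sT{i}^{(6)}_{p}$. A short computation then shows $\|\sT{\Phi}^{\{5,1\}}\udot\V{v}\|^2=\|\V{v}\|^2$, so $\gamma=1$ and Theorem \ref{thm:HarEmb} gives the projector as the transpose, $(\sT{\Pi}^{\{1,5\}})_{ijklmn}=(\sT{\Phi}^{\{5,1\}})_{jklmni}$. The main obstacle is purely one of bookkeeping: compared with the $\KK^3\otimes\KK^1$ case, the object is fifth-order and the symmetry/tracelessness constraints must be imposed simultaneously on two distinct index blocks, so the linear system is substantially larger and the free reduction via Racah's relations must be tracked carefully to keep the parametrization non-redundant.
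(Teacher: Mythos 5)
Your proposal is correct and follows essentially the same approach the paper intends: the paper gives no separate proof of this lemma, stating only that it ``follows the same lines'' as the proof of Lemma~\ref{lem:K3xK1}, and your argument is exactly that template transposed to $\KK^2\otimes\KK^3$. All the steps you list check out, in particular the closure condition $\tr_{12}(\tr_{13}(\sT{\Phi}^{\{5,1\}}\udot\V{v}))=\V{v}$ does force the overall factor $\frac{1}{4}$, and one can verify that $\|\sT{\Phi}^{\{5,1\}}\udot\V{v}\|^2=\|\V{v}\|^2$, so $\gamma=1$ and Theorem~\ref{thm:HarEmb} indeed yields $\sT{\Pi}^{\{1,5\}}$ as the plain transpose.
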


\begin{lem}[\textbf{Decomposition of} $\KK^2\otimes\KK^1$ ]\label{lem:K2xK1}
There exists an $\ode$-equivariant isomorphism between $\KK^2\otimes\KK^1$ and $\KK^{3}\oplus\KK^{1}$ such that
\ben
\tT{T}^{2,1}=\tT{H}+\qT{\Phi}^{\{3,1\}}\udot\V{v}, 
\een
with $(\tT{K}, \V{v}) \in \KK^{3}\times\KK^{1}$ with $\V{v}=\tr_{13}\tT{T}^{2,1}$. Here, the tensor $\qT{\Phi}^{\{3,1\}}$ coincides with  the standard deviatoric projector $\qT{P}^{(2,2)}$ defined in Equation \eqref{eq:P2}.
The inverse of $\qT{\Phi}^{\{3,1\}}$ is given by
\ben
(\qT{\Pi}^{\{1,3\}})_{ijkl}=(\qT{P}^{(2,2)})_{lijk}.
\een
\end{lem}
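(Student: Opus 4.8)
The plan is to follow verbatim the strategy of the proof of Lemma~\ref{lem:K3xK1}. First I would apply the Clebsch-Gordan formula of Lemma~\ref{lem-Gordan}, which gives $\KK^2\otimes\KK^1\simeq\KK^3\oplus\KK^1$. The summand $\KK^3$ embeds trivially (it is the totally symmetric traceless part of a third-order tensor that is already symmetric and traceless on its first two indices), so, exactly as in Lemma~\ref{lem:K3xK1}, the only object left to determine is the embedding $\qT{\Phi}^{\{3,1\}}$ of $\KK^1$ into $\KK^2\otimes\KK^1$. By irreducibility this embedding is unique up to a scaling factor and can be obtained by solving a small linear system.

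Concretely, I would start from the most general third-order tensor built linearly from a vector $\V{v}\in\KK^1$ and a single Kronecker delta,
\[
(\qT{\Phi}^{\{3,1\}}\udot\V{v})_{ijk}=a_1\,v_i\delta_{jk}+a_2\,v_j\delta_{ik}+a_3\,v_k\delta_{ij},
\]
and impose the two conditions characterising $\KK^2\otimes\KK^1$: symmetry in the pair $(ij)$ forces $a_1=a_2$, and tracelessness in $(ij)$ forces $a_1+a_2+2a_3=0$, hence $a_3=-a_1$. This leaves the expected one-parameter family. The scale is then pinned down by the closure condition $\tr_{13}(\qT{\Phi}^{\{3,1\}}\udot\V{v})=\V{v}$, which gives $a_1=\tfrac{1}{2}$ and therefore $(\qT{\Phi}^{\{3,1\}}\udot\V{v})_{ijk}=\tfrac{1}{2}(v_i\delta_{jk}+v_j\delta_{ik}-v_k\delta_{ij})$.

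Reading off the components of $\qT{\Phi}^{\{3,1\}}$ as a fourth-order tensor, $(\qT{\Phi}^{\{3,1\}})_{ijkl}=\tfrac{1}{2}(\delta_{il}\delta_{jk}+\delta_{ik}\delta_{jl}-\delta_{ij}\delta_{kl})$, I would then simply observe that this coincides with $(\qT{P}^{(2,2)})_{ijkl}$ from Equation~\eqref{eq:P2}, which proves the claimed identity $\qT{\Phi}^{\{3,1\}}=\qT{P}^{(2,2)}$. To get the projector I would invoke Theorem~\ref{thm:HarEmb}: a one-line norm computation gives $\|\qT{\Phi}^{\{3,1\}}\udot\V{v}\|^2=\|\V{v}\|^2$, so $\gamma=1$ and $\qT{\Pi}^{\{1,3\}}=\qT{\Phi}^{\{1,3\}}$ is nothing but the transpose of $\qT{\Phi}^{\{3,1\}}$. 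Applying the index convention of Equation~\eqref{def:Transp} then yields $(\qT{\Pi}^{\{1,3\}})_{ijkl}=(\qT{P}^{(2,2)})_{lijk}$, as stated.

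The calculation is entirely routine. The only steps requiring genuine care are the index bookkeeping in the trace and closure conditions and, above all, correctly identifying the cyclic index shift produced by the transposition~\eqref{def:Transp}; there is no conceptual obstacle, which is exactly why the authors remark that this proof ``follows the same lines'' as that of Lemma~\ref{lem:K3xK1}.
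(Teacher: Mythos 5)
Your proposal is correct and follows essentially the same route as the paper: the paper states that this lemma's proof ``follows the same lines'' as that of Lemma~\ref{lem:K3xK1}, and you reproduce exactly that scheme (Clebsch--Gordan structure, general isotropic ansatz $a_1 v_i\delta_{jk}+a_2 v_j\delta_{ik}+a_3 v_k\delta_{ij}$, symmetry and trace conditions giving $a_1=a_2=-a_3$, closure $\tr_{13}(\qT{\Phi}^{\{3,1\}}\udot\V{v})=\V{v}$ fixing $a_1=\tfrac{1}{2}$, then Theorem~\ref{thm:HarEmb} with $\gamma=1$ for the projector). All the computations check out, including the identification $\qT{\Phi}^{\{3,1\}}=\qT{P}^{(2,2)}$ and the index shift $(\qT{\Pi}^{\{1,3\}})_{ijkl}=(\qT{P}^{(2,2)})_{lijk}$ from the transposition convention~\eqref{def:Transp}.
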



\begin{lem}\label{lem:K-1}
There exists an  $\ode$-equivariant embedding of $\KK^{-1}$ into $\GG^2$ such that
\ben
\dT{T}^{1,1}=\beta\dT{\Phi}^{\{2,-1\}}
\een
where $\beta\in\RR$, $\dT{\Phi}^{\{2,-1\}}=\frac{1}{2}\dT{\epsilon}$, with  $\beta=\dT{T}^{1,1}:\dT{\epsilon}$. Consequently, $\dT{\Pi}^{\{-1,2\}}=\dT{\epsilon}$ and the projector $\qT{P}^{(2,-1)}$ from $\GG^2$ onto $\HH^{(2,-1)}$ has the following expression:
\ben
\qT{P}^{(2,-1)}=\frac{1}{2}\dT{\epsilon}\otimes\dT{\epsilon}.
\een
\end{lem}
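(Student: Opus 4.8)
The plan is to realise $\dT{\epsilon}$ as the (unique up to scale) generator of the copy of $\KK^{-1}$ sitting inside $\GG^2$, and then let the machinery of Theorem~\ref{thm:HarEmb} generate all the associated operators. First I would recall from Lemma~\ref{lem-Gordan} that $\GG^2\simeq\KK^1\otimes\KK^1\simeq\KK^2\oplus\KK^0\oplus\KK^{-1}$, so that the pseudo-scalar summand occurs with multiplicity one; its carrier $\HH^{(2,-1)}$ is exactly the one-dimensional totally antisymmetric subspace of $\GG^2$, which is spanned by $\dT{\epsilon}$. Consequently any embedding $\dT{\Phi}^{\{2,-1\}}$ of $\KK^{-1}$ into $\GG^2$ must be proportional to $\dT{\epsilon}$, and I adopt the normalization $\dT{\Phi}^{\{2,-1\}}=\frac{1}{2}\dT{\epsilon}$ stated in the lemma.

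The one genuinely computational step is to verify that $\dT{\epsilon}$ indeed transforms as a pseudo-scalar, i.e. that the candidate map $\beta\mapsto\frac{1}{2}\beta\,\dT{\epsilon}$ is $\ode$-equivariant for the action $\rho_{-1}(\vG)=\det\vG$ on $\KK^{-1}$ recorded in Equation~\eqref{eq:TrsIrr0}. For any $\vG\in\ode$ I would compute in components, using the standard two-dimensional cofactor identity,
\[
(\vG\star\dT{\epsilon})_{ij}=G_{ik}G_{jl}\upepsilon_{kl}=\det(\vG)\,\upepsilon_{ij},
\]
so that $\vG\star\dT{\epsilon}=\det(\vG)\,\dT{\epsilon}$. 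This is precisely the pseudo-scalar transformation law, hence $\frac{1}{2}\beta\,\dT{\epsilon}$ transforms as $\frac{1}{2}\det(\vG)\beta\,\dT{\epsilon}$, establishing the equivariance of the embedding.

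It then remains to feed this embedding into Theorem~\ref{thm:HarEmb}. I would compute the normalization constant directly from its definition, using $\|\dT{\epsilon}\|^{2}=\upepsilon_{ij}\upepsilon_{ij}=2$:
\[
\gamma=\frac{\|\dT{\Phi}^{\{2,-1\}}\,\beta\|^{2}}{\|\beta\|^{2}}=\frac{1}{4}\|\dT{\epsilon}\|^{2}=\frac{1}{2}.
\]
With $\gamma=\frac{1}{2}$, and adopting for the order $-1$ slot the convention $\rdots{-1}=\otimes$ inherited from the $k=0$ case, Theorem~\ref{thm:HarEmb} yields at once the projector $\dT{\Pi}^{\{-1,2\}}=\frac{1}{\gamma}\dT{\Phi}^{\{-1,2\}}=\dT{\epsilon}$ and the harmonic projector $\qT{P}^{(2,-1)}=\frac{1}{\gamma}\dT{\Phi}^{\{2,-1\}}\otimes\dT{\Phi}^{\{-1,2\}}=\frac{1}{2}\dT{\epsilon}\otimes\dT{\epsilon}$. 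The recovery formula $\beta=\dT{T}^{1,1}:\dT{\epsilon}$ then follows by contracting $\dT{\Pi}^{\{-1,2\}}=\dT{\epsilon}$ against $\dT{T}^{1,1}$, and is confirmed by checking that the purely pseudo-scalar tensor $\frac{\beta}{2}\dT{\epsilon}$ returns $\frac{\beta}{2}\,\dT{\epsilon}:\dT{\epsilon}=\beta$.

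The main obstacle is conceptual rather than computational: one must be careful with the conventions attached to the ``order $-1$'' slot, namely how the transpose $\dT{\Phi}^{\{-1,2\}}$ and the contraction $\rdots{-1}$ are to be interpreted, so that Theorem~\ref{thm:HarEmb}---phrased for harmonic orders $k\ge 0$---applies verbatim once the pseudo-scalar is treated as a signed scalar and $\rdots{-1}=\otimes$ is adopted. Everything else reduces to the two short linear-algebra facts above, the determinant transformation law of $\dT{\epsilon}$ and the value $\|\dT{\epsilon}\|^{2}=2$.
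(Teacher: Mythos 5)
Your proof is correct and is essentially the paper's intended argument: the paper gives no separate proof of this lemma, stating only that it follows the same lines as Lemma \ref{lem:K3xK1} (read the structure off the Clebsch--Gordan formula, identify the embedding, unique up to scale, fix its normalization, then apply Theorem \ref{thm:HarEmb} to obtain $\dT{\Pi}^{\{-1,2\}}=\dT{\epsilon}$ and $\qT{P}^{(2,-1)}=\tfrac{1}{2}\dT{\epsilon}\otimes\dT{\epsilon}$), which is exactly your scheme, and your computations ($\gamma=\tfrac{1}{2}$, the recovery $\beta=\dT{T}^{1,1}:\dT{\epsilon}$) all check out. Worth noting: your two departures from the literal template of Lemma \ref{lem:K3xK1} are precisely the adaptations the pseudo-scalar slot requires and that the paper glosses over -- equivariance cannot be argued by isotropy (since $\dT{\epsilon}$ is not a combination of products of $\dT{i}^{(2)}$) and must instead be checked against $\rho_{-1}(\vG)=\det\vG$ via the cofactor identity $g_{ik}g_{jl}\upepsilon_{kl}=\det(\vG)\,\upepsilon_{ij}$, as you do; and $\gamma$ must be computed directly from its definition rather than from Proposition \ref{prop:TrGam}, whose proof assumes $\dim\KK^{k}=2$ and therefore does not cover the one-dimensional space $\KK^{-1}$.
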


\section{Alternative decomposition of  $\sgrd$}\label{sec:TypeII}

\red{Due to the multiplicity of  $\KK^{1}$ in the harmonic structure of $\sgrd$, its explicit harmonic decomposition is not uniquely defined \cite{GSS89}. As a consequence of the \textit{Clebsch-Gordan Harmonic Algorithm}, considering another explicit decomposition for the state tensor space will affect the explicit harmonic decomposition of the associated constitutive tensors. In this appendix we propose another decomposition for the tensors in $\sgrd$ which is used in strain-gradient elasticity \cite{ELS+20} and in strain-gradient plasticity \cite{SF96,HF97}. 
In the modified strain-gradient theory developed in \cite{ELS+20}, the strain-gradient tensor is reduced to the gradient of the hydrostatic strain tensor.  In contrast, in the  incompressible strain-gradient plasticity theory \cite{SF96,HF97}, this contribution is constrained to be zero.
In both theories,  the constraint relates to a \emph{derivative} of  the strain tensor. Consequently, we introduce here a decomposition of $\sgrd$  based on the derivation of the harmonic decomposition of symmetric second-order tensors. This approach, which is related to Mindlin \emph{Type II }formulation, can be described as follows :
\ben
\xymatrix{
       &  \dT{T}\in\sdef \ar[ld]^{\mathcal{H}}\ar[rd]^{\mathcal{H}} &  \\
  \dT{T}^{d}\in\KK^{2}\ar[d]^{\otimes\V{\nabla}}  &    & \dT{T}^{h}\in\KK^{0}\ar[d]^{\otimes\V{\nabla}} \\
  (\tT{H},\V{v}^d)\in(\KK^{3}\times\KK^{1})&&\V{v}^h\in\KK^{1}}
\een
Associated to this decomposition we have the following result:
\begin{thm}[\textbf{Alternative Harmonic Decomposition of} $\TT_{(ij)k}$]\label{thm:DecT3II}
There exists an $\ode$-equivariant isomorphism between $\TT_{(ij)k}$ and $\KK^{3} \oplus \KK^{1} \oplus \KK^{1}$ such that for  $\tT{H}\in \KK^{3}$ and $ (\V{v}^d, \V{v}^h) \in\KK^{1} \times\KK^{1}$,
\begin{equation}\label{eq:DecT3II}
\tT{T} = \tT{H} + \qT{\Phi}^{d \{3,1\}} \udot \V{v}^d + \qT{\Phi}^{h \{3,1\}} \udot \V{v}^h,
\end{equation}
with $\Big(\qT{\Phi}^{d\{3,1\}},\qT{\Phi}^{h\{3,1\}} \Big)$ the harmonic embeddings of the form:
\begin{equation*}
\qT{\Phi}^{d\{3,1\}} = \frac{1}{2} \left(\qT{i}^{(4)}_2 + \qT{i}^{(4)}_3 - \qT{i}^{(4)}_1 \right) \quad;\quad
\qT{\Phi}^{h\{3,1\}} = \frac{1}{2} \qT{i}^{(4)}_1,
\end{equation*}
in which $(\qT{i}^{(4)}_1, \qT{i}^{(4)}_2,\qT{i}^{(4)}_3)$ are the fourth-order elementary isotropic tensors defined by Equation \eqref{eq:TIso4}.
Conversely, for any $\tT{T} \in \TT_{(ij)k}$,  $(\tT{H}, \V{v}^d, \V{v}^h) \in \KK^{3}\times\KK^{1} \times\KK^{1}$ are  defined from $\tT{T}$ as follows:
\begin{center}
\begin{tabular}{|c|c|}
  \hline
 $\KK^{1}$ & $\KK^{3}$  \\ \hline
 $ \V{v}^h=\qT{\Pi}^{h\{1,3\}}\tc\tT{T}$ &\\
 $ \V{v}^d=\qT{\Pi}^{d\{1,3\}}\tc\tT{T}$ &  $\tT{H}= \tT{T}- \qT{\Phi}^{d \{3,1\}} \udot \V{v}^d - \qT{\Phi}^{h \{3,1\}} \udot \V{v}^h$ \\
  \hline
\end{tabular}
\end{center}
with $\Big(\qT{\Pi}^{d\{1,3\}},\qT{\Pi}^{h\{1,3\}} \Big)$ the harmonic projectors of the form:
\begin{equation*}
\qT{\Pi}^{d\{1,3\}}=\frac{1}{2}\left(\qT{i}^{(4)}_2+\qT{i}^{(4)}_1-\qT{i}^{(4)}_3\right) \quad;\quad
\qT{\Pi}^{h\{1,3\}}=\qT{i}^{(4)}_3.
\end{equation*}
\end{thm}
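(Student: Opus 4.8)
The plan is to follow the same two-step strategy used in the proof of Theorem~\ref{thm:DecT3}, replacing the symmetric/remainder splitting by the one dictated by the diagram above: decompose $\tT{T}\in\sgrd$ according to the harmonic decomposition of the leading symmetric pair of indices $(ij)$. First I would apply the deviatoric and spherical projectors $\qT{P}^{(2,2)}$ and $\qT{P}^{(2,0)}$ of Proposition~\ref{thm:DecT2} to the first two slots of $\tT{T}$, obtaining the splitting
\begin{equation*}
\tT{T}=\tT{T}^{d}+\tT{T}^{h},\qquad T^{h}_{ijk}=\tfrac{1}{2}\delta_{ij}T_{ppk},\quad T^{d}_{ijk}=T_{ijk}-\tfrac{1}{2}\delta_{ij}T_{ppk},
\end{equation*}
where $\tT{T}^{d}$ is traceless on $(ij)$, hence belongs to $\KK^{2}\otimes\KK^{1}$, while $\tT{T}^{h}$ belongs to $\KK^{0}\otimes\KK^{1}$. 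Since both projectors are isotropic, this splitting is $\ode$-equivariant, which is what ultimately guarantees equivariance of the whole construction.

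Next I would treat each block separately. For the spherical block, $\tT{T}^{h}$ is already a copy of $\KK^{1}$ parametrized by $\V{v}^{h}=\tr_{12}\tT{T}$; a direct check that $(\qT{\Phi}^{h\{3,1\}}\udot\V{v}^{h})_{ijk}=\tfrac{1}{2}\delta_{ij}v^{h}_{k}=T^{h}_{ijk}$ fixes the embedding $\qT{\Phi}^{h\{3,1\}}=\tfrac{1}{2}\qT{i}^{(4)}_{1}$. For the deviatoric block I would invoke Lemma~\ref{lem:K2xK1} on $\tT{T}^{d}\in\KK^{2}\otimes\KK^{1}$, which supplies the isomorphism $\KK^{2}\otimes\KK^{1}\simeq\KK^{3}\oplus\KK^{1}$ with embedding of the vector part given by $\qT{\Phi}^{\{3,1\}}=\qT{P}^{(2,2)}=\tfrac{1}{2}(\qT{i}^{(4)}_{2}+\qT{i}^{(4)}_{3}-\qT{i}^{(4)}_{1})$ and $\V{v}^{d}=\tr_{13}\tT{T}^{d}$. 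This identifies $\qT{\Phi}^{d\{3,1\}}$ with $\qT{P}^{(2,2)}$ and leaves the harmonic remainder $\tT{H}=\tT{T}^{d}-\qT{\Phi}^{d\{3,1\}}\udot\V{v}^{d}\in\KK^{3}$; collecting the three pieces yields the parametrization \eqref{eq:DecT3II}.

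To produce the projection formulae I would apply Theorem~\ref{thm:HarEmb} to each embedding, so that the projector is the transpose of the embedding rescaled by $1/\gamma$ with $\gamma=\|\qT{\Phi}\udot\V{v}\|^{2}/\|\V{v}\|^{2}$. A short norm computation (using $\delta_{ii}=2$ in $\RR^{2}$) gives $\gamma_{h}=\tfrac{1}{2}$ for $\qT{\Phi}^{h\{3,1\}}$ and $\gamma_{d}=1$ for $\qT{\Phi}^{d\{3,1\}}=\qT{P}^{(2,2)}$; taking transposes in the sense of \eqref{def:Transp} then reproduces exactly $\qT{\Pi}^{h\{1,3\}}=\qT{i}^{(4)}_{3}$ and $\qT{\Pi}^{d\{1,3\}}=\tfrac{1}{2}(\qT{i}^{(4)}_{2}+\qT{i}^{(4)}_{1}-\qT{i}^{(4)}_{3})$.

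The step requiring the most care, and the main obstacle, is verifying that these two projectors correctly \emph{disentangle} the two copies of $\KK^{1}$: unlike the $\KK^{3}$ component, which is annihilated by every trace and hence by both projectors, the vector parts $\V{v}^{d}$ and $\V{v}^{h}$ both arise from traces of $\tT{T}$ and are a priori coupled. The decisive verification is the biorthogonality pair
\begin{equation*}
\qT{\Pi}^{d\{1,3\}}\tc\left(\qT{\Phi}^{h\{3,1\}}\udot\V{v}^{h}\right)=\V{0},\qquad \qT{\Pi}^{h\{1,3\}}\tc\left(\qT{\Phi}^{d\{3,1\}}\udot\V{v}^{d}\right)=\V{0},
\end{equation*}
together with $\qT{\Pi}^{\bullet\{1,3\}}\tc(\qT{\Phi}^{\bullet\{3,1\}}\udot\V{v})=\V{v}$. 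These identities follow from elementary $\delta$-contractions, and in fact they establish that the two $\KK^{1}$ embeddings are mutually orthogonal, so that the resulting harmonic decomposition is orthogonal exactly as in Theorem~\ref{thm:DecT3}.
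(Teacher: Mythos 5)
Your proposal is correct and follows exactly the route the paper intends: the paper gives no explicit proof of this appendix theorem, only the schematic diagram (deviatoric/spherical splitting of the first index pair, then harmonic decomposition of each block), and your argument instantiates that diagram using the same template as the proof of Theorem \ref{thm:DecT3}, with Lemma \ref{lem:K2xK1} and Theorem \ref{thm:HarEmb} supplying the embeddings and projectors. Your computed values $\gamma_{h}=\tfrac{1}{2}$, $\gamma_{d}=1$ and the resulting transposed projectors, as well as the cross-annihilation identities decoupling the two copies of $\KK^{1}$, all check out.
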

From the embedding operators involved in Theorem \ref{thm:HarEmb} a family of projectors can be deduced. 
\begin{prop}\label{cor:Proj3II}
 Let $\sT{I}^{\sgrd}=\frac{1}{2}\left(\sT{i}^{(6)}_8+\sT{i}^{(6)}_{12}\right)$ be the identity tensor on $\sgrd$. 
The following tensors
\ben
\sT{P}^{(3,1h)}:=2(\qT{\Phi}^{h\{3,1\}}\udot\qT{\Phi}^{h\{1,3\}}),\quad
\sT{P}^{(3,1d)}:=(\qT{\Phi}^{d\{3,1\}}\udot\qT{\Phi}^{d\{1,3\}}),\quad
\sT{P}^{(3,3)}:=\sT{I}^{\sgrd}-\sT{P}^{(3,1h)}-\sT{P}^{(3,1d)},
\een
where $\qT{\Phi}^{h\{1,3\}}$ and $\qT{\Phi}^{d\{1,3\}}$ are the transposes of $\qT{\Phi}^{h\{3,1\}}$ and $\qT{\Phi}^{d\{3,1\}}$, constitute a family  of orthogonal projectors on $\HH^{h(3,1)}$, $\HH^{d(3,1)}$ and $\KK^{3}$, respectively.
\end{prop}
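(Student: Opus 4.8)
The plan is to follow exactly the route used for Proposition \ref{cor:Proj3}, exploiting the fact that the general theory of harmonic embeddings (Theorem \ref{thm:HarEmb}) already delivers the idempotency of each single-component projector, so that the only genuinely new point to settle is the mutual orthogonality of the two vector blocks. First I would invoke Theorem \ref{thm:DecT3II}, which establishes that $\qT{\Phi}^{h\{3,1\}}$ and $\qT{\Phi}^{d\{3,1\}}$ are harmonic embeddings of $\KK^1$ into $\sgrd$. Applying Theorem \ref{thm:HarEmb} to each of them, the tensors $\tfrac{1}{\gamma_h}\,\qT{\Phi}^{h\{3,1\}}\udot\qT{\Phi}^{h\{1,3\}}$ and $\tfrac{1}{\gamma_d}\,\qT{\Phi}^{d\{3,1\}}\udot\qT{\Phi}^{d\{1,3\}}$ are idempotent projectors onto $\HH^{h(3,1)}$ and $\HH^{d(3,1)}$, where $\gamma_h,\gamma_d$ are the constants of Theorem \ref{thm:HarEmb} attached to the two embeddings.

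It then suffices to check that $\gamma_h=\tfrac12$ and $\gamma_d=1$, which reproduces precisely the prefactors $2$ and $1$ appearing in the definitions of $\sT{P}^{(3,1h)}$ and $\sT{P}^{(3,1d)}$. These two constants I would obtain either directly, by evaluating $\|\qT{\Phi}^{h\{3,1\}}\udot\V{v}\|^2$ and $\|\qT{\Phi}^{d\{3,1\}}\udot\V{v}\|^2$ on a non-zero $\V{v}\in\KK^1$, or — more cleanly — through the trace formula of Proposition \ref{prop:TrGam}.

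The main obstacle is the vanishing of the off-diagonal entries of the multiplication table, i.e. showing $\sT{P}^{(3,1h)}\tc\sT{P}^{(3,1d)}=\sT{P}^{(3,1d)}\tc\sT{P}^{(3,1h)}=\sT{0}$. Writing each projector in the factored form above and collecting the middle triple contraction, this reduces to the single identity $\qT{\Phi}^{h\{1,3\}}\tc\qT{\Phi}^{d\{3,1\}}=\dT{0}$, that is, to the orthogonality of the images $\HH^{h(3,1)}$ and $\HH^{d(3,1)}$ with respect to the triple contraction. Concretely, with $(\qT{\Phi}^{h\{3,1\}}\udot\V{v})_{ijk}=\tfrac12\delta_{ij}v_k$ and $(\qT{\Phi}^{d\{3,1\}}\udot\V{w})_{ijk}=\tfrac12(\delta_{ik}w_j+\delta_{jk}w_i-\delta_{ij}w_k)$, contracting the three $\sgrd$ indices gives $\tfrac14(\V{v}\cdot\V{w}+\V{v}\cdot\V{w}-2\,\V{v}\cdot\V{w})=0$, which is the crux of the computation and the reason the two blocks decouple.

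Finally, since $\sT{P}^{(3,1h)}$ and $\sT{P}^{(3,1d)}$ are idempotent and annihilate one another, $\sT{P}^{(3,3)}:=\sT{I}^{\sgrd}-\sT{P}^{(3,1h)}-\sT{P}^{(3,1d)}$ is automatically idempotent and orthogonal to both; its range is the orthogonal complement in $\sgrd$ of $\HH^{h(3,1)}\oplus\HH^{d(3,1)}$, which by the decomposition of Theorem \ref{thm:DecT3II} together with the dimension count $\dim\sgrd=6$, $\dim\HH^{h(3,1)}=\dim\HH^{d(3,1)}=\dim\KK^3=2$, is exactly $\KK^3$. I would then record the resulting multiplication table, in the same manner as in the proof of Proposition \ref{cor:Proj3}, to display $(\sT{P}^{(3,3)},\sT{P}^{(3,1h)},\sT{P}^{(3,1d)})$ as a family of mutually orthogonal idempotents for the triple contraction.
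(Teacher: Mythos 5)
Your proposal is correct and follows essentially the same route as the paper: the paper's proof (written out for the sibling Proposition \ref{cor:Proj3} and left implicit for this appendix version) consists precisely in establishing the multiplication table of the three tensors under the triple contraction, which is exactly what you do. Your organization --- idempotency of $\sT{P}^{(3,1h)}$ and $\sT{P}^{(3,1d)}$ obtained from Theorem \ref{thm:HarEmb} with the correct constants $\gamma_h=\tfrac{1}{2}$ and $\gamma_d=1$, the single cross-contraction $\qT{\Phi}^{h\{1,3\}}\tc\qT{\Phi}^{d\{3,1\}}=\dT{0}$ verified by the explicit computation $\tfrac{1}{4}(\V{v}\cdot\V{w}+\V{v}\cdot\V{w}-2\,\V{v}\cdot\V{w})=0$, and the properties of $\sT{P}^{(3,3)}$ deduced formally with a dimension count --- is simply a clean way of producing the table entries that the paper states without computation.
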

In this Proposition, the spaces $\mathbb{H}^{d(3,1)}$ and $\HH^{h(3,1)}$ are defined as follows. $\mathbb{H}^{d(3,1)}$ is the subspace of $\KK^{2}\otimes\KK^{1}$ of tensors orthogonal to elements in $\KK^{3}$:
\begin{equation}
\mathbb{H}^{d(3,1)}: = \left \{\tT{T} \in \KK^{2}\otimes\KK^{1} \mid \forall \tT{H}\in\KK^{3}, \tT{H}\tc\tT{T} = 0 \right \},\ \quad \dim (\mathbb{H}^{d(3,1)})= 2. 
\end{equation}
Unlike the tensors belonging to $\KK^{3}$, the elements of $\mathbb{H}^{d(3,1)}$ are traceless only with respect to their first two indices. 
The space $\HH^{h(3,1)}$  is defined as
\begin{equation}
\mathbb{H}^{h(3,1)}: = \left \{\tT{T} \in \sgrd \mid \forall \tT{H}\in\sgrd, \tT{H}\tc\tT{T} = 0 \right \},\ \quad \dim (\mathbb{H}^{h(3,1)})= 2. 
\end{equation}
It is the subspace of $\sgrd$ of tensors having none vanishing trace $(12)$. 
\begin{prop}\label{prop:CGDec5II}
The tensor $\cT{M}\in\Elac$ admits the uniquely defined \red{Intermediate Block Decomposition} associated to the family of projectors $(\sT{P}^{(3,3)},\sT{P}^{(3,1d)},\sT{P}^{(3,1h)},\qT{P}^{(2,2)},\qT{P}^{(2,0)})$:
\ben
\cT{M}=\cT{M}^{2,3}+\tT{m}^{2,1d}\udot\qT{\Phi}^{d\{1,3\}}+2\tT{m}^{2,1h}\udot\qT{\Phi}^{h\{1,3\}}+2\dT{\Phi}^{\{2,0\}}\otimes\tT{m}^{0,3}+2\left(\dT{\Phi}^{\{2,0\}}\otimes\V{\mu}^{0,1d}\right)\udot\qT{\Phi}^{d\{1,3\}}+4\left(\dT{\Phi}^{\{2,0\}}\otimes\V{\mu}^{0,1h}\right)\udot\qT{\Phi}^{h\{1,3\}} 
\een
in which $\cT{M}^{2,3}\in\KK^2\otimes\KK^3$, $(\tT{m}^{2,1d}, \tT{m}^{2,1h})\in (\KK^2\otimes\KK^1)^2$, $\tT{m}^{0,3}\in\KK^3$,  $(\V{\mu}^{0,1d},\V{\mu}^{0,1h})\in(\KK^1)^2$,  and $\qT{\Phi}^{d\{3,1\}}, \qT{\Phi}^{h\{3,1\}}$ and $\dT{\Phi}^{\{0,2\}}$ are defined, respectively, in Propositions \ref{thm:DecT3} and \ref{thm:DecT2}. 
Those elements are defined from $\cT{M}$ as follows:
\begin{center}
\begin{tabular}{|c|c|c|}
  \hline
 $\TT^{1}$ & $\TT^{3}$ & $\TT^{5}$  \\
  \hline
  $\V{\mu}^{0,1d}:=\dT{\Phi}^{\{0,2\}}:\cT{M}\tc\qT{\Phi}^{d\{3,1\}}$&$\tT{m}^{2,1d}:=\qT{P}^{(2,2)}:\cT{M}\tc\qT{\Phi}^{d\{3,1\}}$&$\cT{M}^{2,3}:=\qT{P}^{(2,2)}:\cT{M}\tc\sT{P}^{(3,3)}$\\
  $\V{\mu}^{0,1h}:=\dT{\Phi}^{\{0,2\}}:\cT{M}\tc\qT{\Phi}^{h\{3,1\}}$  &$ \tT{m}^{2,1h}:=\qT{P}^{(2,2)}:\cT{M}\tc\qT{\Phi}^{h\{3,1\}}$&  \\ 
    &$\tT{m}^{0,3}:=\dT{\Phi}^{\{0,2\}}:\cT{M}\tc\sT{P}^{(3,3)}$ & \\          \hline
\end{tabular}
\end{center}
\end{prop}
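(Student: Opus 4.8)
The plan is to transcribe the argument used for Proposition \ref{prop:CGDec5} almost verbatim, replacing the stretch/rotation family $(\sT{P}^{(3,3)},\sT{P}^{(3,1s)},\sT{P}^{(3,1r)})$ by the deviatoric/hydrostatic family $(\sT{P}^{(3,3)},\sT{P}^{(3,1d)},\sT{P}^{(3,1h)})$ furnished by Theorem \ref{thm:DecT3II} and Proposition \ref{cor:Proj3II}; only the argument space $\sgrd$ is affected, the image space $\sdef$ still being treated by Proposition \ref{thm:DecT2}. First I would fix $\tT{T}\in\sgrd$, set $\dt^{\star}=\cT{M}\tc\tT{T}$, and decompose the argument by the alternative state-tensor harmonic decomposition as $\tT{T}=\tT{H}+\tT{V}^{d}+\tT{V}^{h}$, with $\tT{V}^{d}=\qT{\Phi}^{d\{3,1\}}\udot\V{v}^{d}$ and $\tT{V}^{h}=\qT{\Phi}^{h\{3,1\}}\udot\V{v}^{h}$, while decomposing the image $\dt^{\star}\in\sdef$ as $\dd^{\star}+\ds^{\star}$.

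I would then insert the resolutions of identity $\sT{I}^{\sgrd}=\sT{P}^{(3,3)}+\sT{P}^{(3,1d)}+\sT{P}^{(3,1h)}$ on the argument and $\qT{I}^{\sdef}=\qT{P}^{(2,2)}+\qT{P}^{(2,0)}$ on the image, recasting the single relation $\dt^{\star}=\cT{M}\tc\tT{T}$ into the $2\times3$ block form
\begin{equation*}
\begin{pmatrix}\dd^{\star}\\ \ds^{\star}\end{pmatrix}
=
\begin{pmatrix}\cT{M}^{2,3}&\cT{M}^{2,1d}&\cT{M}^{2,1h}\\ \cT{M}^{0,3}&\cT{M}^{0,1d}&\cT{M}^{0,1h}\end{pmatrix}
\begin{pmatrix}\tT{H}\\ \tT{V}^{d}\\ \tT{V}^{h}\end{pmatrix},
\end{equation*}
where each block is $\cT{M}^{a,b}=\qT{P}^{(2,a)}:\cT{M}\tc\sT{P}^{(3,b)}$, with $a\in\{2,0\}$ and $b\in\{3,1d,1h\}$. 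The key step is to factor every projector through its harmonic embedding by means of $\sT{P}^{(3,1d)}=\qT{\Phi}^{d\{3,1\}}\udot\qT{\Phi}^{d\{1,3\}}$, $\sT{P}^{(3,1h)}=2\,\qT{\Phi}^{h\{3,1\}}\udot\qT{\Phi}^{h\{1,3\}}$ and $\qT{P}^{(2,0)}=2\,\dT{\Phi}^{\{2,0\}}\otimes\dT{\Phi}^{\{0,2\}}$ (Proposition \ref{cor:Proj3II} and Proposition \ref{thm:DecT2}). Peeling off the outer embeddings isolates the reduced cores $\cT{M}^{2,3}$, $\tT{m}^{2,1d}$, $\tT{m}^{2,1h}$, $\tT{m}^{0,3}$, $\V{\mu}^{0,1d}$ and $\V{\mu}^{0,1h}$ exactly as defined in the table of the statement.

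The numerical prefactors $1,2,2,2,4$ then arise as products of the inverse normalization constants $1/\gamma$ attached to the embeddings used, namely $\gamma_{d}=1$ for the deviatoric embedding, $\gamma_{h}=\tfrac12$ for the hydrostatic one, and $\gamma_{0}=\tfrac12$ for $\dT{\Phi}^{\{2,0\}}$, all read off from the coefficients of the corresponding projectors via Theorem \ref{thm:HarEmb}. Collecting the six blocks and identifying coefficients against $\dt^{\star}=\cT{M}\tc\tT{T}$ produces the announced expression for $\cT{M}$. Uniqueness follows exactly as in Proposition \ref{prop:CGDec5}: once the alternative harmonic decomposition of $\sgrd$ is fixed the projector family is fixed, whence each block is uniquely determined, and in 2D each relevant product $\KK^{p}\otimes\KK^{q}$ decomposes into irreducibles of pairwise distinct orders (Lemma \ref{lem-Gordan}), leaving no ambiguity.

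I expect the principal difficulty to be organizational rather than conceptual: one must check that the reduced cores land in the intended harmonic-product spaces, i.e. $\cT{M}^{2,3}\in\KK^{2}\otimes\KK^{3}$, $(\tT{m}^{2,1d},\tT{m}^{2,1h})\in(\KK^{2}\otimes\KK^{1})^{2}$, $\tT{m}^{0,3}\in\KK^{3}$ and $(\V{\mu}^{0,1d},\V{\mu}^{0,1h})\in(\KK^{1})^{2}$. The one genuine subtlety specific to this decomposition is that $\mathbb{H}^{d(3,1)}$ is traceless only in its first two indices and is \emph{not} totally symmetric, in contrast with the stretch part of Theorem \ref{thm:DecT3}; I would therefore take care to use the embedding $\qT{\Phi}^{d\{3,1\}}$ and its transpose $\qT{\Phi}^{d\{1,3\}}$ from Theorem \ref{thm:DecT3II}, and to rely on the orthogonality of $(\sT{P}^{(3,3)},\sT{P}^{(3,1d)},\sT{P}^{(3,1h)})$ established in Proposition \ref{cor:Proj3II} to guarantee that the decomposition collapses to precisely the six stated terms.
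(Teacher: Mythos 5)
Your proposal is correct and follows essentially the same route as the paper: the appendix states this proposition without a separate proof precisely because it is the verbatim transcription of the block-matrix argument of Proposition \ref{prop:CGDec5} (itself modelled on Proposition \ref{prop:DecC}), with the family $(\sT{P}^{(3,3)},\sT{P}^{(3,1d)},\sT{P}^{(3,1h)})$ of Proposition \ref{cor:Proj3II} replacing the stretch/rotation family. Your bookkeeping of the prefactors via the inverse normalization constants ($1/\gamma_{d}=1$, $1/\gamma_{h}=2$, $1/\gamma_{0}=2$, giving the coefficients $1,2,2,2,4$) matches the paper's conventions exactly.
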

\begin{prop}[\textbf{Clebsch-Gordan Harmonic Decomposition of} $\cT{M}\in\Elac$]
The tensor $\cT{M}\in\Elac$ admits the uniquely defined Clebsch-Gordan  Harmonic Decomposition associated to the family of projectors $(\sT{P}^{(3,3)},\sT{P}^{(3,1d)},\sT{P}^{(3,1d)},\qT{P}^{(2,2)},\qT{P}^{(2,0)})$:
\ban
\cT{M}&=&\cT{H}^{2,3}+\tT{H}^{2,1d}\udot\qT{\Phi}^{d\{1,3\}}+2\tT{H}^{2,1h}\udot\qT{\Phi}^{h\{1,3\}}+2\dT{\Phi}^{\{2,0\}}\otimes\tT{H}^{0,3}+\sT{\Phi}^{\{5,1\}}\udot\V{v}^{2,3}\\
&+&(\qT{\Phi}^{\{3,1\}}\udot\V{v}^{2,1d})\udot\qT{\Phi}^{d\{1,3\}}+2(\qT{\Phi}^{\{3,1\}}\udot\V{v}^{2,1h})\udot\qT{\Phi}^{h\{1,3\}}+
2\left(\dT{\Phi}^{\{2,0\}}\otimes\V{\mu}^{0,1d}\right)\udot\qT{\Phi}^{d\{1,3\}}\\
&+&4\left(\dT{\Phi}^{\{2,0\}}\otimes\V{\mu}^{0,1h}\right)\udot\qT{\Phi}^{h\{1,3\}}  
\ean
in which $\cT{H}^{2,3}\in\KK^5$, $(\tT{H}^{2,1d}, \tT{H}^{2,1h},\tT{H}^{0,3})\in(\KK^3)^3$, $(\V{v}^{2,3},\V{v}^{2,1d},\V{v}^{2,1h},\V{v}^{0,1d},\V{v}^{0,1h})\in(\KK^1)^5$.
Those elements are defined from $\cT{M}$ as follows:
\begin{center}
\begin{tabular}{|c|c|c|}
  \hline
 $\KK^{1}$ & $\KK^{3}$ & $\KK^{5}$  \\
  \hline
  $\V{v}^{0,1h} =\V{\mu}^{0,1h}$  & &  \\ 
  $\V{v}^{0,1d} =\V{\mu}^{0,1d}$ & & \\
  $\V{v}^{2,1h} =  \tT{m}^{2,1h}:\id$ &$\tT{H}^{2,1h}=\tT{m}^{2,1h}-\qT{\Phi}^{\{3,1\}}\udot\V{v}^{2,1h}$&\\
  $\V{v}^{2,1d}=\tT{m}^{2,1d}:\id$   &$\tT{H}^{2,1d}=\tT{m}^{2,1d}-\qT{\Phi}^{\{3,1\}}\udot\V{v}^{2,1d}$&\\
  &$\tT{H}^{0,3}= \tT{m}^{0,3}$&\\
  $\V{v}^{2,3}=\tr_{12}(\tr_{13}(\cT{M}^{2,3}))$&&$\cT{H}^{2,3}=\cT{M}^{2,3}-\sT{\Phi}^{\{5,1\}}\udot\V{v}^{2,3}$\\                                                                         
  \hline
\end{tabular}
\end{center}
in which the intermediate quantities are defined in Proposition \ref{prop:CGDec5}  and $\qT{\Phi}^{d\{3,1\}}, \qT{\Phi}^{h\{3,1\}}$ and $\dT{\Phi}^{\{0,2\}}$ are defined, respectively, in Propositions \ref{thm:DecT3} and \ref{thm:DecT2}. 
\end{prop}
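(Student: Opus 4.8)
The plan is to reproduce, \emph{mutatis mutandis}, the three-step argument already used for the standard decomposition of $\cT{M}\in\Elac$, the only change being that the state-tensor decomposition of $\sgrd$ is now the $(d,h)$-splitting of Theorem~\ref{thm:DecT3II} instead of the $(s,r)$-splitting. Concretely, I would start from the Intermediate Block Decomposition of Proposition~\ref{prop:CGDec5II}, which already expresses $\cT{M}$ as a sum of the six blocks $\cT{M}^{2,3}$, $\tT{m}^{2,1d}$, $\tT{m}^{2,1h}$, $\tT{m}^{0,3}$, $\V{\mu}^{0,1d}$ and $\V{\mu}^{0,1h}$ glued by the embeddings $\qT{\Phi}^{d\{1,3\}}$, $\qT{\Phi}^{h\{1,3\}}$ and $\dT{\Phi}^{\{2,0\}}$ with the numerical factors $1,2,2,2,4$ inherited from the $\gamma$-constants of the new embeddings. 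Since that proposition is taken as given, the remaining task is purely Step~3: to reduce each block that is not already $\ode$-irreducible and substitute the result back.

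Next I would read off the harmonic structure of each block from the Clebsch--Gordan table of Lemma~\ref{lem-Gordan}: the blocks $\tT{m}^{0,3}\in\KK^0\otimes\KK^3\simeq\KK^3$ and $\V{\mu}^{0,1d},\V{\mu}^{0,1h}\in\KK^0\otimes\KK^1\simeq\KK^1$ are already irreducible, so they contribute directly as $\tT{H}^{0,3}:=\tT{m}^{0,3}$, $\V{v}^{0,1d}:=\V{\mu}^{0,1d}$ and $\V{v}^{0,1h}:=\V{\mu}^{0,1h}$. Only $\cT{M}^{2,3}\in\KK^2\otimes\KK^3\simeq\KK^5\oplus\KK^1$ and $\tT{m}^{2,1d},\tT{m}^{2,1h}\in\KK^2\otimes\KK^1\simeq\KK^3\oplus\KK^1$ are reducible. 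I would then invoke Lemma~\ref{lem:K2xK3} to write $\cT{M}^{2,3}=\cT{H}^{2,3}+\sT{\Phi}^{\{5,1\}}\udot\V{v}^{2,3}$ with $\V{v}^{2,3}=\tr_{12}(\tr_{13}\cT{M}^{2,3})$ and $\cT{H}^{2,3}\in\KK^5$, and Lemma~\ref{lem:K2xK1} to write $\tT{m}^{2,1d}=\tT{H}^{2,1d}+\qT{\Phi}^{\{3,1\}}\udot\V{v}^{2,1d}$ and $\tT{m}^{2,1h}=\tT{H}^{2,1h}+\qT{\Phi}^{\{3,1\}}\udot\V{v}^{2,1h}$, with $\tT{H}^{2,1d},\tT{H}^{2,1h}\in\KK^3$ and with the projections $\V{v}^{2,1d},\V{v}^{2,1h}$ given by that lemma. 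These are exactly the entries of the ``conversely'' table.

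Substituting these three identities into the Intermediate Block Decomposition and distributing the $\udot$ products over the sums then yields the announced formula term by term: the $\cT{M}^{2,3}$ block splits into the $\cT{H}^{2,3}$ and $\sT{\Phi}^{\{5,1\}}\udot\V{v}^{2,3}$ contributions, each $\tT{m}^{2,1\bullet}\udot\qT{\Phi}^{\bullet\{1,3\}}$ block splits into a $\tT{H}^{2,1\bullet}\udot\qT{\Phi}^{\bullet\{1,3\}}$ piece and a $(\qT{\Phi}^{\{3,1\}}\udot\V{v}^{2,1\bullet})\udot\qT{\Phi}^{\bullet\{1,3\}}$ piece, while the already-harmonic blocks pass through unchanged. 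The prefactors $1$ and $2$ attached to the $d$- and $h$-blocks, as well as the factors $2$ and $4$ on the $\KK^0$-blocks, are untouched by this substitution because reduction of a block is linear and commutes with the gluing $\udot$. Uniqueness is inherited from the CGHD construction: once the $(d,h)$-decomposition of $\sgrd$ is fixed, the block decomposition is unique by Proposition~\ref{prop:CGDec5II}, and in 2D each product $\KK^p\otimes\KK^q$ decomposes into harmonic summands of \emph{pairwise distinct} orders (Lemma~\ref{lem-Gordan}), so each further reduction is unambiguous.

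I expect the only delicate point to be purely a matter of bookkeeping rather than of substance: keeping the two slots of the fifth-order tensor straight (which index group carries the $\KK^2/\KK^0$ factor coming from $\sdef$ and which carries the $\KK^3/2\KK^1$ factor coming from $\sgrd$), so that the correct embedding --- $\qT{\Phi}^{d\{1,3\}}$ or $\qT{\Phi}^{h\{1,3\}}$ on the $\sgrd$ side and $\dT{\Phi}^{\{2,0\}}$ or the deviatoric projector on the $\sdef$ side --- is applied to each reduced block, and so that the contraction orders in $\V{v}^{2,3}=\tr_{12}(\tr_{13}\cT{M}^{2,3})$ match Lemma~\ref{lem:K2xK3}. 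No new estimate or construction is required beyond the lemmas of Appendices~A and~B.
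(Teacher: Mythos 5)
Your proposal is correct and is essentially the paper's own argument: the paper obtains this proposition precisely by taking the Intermediate Block Decomposition of Proposition \ref{prop:CGDec5II} as given, reducing the only non-harmonic blocks $\cT{M}^{2,3}\in\KK^2\otimes\KK^3$ and $\tT{m}^{2,1d},\tT{m}^{2,1h}\in\KK^2\otimes\KK^1$ via Lemmas \ref{lem:K2xK3} and \ref{lem:K2xK1}, and substituting back, with uniqueness following from the fixed choice of state-space decomposition together with the distinct-order property of the 2D Clebsch--Gordan products in Lemma \ref{lem-Gordan}. (Only a trivial bookkeeping slip: the block prefactors are $1,1,2,2,2,4$ rather than your listed $1,2,2,2,4$, but these are inherited verbatim from Proposition \ref{prop:CGDec5II} and play no role in the reduction.)
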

\begin{prop}\label{prop:CGDec6II}
The tensor $\sT{A}\in\Elas$ admits the uniquely defined \red{Intermediate Block Decomposition} associated to the family of projectors $(\sT{P}^{(3,3)},\sT{P}^{(3,1d)},\sT{P}^{(3,1h)})$:
\begin{eqnarray*}
\sT{A}&=&\sT{A}^{3,3}+\qT{\Phi}^{d\{3,1\}}\udot\dT{a}^{1d,1d}\udot\qT{\Phi}^{d\{1,3\}}+4\qT{\Phi}^{h\{3,1\}}\udot\dT{a}^{1h,1h}\udot\qT{\Phi}^{h\{1,3\}}\\
&+&\left(\qT{a}^{3,1d}\udot\qT{\Phi}^{d\{1,3\}} +\qT{\Phi}^{d\{3,1\}}\udot\qT{a}^{1d,3}\right)+2\left(\qT{a}^{3,1h}\udot\qT{\Phi}^{h\{1,3\}}+\qT{\Phi}^{h\{3,1\}}\udot\qT{a}^{1h,3}\right)\\
&+&2\left(\qT{\Phi}^{d\{3,1\}}\udot\dT{a}^{1d,1h}\udot\qT{\Phi}^{h\{1,3\}}+\qT{\Phi}^{h\{3,1\}}\udot\dT{a}^{1h,1d}\udot\qT{\Phi}^{d\{1,3\}}\right)
\end{eqnarray*}
in which $\sT{A}^{3,3}\in\KK^3\otimes^s\KK^3$, $(\qT{a}^{3,1d}, \qT{a}^{3,1h})\in(\KK^3\otimes\KK^1)^2$, $(\dT{a}^{1d,1d},\dT{a}^{1h,1h})\in(\KK^1\otimes^s\KK^1)^2$ and $\dT{a}^{1d,1h}\in\KK^1\otimes\KK^1$ and $\qT{\Phi}^{d\{3,1\}}, \qT{\Phi}^{h\{3,1\}}$ are defined in Proposition \ref{thm:DecT3II}.
Those elements are defined from $\sT{A}$ as follows:
\begin{center}
\begin{tabular}{|c|c|c|}
  \hline
 $\TT^{2}$ & $\TT^{4}$ & $\TT^{6}$  \\
  \hline
  $\dT{a}^{1d,1d}:=\qT{\Phi}^{d\{1,3\}}\tc\sT{A}\tc\qT{\Phi}^{d\{3,1\}}$&$\qT{a}^{3,1d}:=\sT{P}^{(3,3)}\tc\sT{A}\tc\qT{\Phi}^{d\{3,1\}}$&$\sT{A}^{3,3}:=\sT{P}^{(3,3)}\tc\sT{A}\tc\sT{P}^{(3,3)}$\\
  $\dT{a}^{1d,1h}:=\qT{\Phi}^{d\{1,3\}}\tc\sT{A}\tc\qT{\Phi}^{h\{3,1\}}$  &$ \qT{a}^{3,1h}:=\sT{P}^{(3,3)}\tc\sT{A}\tc\qT{\Phi}^{h\{3,1\}}$&  \\ 
   $\dT{a}^{1h,1h}:=\qT{\Phi}^{h\{1,3\}}\tc\sT{A}\tc\qT{\Phi}^{h\{3,1\}}$ & &\\         \hline
\end{tabular}
\end{center}
\end{prop}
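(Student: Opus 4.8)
The plan is to follow verbatim the argument used to establish Proposition \ref{prop:CGDec6}, which itself mirrors the proof of Proposition \ref{prop:DecC}; the only substantive change is that the stretch/rotation harmonic decomposition of $\sgrd$ is replaced by the deviatoric/hydrostatic one supplied by Theorem \ref{thm:DecT3II}, together with its associated family of orthogonal projectors $(\sT{P}^{(3,3)},\sT{P}^{(3,1d)},\sT{P}^{(3,1h)})$ from Proposition \ref{cor:Proj3II}. Since $\sT{A}\in\Elas\simeq\mathcal{L}^s(\sgrd,\sgrd)$, I would first record that $\sT{A}$ carries the major symmetry, a fact used at the end to pair the off-diagonal blocks.

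First I would take an arbitrary $\tT{T}\in\sgrd$ and, by Theorem \ref{thm:DecT3II}, write
\ben
\tT{T}=\tT{H}+\tT{V}^{d}+\tT{V}^{h},\qquad \tT{V}^{d}=\qT{\Phi}^{d\{3,1\}}\udot\V{v}^{d},\quad \tT{V}^{h}=\qT{\Phi}^{h\{3,1\}}\udot\V{v}^{h}.
\een
Inserting the resolution of the identity $\sT{I}^{\sgrd}=\sT{P}^{(3,3)}+\sT{P}^{(3,1d)}+\sT{P}^{(3,1h)}$ on both the source and target copies of $\sgrd$ in the constitutive relation $\tT{T}^{\star}=\sT{A}\tc\tT{T}$ brings $\sT{A}$ into the $3\times3$ block form
\ben
\sT{A}=\sum_{X,Y\in\{3,1d,1h\}}\sT{A}^{X,Y},\qquad \sT{A}^{X,Y}=\sT{P}^{(3,X)}\tc\sT{A}\tc\sT{P}^{(3,Y)},
\een
exactly as in the proof of Proposition \ref{prop:DecC}.

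The core computation is then to re-express each block through the embeddings. Using Theorem \ref{thm:HarEmb}, the projectors read $\sT{P}^{(3,1d)}=\tfrac{1}{\gamma_{d}}\qT{\Phi}^{d\{3,1\}}\udot\qT{\Phi}^{d\{1,3\}}$ and $\sT{P}^{(3,1h)}=\tfrac{1}{\gamma_{h}}\qT{\Phi}^{h\{3,1\}}\udot\qT{\Phi}^{h\{1,3\}}$, with $\gamma_{d}=1$ and $\gamma_{h}=\tfrac12$ read off from Proposition \ref{cor:Proj3II}. Substituting these into each $\sT{A}^{X,Y}$ factors out the embeddings and leaves the reduced cores $\dT{a}^{1d,1d},\dT{a}^{1h,1h},\dT{a}^{1d,1h},\qT{a}^{3,1d},\qT{a}^{3,1h}$ and $\sT{A}^{3,3}$ of the statement; the scalar prefactors are $1/\gamma_{X}$ for a block coupling $\KK^{3}$ to a vector slot, $1/\gamma_{X}^{2}$ for a diagonal vector block, and $1/(\gamma_{d}\gamma_{h})$ for the mixed $d$/$h$ block, which reproduces the coefficients $1$, $4$, $2$, $2$ of the claimed formula. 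Finally I would invoke the major symmetry of $\sT{A}$ to identify $\sT{A}^{3,1d},\sT{A}^{3,1h}$ with the transposes of $\sT{A}^{1d,3},\sT{A}^{1h,3}$ and $\sT{A}^{1h,1d}$ with the transpose of $\sT{A}^{1d,1h}$, so that the nine blocks collapse into the six symmetric combinations written in the proposition, uniqueness being inherited from the uniqueness of the underlying state-tensor decomposition. The only delicate point is bookkeeping the normalization constants $\gamma_{d}$ and $\gamma_{h}$ and verifying that they propagate correctly through the quadratic (diagonal) and bilinear (mixed) blocks; everything else is a routine transcription of the already-established $\Elas$ computation.
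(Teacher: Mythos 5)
Your proposal is correct and follows exactly the route the paper intends: the paper states Proposition \ref{prop:CGDec6II} without a separate proof precisely because it is the verbatim transcription of the proofs of Propositions \ref{prop:DecC} and \ref{prop:CGDec6}, with the stretch/rotation projectors replaced by $(\sT{P}^{(3,3)},\sT{P}^{(3,1d)},\sT{P}^{(3,1h)})$ from Proposition \ref{cor:Proj3II}. Your bookkeeping of the normalization constants is also right: reading $\gamma_{d}=1$ and $\gamma_{h}=\tfrac12$ off the projector formulas and applying the pattern $1/\gamma_{X}$ (blocks coupling $\KK^{3}$ to a vector slot), $1/\gamma_{X}^{2}$ (diagonal vector blocks) and $1/(\gamma_{d}\gamma_{h})$ (mixed block) reproduces the coefficients $1$, $4$, $2$, $2$ of the statement, and the major symmetry of $\sT{A}$ correctly collapses the nine blocks into the six symmetric combinations.
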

\begin{prop}[\textbf{Clebsch-Gordan Harmonic Decomposition of} $\sT{A}\in\Elas$]
The tensor $\sT{A}\in\Elas$ admits the uniquely defined  Clebsch-Gordan Harmonic Decomposition associated to the family of projectors $(\sT{P}^{(3,3)},\sT{P}^{(3,1d)},\sT{P}^{(3,1h)})$:
\ban
\sT{A}&=&\sT{H}^{3,3}+\left(\qT{H}^{3,1d}\udot\qT{\Phi}^{d\{1,3\}} +\qT{\Phi}^{d\{3,1\}}\udot\qT{H}^{3,1d}\right)+2\left(\qT{H}^{3,1r}\udot\qT{\Phi}^{h\{1,3\}}+\qT{\Phi}^{h\{3,1\}}\udot\qT{H}^{3,1r}\right)\\
&+&\qT{\Phi}^{d\{3,1\}}\udot\dT{h}^{1d,1d}\udot\qT{\Phi}^{d\{1,3\}}+4\qT{\Phi}^{h\{3,1\}}\udot\dT{h}^{1r,1r}\udot\qT{\Phi}^{h\{1,3\}}\\
&+&\left(\left(\sT{\Phi}^{\{4,2\}}:\dT{h}^{3,1d} \right)\udot\qT{\Phi}^{d\{1,3\}}+\qT{\Phi}^{d\{3,1\}}\udot\left(\sT{\Phi}^{\{4,2\}}:\dT{h}^{3,1d}\right )^{T_{3,1}}\right)\\
&+&2\left(\left(\sT{\Phi}^{\{4,2\}}:\dT{h}^{3,1r}\right) \udot\qT{\Phi}^{h\{1,3\}}+\qT{\Phi}^{h\{3,1\}}\udot\left(\sT{\Phi}^{\{4,2\}}:\dT{h}^{3,1r} \right)^{T_{3,1}}\right)\\
&+&2\left(\qT{\Phi}^{d\{3,1\}}\udot\dT{h}^{1d,1r}\udot\qT{\Phi}^{h\{1,3\}}+\qT{\Phi}^{h\{3,1\}}\udot\dT{h}^{1r,1d}\udot\qT{\Phi}^{d\{1,3\}}\right)\\
&+&\frac{\alpha^{3,3}}{2}\sT{P}^{(3,3)}+\frac{1}{2}\alpha^{1d,1d}\sT{P}^{(3,1d)}+\alpha^{1r,1r}\sT{P}^{(3,1r)}+\alpha^{1d,1r}\left(\qT{\Phi}^{d\{3,1\}}\udot\qT{\Phi}^{h\{1,3\}}+\qT{\Phi}^{h\{3,1\}}\udot\qT{\Phi}^{d\{1,3\}}\right)\\
&+&\beta^{1d,1r}\left(\qT{\Phi}^{d\{3,1\}}\udot\dT{\epsilon}\udot\qT{\Phi}^{h\{1,3\}}-\qT{\Phi}^{h\{3,1\}}\udot\dT{\epsilon}\udot\qT{\Phi}^{d\{1,3\}}\right),
\ean
in which $\sT{H}^{3,3}\in\KK^6$, $(\qT{H}^{3,1d}, \qT{H}^{3,1r})\in(\KK^4)^2$, $(\dT{h}^{3,1d},\dT{h}^{3,1r},\dT{h}^{1d,1r},\dT{h}^{1d,1d}\dT{h}^{1r,1r})\in(\KK^2)^5$, $(\alpha^{3,3},\alpha^{1d,1d},\alpha^{1r,1r},\alpha^{1r,1d})\in(\KK^0)^4$ and $\beta^{1r,1d}\in\KK^{-1}$.
Those elements are defined from $\sT{A}$ as follows:
\begin{center}
\begin{footnotesize}
\begin{tabular}{|c|c|c|c|c|}
  \hline
 $\KK^{-1}$ & $\KK^{0}$ & $\KK^{2}$ &  $\KK^{4}$ & $\KK^{6}$ \\
  \hline
  $\beta^{1d,1h} =\dT{a}^{1d,1h}:\Jd$  & $\alpha^{1d,1h} =\dT{a}^{1d,1h}:\id$ & $\dT{h}^{1d,1h}=\dT{a}^{1d,1h}:\qT{P}^{(2,2)}$ &&\\
                                                            & $\alpha^{1h,1h} =\dT{a}^{1h,1h}:\id$ & $\dT{h}^{1h,1h}=\dT{a}^{1h,1h}:\qT{P}^{(2,2)}$ &&\\
                                                            & $\alpha^{1d,1d} =\dT{a}^{1d,1d}:\id$ & $\dT{h}^{1d,1d}=\dT{a}^{1d,1d}:\qT{P}^{(2,2)}$ &&\\
                                                            &         &$\dT{h}^{3,1h}=\tr_{14}(\qT{a}^{3,1h})$&$\qT{H}^{3,1h}=\qT{a}^{3,1h}-\sT{\Phi}^{\{4,2\}}:\dT{h}^{3,1h}$&\\
           &                                                          &$\dT{h}^{3,1d}=\tr_{14}(\qT{a}^{3,1d})$&$\qT{H}^{3,1d}=\qT{a}^{3,1d}-\sT{\Phi}^{\{4,2\}}:\dT{h}^{3,1d}$&\\
      &$\alpha^{3,3}=\sT{A}^{3,3}\rdots{6}\sT{P}^{(3,3)}$&&&    $\sT{H}^{3,3}=\sT{A}^{3,3}-\frac{\alpha^{3,3}}{2}\sT{P}^{(3,3)}$\\ 
  \hline
\end{tabular}
\end{footnotesize}
\end{center}
in which the intermediate quantities are defined in Proposition \ref{prop:CGDec6II}. 
\end{prop}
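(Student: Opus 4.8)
The plan is to follow, step for step, the argument already used for the $(s,r)$ version of the Clebsch--Gordan Harmonic Decomposition of $\sT{A}\in\Elas$, replacing the embeddings $\qT{\Phi}^{s\{3,1\}},\qT{\Phi}^{r\{3,1\}}$ by their $(d,h)$ counterparts from Theorem \ref{thm:DecT3II}. First I would take as starting point the Intermediate Block Decomposition of Proposition \ref{prop:CGDec6II}, which already writes $\sT{A}$ as a sum of conjugated blocks built from $\sT{A}^{3,3}\in\KK^3\otimes^s\KK^3$, the two blocks $\qT{a}^{3,1d},\qT{a}^{3,1h}\in\KK^3\otimes\KK^1$, the symmetric blocks $\dT{a}^{1d,1d},\dT{a}^{1h,1h}\in\KK^1\otimes^s\KK^1$, and the coupling block $\dT{a}^{1d,1h}\in\KK^1\otimes\KK^1$. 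Only the last two steps of the three-step methodology remain, since Step 1 and Step 2 are precisely Theorem \ref{thm:DecT3II} and Proposition \ref{prop:CGDec6II}.

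Each block is reducible, and I would decompose it into irreducibles using the harmonic-embedding results recalled earlier. For $\sT{A}^{3,3}$, Theorem \ref{thm:HarDom} with $n=3$ gives $\sT{A}^{3,3}=\sT{H}^{3,3}+\tfrac{\alpha^{3,3}}{2}\sT{P}^{(3,3)}$ with $\sT{H}^{3,3}\in\KK^6$ and $\alpha^{3,3}\in\KK^0$. For the two $\KK^3\otimes\KK^1$ blocks, Lemma \ref{lem:K3xK1} yields $\qT{a}^{3,1\bullet}=\qT{H}^{3,1\bullet}+\sT{\Phi}^{\{4,2\}}:\dT{h}^{3,1\bullet}$ with $\qT{H}^{3,1\bullet}\in\KK^4$ and $\dT{h}^{3,1\bullet}=\tr_{14}\qT{a}^{3,1\bullet}\in\KK^2$. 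The symmetric blocks $\dT{a}^{1\bullet,1\bullet}$ split via Theorem \ref{thm:HarDom} with $n=1$ into a $\KK^2$ deviator and a $\KK^0$ scalar, while the non-symmetric coupling $\dT{a}^{1d,1h}$ additionally produces the pseudoscalar $\beta^{1d,1h}=\dT{a}^{1d,1h}:\Jd\in\KK^{-1}$ through its antisymmetric part (Lemma \ref{lem:K-1}). Substituting these six irreducible expansions back into Proposition \ref{prop:CGDec6II} and collecting terms by harmonic order reproduces the stated formula line by line, and the closed-form inversion formulas of the table follow from the explicit contractions attached to each lemma. Uniqueness is inherited from the uniqueness at each stage, which ultimately rests on the 2D fact, recorded after Lemma \ref{lem-Gordan}, that a tensor product of two irreducible harmonic spaces decomposes into irreducibles of pairwise distinct orders.

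The only genuinely delicate point --- everything else being routine --- is the bookkeeping of the multiplicative constants. The prefactors $1$, $2$, $4$ carried by the $(d,h)$ blocks in Proposition \ref{prop:CGDec6II} originate from the normalization constants $\gamma$ of $\qT{\Phi}^{d\{3,1\}}$ and $\qT{\Phi}^{h\{3,1\}}$ (computed through Proposition \ref{prop:TrGam} and already encoded in the projectors of Proposition \ref{cor:Proj3II}), and they differ from the $\tfrac{16}{9},\tfrac{9}{4},\tfrac{4}{3},\tfrac{3}{2}$ of the $(s,r)$ case; these must be propagated consistently when the $\KK^2$ contributions $\dT{h}^{3,1\bullet}$ are re-embedded. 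One must likewise handle the generalized transposition $(\,\cdot\,)^{T_{3,1}}$ correctly when moving the off-diagonal $\sT{\Phi}^{\{4,2\}}:\dT{h}^{3,1\bullet}$ contributions into their symmetric positions, exactly as in the main-text proof.
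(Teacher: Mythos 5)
Your proposal is correct and takes essentially the same route as the paper: the paper obtains this appendix proposition exactly as you do, by inserting the block-wise harmonic decompositions of Appendix \ref{sec:HarmEmb} (Theorem \ref{thm:HarDom} for $\sT{A}^{3,3}$ and the symmetric $\KK^1\otimes^s\KK^1$ blocks, Lemma \ref{lem:K3xK1} for the $\KK^3\otimes\KK^1$ blocks, and the $\KK^1\otimes\KK^1$ splitting with its $\KK^{-1}$ part) into the Intermediate Block Decomposition of Proposition \ref{prop:CGDec6II}, with uniqueness following from the distinct-order structure of the 2D Clebsch--Gordan products. Your closing remark is also on target: the only non-routine part is propagating the prefactors $1$, $2$, $4$ arising from the normalization constants $\gamma$ of $\qT{\Phi}^{d\{3,1\}}$ and $\qT{\Phi}^{h\{3,1\}}$ (already encoded in the projectors of Proposition \ref{cor:Proj3II}) in place of the $(s,r)$ constants $\tfrac{16}{9}$, $\tfrac{9}{4}$, $\tfrac{4}{3}$, $\tfrac{3}{2}$.
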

}

\newpage
 \bibliographystyle{abbrv}
\bibliography{biblio}

\end{document}